\newcommand{\activeset}{\mathcal{A}}
\newcommand{\LLM}{\emph{ll-Metropolis}}
\newcommand{\DTV}[2]{d_{\mathrm{TV}}\left({#1},{#2}\right)}
\newcommand{\E}[1]{\mathbb{E}\left[{#1}\right]}
\newcommand{\dist}{\mathrm{dist}}
\newcommand{\e}{\mathrm{e}}
\newcommand{\inedge}[1]{E({#1})}
\newcommand{\bdedge}{\delta}
\newcommand{\one}[1]{\mathbf{1}\left(#1\right)}
\newcommand{\bad}{\mathcal{B}}
\newcommand{\badc}{\mathcal{C}}
\newcommand{\NeighUn}{\Gamma_{\mathsf{un}}}
\newcommand{\NeighIn}{\Gamma_{\mathsf{in}}}
\newcommand{\NeighOut}{\Gamma_{\mathsf{out}}}
\newcommand{\Red}{\mathtt{Red}}
\newcommand{\Blue}{\mathtt{Blue}}
\newcommand{\concept}[1]{\emph{{#1}}}
\newcommand{\todo}[1]{\typeout{TODO: \the\inputlineno: #1}\textbf{{\color{red}[[[ #1 ]]]}}}
\newtheorem{theorem}{Theorem}
\newtheorem{observation}[theorem]{Observation}
\newtheorem{claim}[theorem]{Claim}
\newtheorem*{claim*}{Claim}
\newtheorem{lemma}[theorem]{Lemma}
\newtheorem{corollary}[theorem]{Corollary}
\theoremstyle{definition}
\newtheorem{definition}[theorem]{Definition}
\newtheorem*{remark*}{Remark}
\title{
Distributed Symmetry Breaking in Sampling\\
{(Optimal Distributed Randomly Coloring with Fewer Colors)}\\
}
\author{
Weiming Feng~\thanks{Department of Computer Science and Technology, Nanjing University. Email: {fengwm@smail.nju.edu.cn}.}
\footnotemark[4]
\and
Thomas P. Hayes~\thanks{ Department of Computer Science, University of New Mexico. Email:{hayes@cs.unm.edu}. Partially supported by NSF CAREER award CCF-1150281.}
\and
Yitong Yin~\thanks{State Key Laboratory for Novel Software Technology, Nanjing University. Email: {yinyt@nju.edu.cn}.}
\thanks{Supported by the National Science Foundation of China under Grant No. 61672275 and No. 61722207.}
}
\date{}
\begin{document}
\maketitle

\begin{abstract}
We examine the problem of almost-uniform sampling proper $q$-colorings of a graph whose maximum degree is $\Delta$.
A famous result, discovered independently by Jerrum~\cite{jerrum1995very} and Salas and Sokal~\cite{salas1997absence}, is that, assuming $q > (2+\delta) \Delta$,
the Glauber dynamics (a.k.a.~single-site dynamics) for this problem has mixing time $O(n \log n)$, where $n$ is the number of vertices,
and thus provides a nearly linear time sampling algorithm for this problem.  A natural question is the extent to which this algorithm can
be parallelized.  Previous work~\cite{feng2017sampling} has shown that a $O(\Delta \log n)$ time parallelized algorithm is possible, and that
$\Omega(\log n)$ time is necessary.  

We give a distributed sampling algorithm, which we call the Lazy Local Metropolis Algorithm, that achieves {an optimal parallelization of this classic algorithm}. It improves its predecessor, the {Local Metropolis} algorithm of  Feng, Sun and Yin [PODC'17], {by introducing 
a step of {distributed symmetry breaking} that helps the mixing of the distributed sampling algorithm.}

For sampling almost-uniform proper $q$-colorings of graphs $G$ on $n$ vertices, we show that 
the {Lazy Local Metropolis} algorithm achieves an optimal $O(\log n)$ mixing time
if either of the following conditions is true for an arbitrary constant $\delta>0$:
\begin{itemize}
\item $q\ge(2+\delta)\Delta$, on general graphs with maximum degree $\Delta$;
\item $q \geq (\alpha^* + \delta)\Delta$, where $\alpha^* \approx 1.763$ satisfies $\alpha^* = \mathrm{e}^{1/\alpha^*}$, on graphs with sufficiently large maximum degree $\Delta\ge \Delta_0(\delta)$ and girth at least $9$.
\end{itemize}
\end{abstract}

\setcounter{page}{0} \thispagestyle{empty} \vfill
\pagebreak

\section{Introduction}
Sampling almost-uniform graph colorings is one of the most extensively studied problems in Markov chain Monte Carlo (MCMC) sampling. Let $G=(V,E)$ be a graph and $q$ a positive integer. A proper $q$-coloring $\sigma\in[q]^V$ of $G$  assigns each vertex a color from $[q]=\{1,2,\ldots,q\}$ such that no adjacent vertices receive the same color.
A classic sequential algorithm for sampling almost-uniform proper $q$-colorings is the Markov chain known as the \emph{heat bath Glauber dynamics} (a.k.a.~single-site dynamics) on proper $q$-colorings.  
For this Markov chain $(X_t)_{t\ge 0}$, each $X_t\in[q]^V$ is a $q$-coloring, and in a transition $X_t\to X_{t+1}$, a vertex $v\in V$ is chosen uniformly at random and its color $X_t(v)$ is updated to a color chosen uniformly at random from the available colors in $[q]$ that are not currently assigned by $X_t$ to $v$'s neighbors.

A famous result, discovered independently by Jerrum~\cite{jerrum1995very} and Salas and Sokal~\cite{salas1997absence}, is that, assuming $q > (2+\delta) \Delta$, where $\Delta$ is the maximum degree and $\delta>0$ is an arbitrary constant,
the Glauber dynamics defined above has mixing time $O(n \log n)$, where $n$ is the number of vertices, and thus provides a nearly linear time sequential algorithm for sampling almost-uniform proper $q$-colorings.  
For graphs with large maximum degree and large girth, 
Dyer and Frieze~\cite{dyer2001fewer} developed an approach, known as the $\emph{burn-in mehtod}$, to obtain $O(n\log n)$ mixing time of the Glauber dynamics with an improved condition $q\ge (\alpha^*+\delta)\Delta$, where $\alpha^* \approx 1.763$ satisfies $\alpha^* = \mathrm{e}^{1/\alpha^*}$.
Subsequently, the condition for the rapid mixing was improved in a series of works~\cite{vigoda2000improved, dyer2001fewer, hayes2003non, hayes2003randomly,molloy2004glauber,  hayes2006coupling, frieze2006randomly, dyer2013randomly}. See a survey of Frieze and Vigoda~\cite{frieze2007survey} for more detail.

In distributed computing, the problem of \emph{constructing} a proper $q$-coloring by local distributed graph algorithms has been extensively studied~\cite{linial1987distributive, goldberg1988parallel, awerbuch1989network, linial1992locality, szegedy1993locality, panconesi1996complexity,johansson1999simple, kuhn2006complexity, schneider2010new,barenboim2011deterministic,chung2014distributed,barenboim2016deterministic,barenboim2016locality,fraigniaud2016local}, and is a main application for {distributed symmetry breaking}~\cite{barenboim2016locality}.
These distributed algorithms assume the \emph{synchronous message-passing} model of communications. The graph $G=(V,E)$ represents a communication network.  Communications are synchronized and take place in rounds. In each round, each vertex receives messages from all neighbors, then performs the local computation, and finally sends messages to all neighbors. The {time complexity} is given by the number of rounds. {Ideally, the sizes of messages are bounded in polylogarimic of $|V|$ and the local computations are tractable.}

On the other hand, the problem of \emph{sampling} an \emph{almost-uniform} proper $q$-coloring by local distributed algorithms received much less studies. 
A natural question is the extent to which the sequential sampling algorithms can be parallelized. 

\vspace{6pt}
\noindent{\emph{Continuous-time Glauber dynamics:}}
Perhaps the most natural process to talk about as a starting point is the continuous-time Glauber dynamics.  Each vertex gets an \emph{i.i.d.}~Poisson clock with expected delay $1$; the vertex updates its color every time the clock rings.  The relationship between continuous-time and discrete-time Markov chains is well understood, and 
there are very close connections between their mixing times; see, for instance, \cite{levin2009markov}[Theorem 20.3], and \cite{hayes2007general} [Corollary 2.2]. 
In our setting, we get that the mixing time for the continuous-time dynamics is very close to being a factor $n$ speedup of the discrete Glauber dynamics.

How fast can we simulate this chain in a distributed setting?  Offhand, it looks potentially very tricky, since every now and then, there will be long chains of consecutive updates done in very short time intervals, each of which affects the next one.  However, a simple disagreement percolation argument shows that, with high probability, the continuous-time Glauber dynamics can be simulated for time $t$ in a distributed setting, with one processor for each vertex, in the time needed for $O(t \Delta / n + \log(n))$ single-vertex updates, essentially an $n/\Delta$ factor of parallel speedup.  Assuming we are in a setting, such as $q > (2 + \delta) \Delta$, in which the discrete-time Glauber dynamics has mixing time $O(n \log n)$, this implies a local distributed algorithm with running time $O(\Delta \log n)$.

\vspace{6pt}
\noindent\emph{Chromatic scheduler and systematic scans:}
A natural way to parallelize single-site dynamics is to use a chromatic scheduler to parallelize the updates, so that updates in the same round will not affect each other. 
The idea was implemented in~\cite{gonzalez2011parallel} and also by the \emph{LubyGlauber} algorithm in a previous work~\cite {feng2017sampling}. The latter achieves a $O(\Delta\log n)$ mixing time under the condition $q \ge (2+\delta) \Delta$, which is essentially due to the rapidly mixing of \emph{systematic scans}~\cite{dyer2006systematic, dyer2006dobrushin}, in which vertices are updated sequentially according to an arbitrarily fixed order. 

A fundamental issue of this type of approaches is: as a price for not allowing adjacent updates in the same round, a factor of chromatic number (or the maximum degree $\Delta$ for local distributed algorithms) is inevitably introduced to the time complexity. 

\vspace{6pt}
\noindent\emph{Local Metropolis filters:}
In a previous work~\cite{feng2017sampling}, a new parallel Markov chain, called the \emph{Local Metropolis} algorithm is introduced. It falls into the propose-and-filter paradigm of the Metropolis-Hastings algorithm. In each step, every vertex independently proposes a random color and applies a local filtration rule to accept or reject the proposals. Assuming a stronger condition on the number of colors $q\ge(2+\sqrt{2}+\delta)\Delta$, this new Markov chain achieves a $O(\log n)$ mixing time, beating the barrier of factor-$\Theta(\Delta)$ slowdown in previous approaches and achieving an  ideal factor-$\Theta(n)$ speedup of the $O(n\log n)$ mixing time of Glauber dynamics.
It was also proved in~\cite{feng2017sampling} that this $O(\log n)$ time complexity is optimal for sampling almost-uniform $q$-colorings by message-passing distributed algorithms as long as $q=O(\Delta)$. It seems that the drawback of this approach is its requirement of bigger number of colors.

\subsection{Main results}
We give a distributed MCMC sampling algorithm, called the \emph{Lazy Local Metropolis} algorithm, for sampling almost-uniform proper $q$-colorings.
The algorithm improves the \emph{Local Metropolis} algorithm in~\cite{feng2017sampling} by introducing a step of {symmetry breaking}, and achieves the optimal $O(\log n)$ mixing time while assuming smaller lower bounds on the number of colors $q$.

For sampling almost-uniform proper $q$-colorings of graphs with maximum degree $\Delta$, assuming $q\ge(2+\delta)\Delta$, the \emph{Lazy Local Metropolis} chain is rapidly mixing with rate $\tau(\epsilon)=O(\log(\frac{n}{\epsilon}))$. Note that Fischer and Ghaffari \cite{fischer2018simple} also obtain the same result independently and simultaneously. They prove this result by a different path coupling argument. See \cite{fischer2018simple} for more details.
\begin{theorem}
\label{main-theorem-1}
For any constant $\delta > 0$, for every graph $G$ on $n$ vertices with maximum degree $\Delta=\Delta_G$, if 
$q\ge(2+\delta)\Delta$, 
then given any $\epsilon>0$, the \emph{Lazy Local Metropolis} algorithm returns an almost uniform proper $q$-coloring of $G$ within total variation distance $\epsilon$ in $O(\log n+\log\frac{1}{\epsilon})$ rounds,
where the constant factor in $O(\cdot)$ depends only on $\delta$.
\end{theorem}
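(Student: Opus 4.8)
The plan is to prove rapid mixing by a \emph{path coupling} argument, taking advantage of the fact that one step of the chain is exactly one synchronous round of the distributed algorithm. First I would record the structural facts about the \LLM{} chain: its local filter is a Metropolis-type filter designed so that detailed balance holds with respect to the uniform distribution on proper $q$-colorings, and the chain is irreducible and aperiodic (the lazy self-loops give aperiodicity, and $q\ge\Delta+2$ gives connectivity through single-vertex recolorings). It therefore suffices to bound the mixing time of this Markov chain, and for that I will use the Hamming metric on proper colorings and verify a one-step contraction only for pairs $X,Y$ that differ at a single vertex $v$, say $X(v)=a\neq b=Y(v)$; the path-coupling theorem then lifts this to arbitrary pairs, whose distance is at most the diameter $n$.

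For the one-step coupling I would run both chains with the \emph{same} per-vertex randomness: the proposed colors, the laziness coins, and the symmetry-breaking variables are identical in $X$ and $Y$. Then every vertex $w\notin\{v\}\cup\Gamma(v)$ sees the same local neighborhood in both chains and makes the same move, so the Hamming distance can change only at $v$ and its neighbors. Writing $p_{\mathrm{fix}}$ for the probability that $v$ ends the step with the same color in both chains and $p_u$ for the probability that a fresh disagreement is created at $u\in\Gamma(v)$, path coupling requires
\[
\E{d(X_1,Y_1)\mid d(X_0,Y_0)=1}\ \le\ \one{v\text{ not fixed}}\text{-term}+\sum_{u\in\Gamma(v)}p_u\ \le\ 1-p_{\mathrm{fix}}+\sum_{u\in\Gamma(v)}p_u\ \le\ 1-c(\delta)
\]
for a constant $c(\delta)>0$. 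Lower-bounding $p_{\mathrm{fix}}$ is routine: the set $S$ of colors not used by $\Gamma(v)$ is identical in $X$ and $Y$ and has size $\ge q-\Delta$; conditioned on $v$ proposing a color in $S$ and on no neighbor proposing that color, $v$ accepts it in \emph{both} chains, so $p_{\mathrm{fix}}\ge\frac{q-\Delta}{q}\,(1-1/q)^{\Delta}\cdot\Pr[v\text{ not lazy}]=\Omega_\delta(1)$, using $\Delta/q\le 1/(2+\delta)$.

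The crux, and the place the symmetry-breaking step earns its keep, is the upper bound on $\sum_{u\in\Gamma(v)}p_u$. A disagreement can propagate from $v$ to a neighbor $u$ only if $u$'s update is ``sensitive'' to whether $v$ currently holds $a$ or $b$ — essentially, $u$ must propose one of the colors $a,b$. A crude bound gives $p_u\le 2/q$ and hence $\sum_u p_u\le 2\Delta/q$, which beats $p_{\mathrm{fix}}$ only when $q$ is comfortably above $2\Delta$; this is morally the $(2+\sqrt2)\Delta$ threshold of the original \emph{Local Metropolis} algorithm. The symmetry-breaking variable attaches to each edge $\{u,v\}$ a random ``who-updates-first'' order: when $u$ effectively updates after $v$, it sees $v$'s \emph{new} color, and if $v$ has just moved to a color in $S$ — which happens with constant probability, as in the $p_{\mathrm{fix}}$ computation — then $u$ sees an identical, disagreement-free state in both chains and cannot catch the disagreement; only the complementary, smaller-probability configurations of the order variables and the proposals remain dangerous. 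Carrying out this case analysis — a discrete, parallel analogue of the disagreement-percolation analysis of the \emph{continuous-time} Glauber dynamics — yields a bound of the form $\sum_{u\in\Gamma(v)}p_u\le\frac{\Delta}{q}\,g\!\left(\frac{\Delta}{q}\right)$ with $g$ small enough that $p_{\mathrm{fix}}-\sum_u p_u\ge c(\delta)>0$ for every $q\ge(2+\delta)\Delta$. Making this accounting tight (getting the threshold down to exactly $2\Delta$ rather than $(2+\sqrt2)\Delta$) is the main obstacle, and I expect it to require a somewhat delicate enumeration of the possibilities for $v$'s move, the edge-orders, and the neighbors' proposed colors.

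Finally I would assemble the pieces. The constant one-step contraction together with path coupling gives $\E{d(X_t,Y_t)}\le n\,(1-c(\delta))^{t}$, hence $\DTV{X_t}{\pi}\le\E{d(X_t,Y_t)}\le\epsilon$ once $t=O(\log n+\log\frac1\epsilon)$, with the hidden constant depending only on $\delta$. Since each step is a single synchronous round in which every vertex exchanges only its proposed color and its symmetry-breaking value ($O(\log q)=O(\log n)$ bits) with its neighbors, this chain is exactly the claimed distributed sampling algorithm, which proves Theorem~\ref{main-theorem-1}.
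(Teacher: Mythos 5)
There is a genuine gap, and it sits exactly at the step you flag as the crux. Your coupling uses \emph{identical} randomness in both chains (same proposals, same laziness coins), and as you observe yourself this only gives $p_u\le 2p/q$ per active neighbor, which does not beat $p_{\mathrm{fix}}\approx p\,\frac{q-\Delta}{q}$ unless $q\gtrsim 3\Delta$. To close this factor-of-two loss you invoke a ``who-updates-first'' edge-ordering variable under which a neighbor $u$ ``sees $v$'s new color.'' No such mechanism exists in the \LLM{} algorithm: the only symmetry breaking is the independent per-vertex laziness, every check in a round is evaluated against the \emph{current} coloring $X_t$ and the proposals, and no vertex ever sees a neighbor's freshly accepted color within the same round. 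So the case analysis you defer to is an analysis of a different algorithm (a random-order/sequential-flavored dynamics), and even for that algorithm you give no argument that the dangerous configurations shrink from $2/q$ to the needed $\approx 1/q$; the key quantitative step is asserted, not proved.

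The paper closes this gap differently: the coupling is \emph{not} identical at the neighbors of the disagreement vertex $v_0$. Proposals are coupled identically away from $\Gamma(v_0)$, and at each active $u\in\Gamma(v_0)$ they are coupled either identically or with the roles of $\Red$ and $\Blue$ swapped, the choice made adaptively from the colors and already-sampled proposals around $u$. A short case analysis (Observation~\ref{observation-worst-case-coupling} and Lemma~\ref{lemma-worst-case-coupling-x'v-neq-y'v}) then shows that whenever both coupled proposals land in $\{\Red,\Blue\}$, at least one of the two configurations is blocked identically in both chains — by the edge to $v_0$ itself in the swapped case, or by a witnessing neighbor in the identical case — so each neighbor contributes at most $p/q$, and the contraction holds for all $q\ge(2+\delta)\Delta$. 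A second point your sketch glosses over is the role of the activeness parameter: for $v_0$ to accept in both chains its active neighbors must avoid three colors (the proposal $c(v_0)$ and both $\Red,\Blue$), giving a factor $(1-3p/q)^{\Delta}$ rather than your $(1-1/q)^{\Delta}$, and this factor is close to $1$ only because $p$ is chosen small (the paper takes $p=\min\{\delta/3,\tfrac12\}$). That small-$p$ choice — not an edge ordering — is what the laziness buys, and without it the exponential factor alone would already force a larger color bound. Your overall framework (path coupling in Hamming distance, contraction $1-c(\delta)$, diameter $n$, one step per round) matches the paper, but as written the proof of the contraction at the $(2+\delta)\Delta$ threshold is missing.
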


For graphs with large girth and sufficiently large maximum degree, by an advanced coupling similar to the one developed by Dyer~\emph{et al.}~for sequential dynamics~\cite{dyer2013randomly}, 
the condition on $q$ can be further relaxed. 
\begin{theorem}
\label{main-theorem-2}
For any constant $\delta > 0$, there exists a constant $\Delta_0 = \Delta_0(\delta)$, such that for every graph $G$ on $n$ vertices with maximum degree $\Delta=\Delta_G$ and girth $g=g(G)$, if
\begin{itemize}
\item $\Delta\ge\Delta_0$ and $g\ge 9$,
\item and $q \geq (\alpha^*+\delta)\Delta$, where $\alpha^* \approx 1.763$ satisfies $\alpha^* = \mathrm{e}^{1/\alpha^*}$,
\end{itemize}
then given any $\epsilon>0$, the \emph{Lazy Local Metropolis} algorithm returns an almost uniform proper $q$-coloring of $G$ within total variation distance $\epsilon$ in $O(\log n+\log\frac{1}{\epsilon})$ rounds,
where the constant factor in $O(\cdot)$ depends only on $\delta$.
\end{theorem}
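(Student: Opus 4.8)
The plan is to follow the same high-level route as for Theorem~\ref{main-theorem-1}: reduce the mixing bound to a one-step \emph{expected}-contraction estimate, in the Hamming metric, for a coupling of two copies of the Lazy Local Metropolis chain started at Hamming distance one (a path-coupling argument, which tolerates the fact that one parallel step can turn a single disagreement into $\Theta(\Delta)$ disagreements provided the expected new distance is strictly below $1$). The difficulty is that when $q\ge(\alpha^*+\delta)\Delta$ the trivial palette bound $q-\Delta$ is useless, so, following Dyer~\emph{et al.}~\cite{dyer2013randomly}, I would prove the contraction estimate only on \emph{locally uniform} colorings. Call a coloring \emph{good} if at every vertex $v$ the number of colors not used on $v$'s neighbourhood is at least $(\e^{-1/\alpha}-\eta)q$, where $q=\alpha\Delta$ with $\alpha\ge\alpha^*+\delta$ and $\eta=\eta(\delta)>0$ is small. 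The quantity $\e^{-1/\alpha}q$ is exactly the expected free-palette size at $v$ when its neighbours are coloured independently and uniformly, and it is this quantity that makes $\alpha^*$ the threshold. The hypothesis $g\ge 9$ makes every radius-$4$ ball a tree, which is what allows one to show (i) that, from an arbitrary start, $O(\log n)$ rounds of the chain reach a good coloring with high probability --- via a recursion down these trees propagating approximate independence of neighbour colours --- and (ii) that at any fixed round each vertex fails to be ``locally good'' only with probability $\e^{-\Omega(\Delta)}$, by a Chernoff/Azuma concentration estimate on a radius-$4$ tree.

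Granting this, I would couple two copies $(X_t)$, $(Y_t)$ started at Hamming distance one, with the single disagreement at a vertex $v_0$, using the identity coupling of \emph{all} the per-vertex randomness (proposed colors, symmetry-breaking priorities, and filter thresholds). After one round only $v_0$ and its neighbours can disagree: the accept/reject computation at $v_0$ sees the same neighbourhood in both chains and so is identical, hence $v_0$ agrees after the round with constant probability (roughly the probability its proposal is accepted); a neighbour $u$ of $v_0$ can disagree only if its coupled proposal lands among the $O(1)$ colors on which the two chains' computations at $u$ differ, which on the good event has probability $O\!\left(1/\big((\e^{-1/\alpha}-\eta)q\big)\right)$. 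Summing over the at most $\Delta$ neighbours of $v_0$, the expected post-round distance is at most
\[
1 - \Theta(1) + \frac{\Theta(\Delta)}{(\e^{-1/\alpha}-\eta)q}\;=\;1-\Theta\!\left(1-\frac{\e^{1/\alpha}}{\alpha}+O(\eta)\right)\;=:\;1-\gamma,
\]
and $\gamma=\gamma(\delta)>0$ precisely because $\alpha\ge\alpha^*+\delta$ forces $\e^{1/\alpha}/\alpha<1$ (the map $\alpha\mapsto\e^{1/\alpha}/\alpha$ is decreasing and equals $1$ at $\alpha^*$). The rare local violations of goodness near $v_0$ can inflate this bound by at most $O(\Delta^2\,\e^{-\Omega(\Delta)})$ (a violation multiplies a disagreement's contribution by $O(\Delta)$, and there are $O(\Delta)$ relevant vertices, each bad with probability $\e^{-\Omega(\Delta)}$), which is below $\gamma/2$ once $\Delta\ge\Delta_0(\delta)$ --- this is where the hypothesis $\Delta\ge\Delta_0(\delta)$ enters. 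Running $O(\log n)$ burn-in rounds to reach a good coloring and then iterating the $(1-\gamma/2)$-contraction of expected Hamming distance for another $O(\log(n/\epsilon))$ rounds yields $\DTV{X_t}{\pi}\le\epsilon$ with $t=O\!\left(\log n+\log\frac1\epsilon\right)$ and the constant depending only on $\delta$.

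The hard part is the local-uniformity / burn-in machinery for the \emph{parallel} chain, not the coupling arithmetic. One must show that $O(\log n)$ rounds from an arbitrary (possibly adversarial) coloring really do produce a good coloring with high probability, despite the propose-and-filter-with-symmetry-breaking step not updating every vertex every round --- so one needs that each vertex is refreshed often and that, within a radius-$4$ tree, the refreshed colors behave like an independent uniform assignment up to an $o(1)$ error --- and one must quantify that error in terms of the finite girth and show it is $o(1)$ uniformly, which is exactly what the value $g=9$ (enough levels for the tree recursion) and $\Delta\ge\Delta_0(\delta)$ (for the concentration bounds) are used for. This is the analogue, for the Lazy Local Metropolis chain, of the local-uniformity analysis of Dyer~\emph{et al.}~for the sequential Glauber dynamics, and I would carry it out via a hierarchy of local ``bad'' events, each bounded by concentration on a radius-$4$ tree, together with a union bound.
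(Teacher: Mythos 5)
Your high-level plan (local uniformity plus a Dyer--Frieze/Dyer~\emph{et al.}-style burn-in coupling) is the right one, but the way you have structured it has a genuine gap. You propose to run the chain for $O(\log n)$ rounds until the coloring is good at \emph{every} vertex and then iterate a one-step contraction. First, the best failure probability one can prove for local goodness is $\mathrm{e}^{-\Omega(\Delta)}$ \emph{per vertex}, and $\Delta$ is only a constant depending on $\delta$, so a union bound over all $n$ vertices does not give a high-probability global goodness event. Second, path coupling requires the contraction estimate for \emph{arbitrary} adjacent pairs, not only for pairs of post-burn-in chain states; for an arbitrary coloring at Hamming distance one the one-step estimate genuinely fails, since the worst-case palette bound $q-\Delta$ only beats the loss term $p\Delta/q$ when $q>2\Delta$. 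The fix (which is what the paper does, following~\cite{dyer2013randomly}) is to fold the burn-in into the path-coupling analysis itself: couple from an arbitrary adjacent pair over a \emph{constant} window $T_m$, tolerate a constant-factor expected growth of the Hamming distance during a constant burn-in $[0,T_b]$, and exploit that all disagreements stay inside the constant-radius ball $B_{T_m}(v_0)$, so that only $\Delta^{O(1)}$ vertices need to be locally uniform and all bad events (uniformity failure in the ball, atypically many disagreements) contribute only $\mathrm{e}^{-\Omega(\sqrt{\Delta})}$ to the expected distance. This in turn forces you to prove that local uniformity is acquired after $O(1)$ rounds of the parallel chain (true, and a point the paper emphasizes); with an $O(\log n)$-long window, disagreements could reach $\mathrm{poly}(n)$ vertices and the per-vertex $\mathrm{e}^{-\Omega(\Delta)}$ bound again could not be summed.

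The part you defer (``recursion down radius-4 trees propagating approximate independence'') is in fact the technical crux, and a Chernoff/Azuma bound on a tree does not do it by itself: even when $B_4(v)$ is a tree, the colors of $v$'s neighbours at time $t$ are not conditionally independent, because in every round the filter on the edge $\{u,v\}$ couples $u$'s update to $v$'s current color and proposal, while $v$'s color depends on the past proposals of the other neighbours --- a feedback loop specific to this parallel chain. The paper breaks it by fixing $v$, directing all edges in $B_4(v)$ toward $v$, running a modified chain in which each vertex ignores its out-neighbour in the filter (so that, conditioned on $v$'s own randomness and everything outside $B_2(v)$, the neighbours' colors are mutually independent and the Dyer--Frieze missed-colors lemma applies), and then bounding by a separate comparison coupling how much the modified and true chains can differ over the constant window; your sketch supplies no ingredient playing this role. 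Finally, your one-step accounting is off for this propose-and-filter chain: local uniformity is needed in the \emph{gain} term --- the probability that $v_0$'s coupled proposal is accepted in both chains is about $p\,\frac{|A(X,v_0)|}{q}\left(1-\frac{3p}{q}\right)^{\Delta}\ge p(1-o(1))\mathrm{e}^{-1/\alpha}$ --- whereas each neighbour's loss probability is $O(p/q)$ because proposals are uniform on $[q]$, not resampled from the available palette. With your bounds as literally stated (gain $\Theta(1)$, loss $\Theta(\Delta)/\bigl((\mathrm{e}^{-1/\alpha}-\eta)q\bigr)$), the claimed margin $1-\mathrm{e}^{1/\alpha}/\alpha$ does not follow, and taking them at face value would require $\alpha>\mathrm{e}^{2/\alpha}\approx 2.35$ rather than $\alpha>\alpha^*$.
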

The condition $q \geq (\alpha^*+\delta)\Delta$ matches the one achieved by Dyer~\emph{et al.}~in~\cite{dyer2013randomly} for the $O(n\log n)$-rapidly mixing of the Glauber dynamics on proper $q$-colorings of graphs with girth at least 5 and sufficiently large maximum degree.
The threshold $q \geq (\alpha^*+\delta)\Delta$ has also appeared  elsewhere variously, including:  the strong spatial mixing of proper $q$-colorings of triangle-free graphs~\cite{goldberg2005strong, gamarnik2013strong}, and rapid mixing of sequential Markov chains on proper $q$-colorings of graphs with large girth and sufficiently large maximum degree~\cite{dyer2003randomly, hayes2003randomly, hayes2006coupling}, neighborhood-amenable graphs~\cite{goldberg2005strong}, or Erd\H{o}s-R\'{e}nyi random graphs $G(n,\Delta/n)$~\cite{efthymiou2018sampling}. 

{
The \emph{Lazy Local Metropolis} algorithm in above two theorems is communication- and computation-efficient: each message consists of at most $O(\log n)$ bits and all local computations are fairly cheap.
In a concurrent work~\cite{feng2018local}, through network decomposition~\cite{ghaffari2016complexity,panconesi1996complexity}, a $O(\log^3n)$-round algorithm is given for sampling proper $q$-colorings of triangle-free graphs with maximum degree $\Delta$ assuming $q \geq (\alpha^*+\delta)\Delta$, however, with messages of unbounded sizes and unbounded local computations.
}

Theorem~\ref{main-theorem-2} is proved by establishing a so-called \emph{local uniformity property} for the Markov chain of the \emph{Lazy Local Metropolis} algorithm. Similar properties have been analyzed by Hayes~\cite{hayes2013local}  for Glauber dynamics. This is perhaps the first time this property is proved on a chain other than Glauber dynamics, not to mention a chain as a distributed algorithm.

Due to a lower bound proved in~\cite{feng2017sampling}, approximately sampling within total variation distance $\epsilon>0$ from a joint distribution with exponential decay of correlations (which is the case for uniform proper $q$-colorings as long as $q=O(\Delta)$) requires $\Omega(\log n+\log\frac{1}{\epsilon})$ rounds of communications.
Therefore, the time complexity $O(\log n+\log\frac{1}{\epsilon})$ in Theorem~\ref{main-theorem-1} and~\ref{main-theorem-2} is optimal.

\paragraph{Organization of the paper.}
Preliminaries are given in Section~\ref{sec:Preliminaries}. The \emph{Lazy Local Metropolis} algorithm is given in Section~\ref{sec:LLM}. Theorem~\ref{main-theorem-1} is proved in Section~\ref{section-coupling-on-general-graphs}. The local uniformity property is proved in Section~\ref{sec:Local-Uniformity}, with which Theorem~\ref{main-theorem-2}  is proved in Section~\ref{sec:Coupling-with-Local-Uniformity}.

\section{Preliminaries}
\label{sec:Preliminaries}

\subsection{Graph colorings}
Let $G=(V, E)$ be an undirected graph. 
For any vertex $v\in V$, we use $\Gamma(v)=\{u\mid \{u, v\}\in E\}$ to denote the set of neighbors of $v$, and $\Gamma^+(v)=\Gamma(v)\cup\{v\}$ the {inclusive neighborhood} of $v$. Let $\deg(v)=|\Gamma(v)|$ denote the degree of $v$, and $\Delta=\Delta_G=\max_{v\in V}\deg(v)$ the maximum degree of $G$. For vertices $u, v \in V$,  let $\dist(u, v)=\dist_G(u,v)$ denote the distance between $u$ and $v$ in $G$, which equals the length of the shortest path between $u$ and $v$ in graph $G$. For any integer $r\ge 0$ and vertex $v \in V$, the \concept{$r$-ball} and \concept{$r$-sphere} centered at $v$ are defined as $B_r(v) \triangleq \{u \in V \mid \dist(u, v) \leq r\}$ and $S_r(v) \triangleq \{u \in V \mid \dist(u, v) = r \}$, respectively.

Let $q$ be a positive integer. A \concept{$q$-coloring}, or just \concept{coloring}, is a vector $X\in[q]^V$. A coloring $X\in[q]^V$ is \concept{proper} if for all edges $\{u,w\}\in E$, $X(u)\neq X(v)$. For any coloring $X\in[q]^V$ and subset $S\subseteq V$, we denote by $X(S)$ the set of colors used by $X$ on subset $S$, i.e.~$X(S)\triangleq\{X(v)\mid v\in S\}$. For any two colorings $X, Y\in[q]^V$, we denote by $X \oplus Y$ the set of vertices on which $X,Y$ disagree:
\begin{align*}
X \oplus Y \triangleq \{v \in V \mid X(v) \neq Y(v)\}.
\end{align*}
The \concept{Hamming distance} between two colorings $X, Y$ is $\vert X \oplus Y \vert$.

Let $\Omega = [q]^V$ be the set of all colorings of graph $G$. 
A \concept{uniform distribution} over proper colorings of $G$ is a distribution $\mu$ over $\Omega$ such that for any coloring $X\in[q]^V$, $\mu(X)>0$ if and only if $X$ is proper; and $\mu(X)=\mu(Y)$ for any two proper colorings $X,Y$.
 
\subsection{Mixing rate and coupling}
Let $\mu$ and $\nu$ be two distributions over $\Omega$, the \emph{total variation distance} between $\mu$ and $\nu$ is defined as
\begin{align*}
\DTV{\mu}{\nu} = \frac{1}{2}\sum_{\sigma \in \Omega}\vert \mu(\sigma) - \nu(\sigma)\vert = \max_{A \subseteq \Omega}\vert \mu(A) - \nu(A) \vert.	
\end{align*}
Let $(X_t)_{t\ge 0}$ denote a Markov chain on a finite state space $\Omega$. Assume that the chain is \emph{irreducible} and \emph{aperiodic}, and is \concept{reversible} with respect to the \concept{stationary distribution} $\pi$. Then by the Markov chain Convergence Theorem~\cite{levin2009markov}, the chain $(X_t)_{t\ge 0}$ converges to the stationary distribution $\pi$. For the formal definitions of these concepts, we refer to the textbook~\cite{levin2009markov}.

Let $\pi_{\sigma}^t$ denote the distribution of $X_t$ when $X_0 = \sigma$. The mixing rate $\tau(\cdot)$ is defined as
\begin{align*}
\forall \epsilon >0:\quad \tau(\epsilon) \triangleq \max_{\sigma \in \Omega} \min\left\{t \mid \DTV{\pi_\sigma^t}{\pi} \leq \epsilon\right\}.	
\end{align*}

Let $(X_t)_{t\ge0}, (Y_t)_{t\ge0}$ be two Markov chains with the same transition rule. A \emph{coupling} of the Markov chains is a joint process $(X_t, Y_t)_{t\ge0}$ satisfying that $(X_t)$ and $(Y_t)$ individually follow the same transition rule as the original chain and $X_{t+1}=Y_{t+1}$ if $X_{t}=Y_{t}$. 
For any coupling $(X_t, Y_t)_{t\ge0}$ of the Markov chains, the total variation distance between $\pi_{\sigma}^t$ and $\pi$ is bounded as 
\begin{align*}
\max_{\sigma \in \Omega}\DTV{\pi_{\sigma}^t}{\pi} \leq \max_{X_0, Y_0 \in \Omega} \Pr[X_t \neq Y_t].
\end{align*}
The path coupling is a powerful engineering tool for constructing couplings.
\begin{lemma}[Bubley and Dyer~\cite{bubley1997path}]
\label{lemma-path-coupling}
Given a pre-metric, which is a weighted connected undirected graph on state space $\Omega$ such that all edge weights are at least 1 and every edge is a shortest path. Let $\Phi(X, Y)$ be the length of shortest path between states $X$ and $Y$ in pre-metric. Suppose that there is a coupling $(X, Y) \rightarrow (X', Y')$ of the Markov chain defined only for adjacent states $X, Y$ in pre-metric, which satisfies that
\begin{align*}
\E{\Phi(X', Y') \mid X, Y} \leq (1 - \delta)\Phi(X, Y),	
\end{align*}
for some $0 < \delta < 1$. Then the mixing rate of the Markov chain is bounded by
\begin{align*}
\tau(\epsilon) \leq \frac{1}{\delta}\log \left(\frac{\mathrm{diam}(\Omega)}{\epsilon}\right),
\end{align*}
where $\mathrm{diam}(\Omega)=\max_{X,Y\in\Omega}\Phi(X,Y)$ stands for the diameter of $\Omega$ in the pre-metric.
\end{lemma}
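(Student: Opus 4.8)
\emph{Proof proposal.} The plan is to bootstrap the promised one-step coupling, which is guaranteed only on pairs of states adjacent in the pre-metric, into a one-step coupling on \emph{all} pairs that still contracts $\Phi$ in expectation, then to iterate it over time and conclude by Markov's inequality. First I would record two elementary facts. Since $\Phi$ is a shortest-path distance in a connected weighted graph, it is symmetric and satisfies the triangle inequality; since, by hypothesis, every pre-metric edge is itself a shortest path, $\Phi$ restricted to an edge $\{Z,Z'\}$ equals that edge's weight $w(Z,Z')$; and since every edge weight is at least $1$, any shortest path between two distinct states has length at least $1$, whence $\Phi(X,Y)\ge\one{X\ne Y}$ for all $X,Y\in\Omega$.

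The main step is the extension of the coupling. Fix arbitrary $X,Y\in\Omega$ and a geodesic $X=Z_0,Z_1,\dots,Z_\ell=Y$ in the pre-metric, so $\sum_{i=0}^{\ell-1}w(Z_i,Z_{i+1})=\Phi(X,Y)$. I would construct a joint law for $(Z_0',Z_1',\dots,Z_\ell')$ sequentially: sample $Z_0'$ from $P(Z_0,\cdot)$ (where $P$ denotes the chain's transition kernel); then, for $i=0,\dots,\ell-1$, having sampled $Z_i'$, sample $Z_{i+1}'$ from the conditional law of the second coordinate given that the first coordinate equals $Z_i'$ under the promised coupling of $(Z_i,Z_{i+1})$. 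A short induction shows $Z_i'$ has marginal $P(Z_i,\cdot)$ for every $i$, so each consecutive pair $(Z_i',Z_{i+1}')$ is distributed exactly as the promised adjacent coupling, and in particular $(X',Y')\triangleq(Z_0',Z_\ell')$ is a legitimate one-step coupling of the chain from $(X,Y)$. The triangle inequality for $\Phi$, linearity of expectation, and the contraction hypothesis applied edge by edge then give
\[
\E{\Phi(X',Y')}\ \le\ \sum_{i=0}^{\ell-1}\E{\Phi(Z_i',Z_{i+1}')}\ \le\ (1-\delta)\sum_{i=0}^{\ell-1}w(Z_i,Z_{i+1})\ =\ (1-\delta)\,\Phi(X,Y).
\]

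Finally I would iterate: applying this stitched coupling at each time step yields a coupling $(X_t,Y_t)_{t\ge0}$ of two copies of the chain with, by the tower property, $\E{\Phi(X_t,Y_t)}\le(1-\delta)^t\Phi(X_0,Y_0)\le(1-\delta)^t\,\mathrm{diam}(\Omega)\le\e^{-\delta t}\,\mathrm{diam}(\Omega)$. Since $\Phi(X_t,Y_t)\ge\one{X_t\ne Y_t}$, Markov's inequality gives $\Pr[X_t\ne Y_t]\le\e^{-\delta t}\,\mathrm{diam}(\Omega)$, and combining with $\max_{\sigma}\DTV{\pi_\sigma^t}{\pi}\le\max_{X_0,Y_0}\Pr[X_t\ne Y_t]$ makes the right-hand side at most $\epsilon$ once $t\ge\frac1\delta\log(\mathrm{diam}(\Omega)/\epsilon)$, which is the asserted bound on $\tau(\epsilon)$. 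I expect the only delicate point to be the second step: one must verify that the stitched process is at once a valid one-step coupling of the chain on its two endpoints \emph{and} marginally agrees, on each consecutive pair, with the given adjacent coupling, so that the contraction hypothesis is legitimately applied on each edge. Building the joint law sequentially is exactly what makes this work, since the marginal of the chain copy started at $Z_{i+1}$ does not depend on the other copy and hence consecutive adjacent couplings can be glued consistently along their shared coordinate; the remaining manipulations are routine.
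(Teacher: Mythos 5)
Your argument is correct: the sequential gluing of the adjacent couplings along a geodesic, the edge-by-edge contraction via the triangle inequality, the iteration giving $\E{\Phi(X_t,Y_t)}\le(1-\delta)^t\,\mathrm{diam}(\Omega)$, and the conclusion via $\Phi\ge\one{X\ne Y}$, Markov's inequality, and the coupling bound on total variation distance together constitute the standard Bubley--Dyer path coupling proof. The paper itself states this lemma only with a citation to Bubley and Dyer and gives no proof, and your proposal matches that classical argument, with the key stitching step (consistency of the glued coupling's pairwise marginals) correctly identified and justified.
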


\section{The Lazy Local Metropolis Algorithm}
\label{sec:LLM}
In this section,  we give the {lazy local metropolis} algorithm \LLM{} for uniform sampling random proper graph coloring.

The algorithm is a Markov chain. Let $G=(V,E)$ be a graph, $q$ a positive integer and $0<p<1$. The \LLM{} \concept{chain with activeness $p$} on $q$-colorings of graph $G$, denoted as $(X_t)_{\ge 0}$, is defined as follows. Initially $X_0\in[q]^V$ is arbitrary (not necessarily a proper coloring). At time $t$, given the current coloring  $X_t\in[q]^V$, the $X_{t+1}$ is constructed as follows:
\begin{itemize}
\item Each vertex $v\in V$ becomes \emph{active} independently with probability 	$p$, otherwise it becomes \emph{lazy}. Let $\activeset\subseteq V$ denote the set of active vertices.
\item Each active vertex $v \in \activeset$ independently proposes a color $c(v) \in [q]$ uniformly at random.
\item 
For each active edge $\{u,v\} \in \inedge{\activeset}$, where $\inedge{\activeset} \triangleq \{\{u,v\} \in E \mid u\in\activeset \land v \in \activeset\}$ , we say that the edge $\{u,v\}$ passes its check if and only if $c(u) \neq c(v) \land c(u) \neq X_t(v) \land X_t(u) \neq c(v)$. For each boundary edge $\{u,v\} \in \bdedge\activeset$, where $\bdedge\activeset \triangleq \{\{u,v\} \in E \mid u \notin \activeset \land v \in \activeset\}$ and $v\in\activeset$ is active,  we say that the edge $\{u,v\}$ passes its check if and only if $c(v) \neq X_t(u)$.
\item For each vertex $v \in V$, if $v$ is active and all edges incident to $v$ passed their checks, then $v$ accepts its proposed color and updates its color as $X_{t+1}(v) \gets c(v)$; otherwise $X_{t+1}(v)\gets X_t(v)$.
\end{itemize}
The algorithm terminates after $T$ iterations and outputs $\boldsymbol{X} = (X_T(v))_{v \in V}$. The parameters $p$ and $T$ will be specified later. The pseudocode for the \LLM{} algorithm is given in Algorithm~\ref{LLM}.
\begin{algorithm}
\SetKwInOut{Input}{Input}
\Input{Each vertex $v\in V$ receives the set of colors $[q]$ and $0<p<1$ as input.}
each $v\in V$ initializes $X(v)$ to an arbitrary color in $[q]$\;
\For{$t = 1$ through $T$}{
	\ForEach{$v \in V$}{
		become active independently with probability $p$, otherwise become lazy\;	
	}
	\ForEach{active $v \in V$}{
		propose a color $c(v)\in[q]$ uniformly at random\;	
	}
	\ForEach{$\{u,v\}\in E$  where  both $u$ and $v$ are active}{
	pass the check if
	$c(u) \neq c(v) \land c(u) \neq X(v) \land X(u) \neq c(v)$\;	
	}
	\ForEach{$\{u,v\} \in E$  where $u$ is lazy and $v$ is active}{
	pass the check if
	$c(v) \neq X(u)$\;	
	}
	\ForEach{$v \in V$ and $v$ is active}{
		\If { all edges  incident to $v$ passed their checks}{
			$X(v) \gets c(v)$\;
		} 
	}
}
each $v\in V$ \textbf{returns}{ $X(v)$\;}
\caption{Pseudocode for the \LLM{} algorithm}\label{LLM}
\end{algorithm}

Compared to the Local Metropolis chain proposed in~\cite{feng2017sampling},  the \LLM{} chain allows each vertex to be lazy independently. It turns out this step is an operation of \concept{symmetry breaking} and is essential to the mixing of the parallel chain. We will see this in details in later sections.

Let $\mu$ denote the uniform distribution over proper colorings of graph $G=(V,E)$, and $\Delta$ the maximum degree of $G$.
The following theorem guarantees that the \LLM{} chain converges to the correct stationary distribution $\mu$.

\begin{theorem}
\label{theorem-LLM-convergence}
For any $ 0 < p < 1$, the \LLM{} chain with activeness $p$ is reversible with stationary distribution $\mu$, and converges to the stationary distribution $\mu$ as long as $q \geq \Delta + 2$.
\end{theorem}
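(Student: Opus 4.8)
The plan is to prove the two assertions separately: that $\mu$ satisfies detailed balance for the chain (which yields both reversibility and stationarity of $\mu$), and that the chain is aperiodic and, from every starting coloring, eventually enters a unique closed communicating class on which it is irreducible (which, together with stationarity of $\mu$, gives $\pi_\sigma^t\to\mu$). I would begin with the easy but essential fact that the set of proper colorings is closed under one step of the \LLM{} chain: if $X_t$ is proper then so is $X_{t+1}$. This is a short case analysis on each edge $\{u,v\}$ according to how many of its endpoints are active and which accept their proposals; in every case the relevant edge check (for active endpoints) or the properness of $X_t$ (for lazy endpoints) forces $X_{t+1}(u)\neq X_{t+1}(v)$. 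Given closedness, the identity $\mu(X)P(X,Y)=\mu(Y)P(Y,X)$ holds trivially whenever one of $X,Y$ is improper (both sides are $0$, using closedness for the case ``$X$ proper, $Y$ improper''), so detailed balance reduces to proving $P(X,Y)=P(Y,X)$ for all \emph{proper} $X,Y$.

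To prove $P(X,Y)=P(Y,X)$ for proper $X,Y$ I would condition on the active set. Writing $N_{\activeset}(X\to Y)$ for the number of proposal vectors $c\colon\activeset\to[q]$ whose deterministic effect transforms $X$ into $Y$, we have $P(X,Y)=\sum_{\activeset} p^{|\activeset|}(1-p)^{|V\setminus\activeset|}\, q^{-|\activeset|}\, N_{\activeset}(X\to Y)$, and the prefactor $p^{|\activeset|}(1-p)^{|V\setminus\activeset|}q^{-|\activeset|}$ is symmetric in $X,Y$. Hence it suffices to give, for each $\activeset$, a bijection between the proposal vectors taking $X$ to $Y$ and those taking $Y$ to $X$. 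The map $\phi$ I would use is: given $c$ with effect $X\mapsto Y$, let $D\subseteq\activeset$ be the set of vertices that accept under $(X,c)$ (so $Y$ agrees with $c$ on $D$ and with $X$ off $D$), and set $\phi(c)(v)=X(v)$ for $v\in D$ and $\phi(c)(v)=c(v)$ for $v\in\activeset\setminus D$. The key lemma is that every active and every boundary edge passes its check under $(Y,\phi(c))$ if and only if it passes under $(X,c)$; this is a case analysis on the membership of the endpoints in $D$, using (i) that any edge incident to a vertex of $D$ has already passed under $(X,c)$, and (ii) properness of $X$. This edge‑invariance forces the accept set under $(Y,\phi(c))$ to be exactly $D$ again, so $(Y,\phi(c))$ has effect $Y\mapsto X$; and applying the analogous map built from $Y$ to $\phi(c)$ recovers $c$, so $\phi$ is a bijection (indeed an involution on the combined family). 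Summing over $\activeset$ gives $P(X,Y)=P(Y,X)$, hence detailed balance, hence $\mu$ is stationary and the chain is reversible.

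For convergence it remains to establish aperiodicity and irreducibility. Aperiodicity is immediate: with probability $(1-p)^n>0$ the active set is empty and $X_{t+1}=X_t$, a self‑loop at every state. For irreducibility among proper colorings, I would show that if proper colorings $X$ and $Y$ differ in exactly one vertex $v$ then $P(X,Y)>0$: activate only $v$ (probability $p(1-p)^{n-1}$) and have it propose $Y(v)$; every edge at $v$ is a boundary edge and passes because $Y(v)\neq Y(u)=X(u)$ for each neighbor $u$, so $v$ accepts and the step realizes $Y$. Chaining such single‑vertex moves along a path in the graph of proper $q$‑colorings under single‑vertex recolorings — which is connected precisely when $q\ge\Delta+2$ (the classical threshold, e.g.\ frozen colorings of $K_{\Delta+1}$ show $\Delta+1$ is insufficient) — shows every proper coloring is reachable from every other. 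Finally, from an arbitrary, possibly improper, coloring one reaches a proper coloring with positive probability by activating the vertices one at a time in a fixed order $v_1,\dots,v_n$, each $v_i$ proposing a color that avoids the current colors of its neighbors (possible since there are at most $\Delta<q$ such colors); after these $n$ steps the coloring is proper. Thus the proper colorings form the unique closed communicating class and are reached from everywhere, so with aperiodicity and the stationarity of $\mu$ the Markov chain convergence theorem gives $\pi_\sigma^t\to\mu$ for all $\sigma$.

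I expect the main obstacle to be the edge‑invariance lemma underlying the involution $\phi$: one must verify that reversing the proposals on the accepted vertices while keeping them on the rejected active vertices preserves the pass/fail status of every active and boundary edge, so that the accept set is unchanged and the effect $Y\mapsto X$ genuinely results. The non‑obvious ingredient that makes the case analysis close is that every edge incident to a vertex which accepted in the forward step has necessarily passed its check, which is exactly what lets the ``$c(u)\neq c(v)$'' and ``$c\neq X$'' conditions be translated between the forward and reverse steps.
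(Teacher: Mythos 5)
Your proposal is correct and follows essentially the same route as the paper: detailed balance is verified by the same swap bijection on proposal vectors (reset accepted vertices' proposals to their old colors, keep rejected ones), with the key step being exactly the paper's observation that pass/fail status of every edge is preserved because $\{c(v),X(v)\}=\{c'(v),Y(v)\}$ on active vertices and $X=Y$ on lazy ones (using properness of $X$ for the $X(u)\neq X(v)$ condition), and irreducibility/aperiodicity are handled by the same single-active-vertex moves and laziness self-loops. The only cosmetic difference is your explicit $n$-step sequential-activation argument for reaching a proper coloring from an improper one, in place of the paper's absorption argument; both are fine.
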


\begin{proof}
First, when $q \geq \Delta + 2$, in each iteration, each vertex $v$ with positive probability becomes the only active vertex in its neighborhood and successfully updates its color. Once a vertex $v$ being successfully updated, its color will not conflict with its neighbors and will keep in that way. Therefore, when $q \geq \Delta + 2$, the \LLM{} chain is absorbing to proper colorings. 

Let $\Omega = [q]^V$ denote the state space and $P \in \mathbb{R}_{\geq 0}^{\vert \Omega \vert \times \vert \Omega \vert}$ the transition matrix for the \LLM{} chain. We will then verify the chain's irreducibility among proper colorings and aperiodicity.  For any two proper colorings $X, Y$, since $q \geq \Delta+ 2$, we can construct a finite sequence of proper colorings $X= Z_0 \to Z_1 \to \ldots \to Z_\ell = Y$, such that $Z_i$ and $Z_{i + 1}$ differ at a single vertex $v_i$. When the current coloring is $Z_i$, with positive probability, all vertices except $v_i$ are lazy, and $v_i$ proposes the color of $v_i$ in $Z_{i+1}(v_i)$, in which case the chain will move from coloring $Z_i$ to coloring $Z_{i+1}$. The chain is irreducible among proper colorings. On the other hand, due to the laziness, $P(X, X) > 0$ for all $X\in\Omega$, so the chain is aperiodic.

In the rest of the proof, we show that the following detailed balance equation is satisfied:
\begin{align}
\label{eq-detailed-balance-equation}
\forall X,Y\in\Omega:\quad \mu(X)P(X, Y) = \mu(Y)P(Y, X).	
\end{align}
This will prove the reversibility of the chain with respect to the stationary distribution $\mu$. Together with the absorption to the proper colorings, the irreducibility among proper colorings, and the aperiodicity proved above, the theorem follows according to the Markov chain convergence theorem.

If both $X, Y$ are improper colorings, then $\mu(X) =\mu(Y) = 0$, the equation holds trivially. If precisely one of $X, Y$ is proper, say $X$ is a  proper coloring and $Y$ is an improper coloring, then $X$ cannot move to $Y$ since at least one edge cannot pass its check, which implies $P(X, Y) = 0$. In both cases, the detailed balance equation holds.

Assume that $X, Y$ are both proper colorings. Consider a single move in the \LLM{} chain. Let $\activeset$ be the set of active vertices and $\boldsymbol{c} \in [q]^\activeset$ be the colors proposed by active vertices. Given the current coloring, the next coloring of \LLM{} chain is fully determined by the pair $(\activeset, \boldsymbol{c})$. 

Let $\Omega_{X \rightarrow Y}$ be the set of pairs $(\activeset, \boldsymbol{c})$ with which $X$ moves to $Y$. Given the current coloring $X$, the set of active vertices $\activeset$ and the colors $\boldsymbol{c}$ proposed by active vertices , we say a vertex $v$ is \emph{non-restricted} under the tuple $(X, \activeset, \boldsymbol{c} )$ if and only if $v$ is active and all edges incident to $v$ can pass their checks. Let $\mathcal{S}(X, \activeset, \boldsymbol{c})$ denote the the set of non-restricted vertices. Note that vertex $v$ accepts its proposed color if and only if $v \in \mathcal{S}(X, \activeset, \boldsymbol{c} ) $. Let $\Delta_{X, Y} = \{v  \in V\mid X(v) \neq Y(v)\}$ denote the set of vertices on which $X, Y$ disagree. Hence, each $(\activeset, \boldsymbol{c} ) \in \Omega_{X \rightarrow Y}$ satisfies:
\begin{itemize}
\item $\Delta_{X, Y} \subseteq \mathcal{S}(X, \activeset, \boldsymbol{c})$.
\item $\forall v \in \mathcal{S}(X, \activeset, \boldsymbol{c}): c(v) =Y(v)$.
\end{itemize}
Similar holds for $\Omega_{Y \rightarrow X}$, the set of pairs $(\activeset, \boldsymbol{c})$ with which $Y$ moves to $X$. Then we have
\begin{align}
\label{eq-ratio-P(X,Y)-P(Y, X)}
\frac{P(X, Y)}{P(Y, X)} &= \frac{\sum_{(\activeset, \boldsymbol{c}) \in \Omega_{X \rightarrow Y}} \Pr[\activeset]\Pr[\boldsymbol{c} \mid \activeset]}{\sum_{(\activeset', \boldsymbol{c}') \in \Omega_{Y \rightarrow X}} \Pr[\activeset']\Pr[\boldsymbol{c}' \mid \activeset']}
\end{align}
In order to verify the detailed balance equation, we construct a bijection $\phi_{X, Y}:  \Omega_{X \rightarrow Y} \rightarrow \Omega_{Y \rightarrow X}$, and for each pair $(\activeset, \boldsymbol{c}) \in \Omega_{X \rightarrow Y}$, denote $(\activeset', \boldsymbol{c}') = \phi_{X, Y}(\activeset, \boldsymbol{c})$. Then, we show that
\begin{align}
\label{eq-bijection}
\Pr[\activeset]\Pr[\boldsymbol{c} \mid \activeset]= 	\Pr[\activeset']\Pr[\boldsymbol{c}' \mid \activeset'].
\end{align}
Since $\mu(X) = \mu(Y)$, then combining~\eqref{eq-ratio-P(X,Y)-P(Y, X)} and~\eqref{eq-bijection} proves the detailed balance equation~\eqref{eq-detailed-balance-equation}.  

The bijection $\phi_{X, Y}:  \Omega_{X \rightarrow Y} \rightarrow \Omega_{Y \rightarrow X}$ is constructed as follows:
\begin{itemize}
\item $\activeset'=\activeset$.
\item $\forall v \in \activeset \cap \mathcal{S}(X, \activeset, \boldsymbol{c})$: since $(\activeset, \boldsymbol{c}) \in \Omega_{X \rightarrow Y}$ it must hold $c(v) = Y(u)$, then set $c'(v) = X(v)$.
\item $\forall v \in \activeset \setminus \mathcal{S}(X, \activeset, \boldsymbol{c})$: since $(\activeset, \boldsymbol{c}) \in \Omega_{X \rightarrow Y}$ it must hold $X(v) = Y(v)$, then set $c'(v) = c(v)$.
\end{itemize}
Note that the laziness and random proposed colors are fully independent. Since $\activeset = \activeset'$, then 
\begin{align*}
	\Pr[\activeset]\Pr[\boldsymbol{c} \mid \activeset] = (1-p)^{n-\vert \mathcal{A} \vert} \left( \frac{p}{q}\right)^{\vert \activeset \vert} =(1-p)^{n-\vert \mathcal{A}' \vert} \left(\frac{p}{q}\right)^{\vert \activeset' \vert}  = \Pr[\activeset'] \Pr[\boldsymbol{c} \mid \activeset' ],
\end{align*}
which proves equation~\eqref{eq-bijection}. We finish the proof of reversibility by showing that $\phi_{X, Y}$ is indeed a bijection from $\Omega_{X \rightarrow Y}$ to  $\Omega_{Y \rightarrow X}$.  

Consider the move from $X$ to $Y$ with pair $(\activeset, \boldsymbol{c} )$. For each edge $\{u,v\} \in \inedge{\activeset} \cup \bdedge\activeset$, define the indictor variable $\mathsf{pass}(u,v, X, \activeset, \boldsymbol{c})$ indicating whether edge $\{u,v\}$ passes its check under the tuple $(X, \activeset, \boldsymbol{c})$. Note that $X(u) \neq X(v)$ because $X$ is a proper coloring, we have
\begin{align*}
\mathsf{pass}(uv, X, \activeset, \boldsymbol{c}) &= \begin{cases}
 	\one{c(u) \neq c(v)}\one{c(u) \neq X(v)}\one{X(u) \neq c(v)}  &\text{if } uv \in \inedge\activeset\\
 	\one{c(v) \neq X(u)} &\text{if } uv \in \bdedge\activeset\text{ and }v\in\activeset
 \end{cases}\\
&= \begin{cases}
 	\prod_{x \in \{c(u), X(u)\} \atop y \in \{c(v), X(v)\}}\one{x \neq y} &\text{if } uv \in \inedge\activeset\\
 	\prod_{x \in \{c(v), X(v)\}}\one{x \neq X(u)} &\text{if } uv \in \bdedge\activeset\text{ and }v\in\activeset
 \end{cases}.
\end{align*}
Similarly, for each edge $\{u,v\} \in E(\activeset') \cup \delta \activeset'$, note that $\activeset' = \activeset$, we have
\begin{align*}
\mathsf{pass}(uv, Y, \activeset', \boldsymbol{c}') &= \begin{cases}
 	\prod_{x \in \{c'(u), Y(u)\} \atop y \in \{c'(v), Y(v)\}}\one{x \neq y} &\text{if } uv \in \inedge\activeset\\
 	\prod_{x \in \{c'(v), Y(v)\}}\one{x \neq Y(u)} &\text{if } uv \in \bdedge\activeset\text{ and }v\in\activeset
 \end{cases}.
\end{align*}
According to the definition of $\phi_{X, Y}$, it must hold that $\{c(v), X(v)\} = \{c'(v), Y(v)\}$; $\{c(u), X(u)\} = \{c'(u), Y(u)\}$ (if $u$ is active); $X(u) = Y(u)$ (if $u$ is lazy), which implies $\mathsf{pass}(uv, X, \activeset, \boldsymbol{c}) = \mathsf{pass}(uv, Y, \activeset', \boldsymbol{c}')$. Hence, it holds that $\mathcal{S}(X, \activeset, \boldsymbol{c}) = \mathcal{S}( Y, \activeset', \boldsymbol{c}')$, with which we can easily verify that $(\activeset', \boldsymbol{c}') \in \Omega_{Y \rightarrow X}$ and $\phi_{X, Y} = \phi^{-1}_{Y,X}$. This proves that $\phi_{X,Y}$ is a bijection from $\Omega_{X \rightarrow Y}$ to $\Omega_{Y \rightarrow X}$. 

\end{proof}

\section{Mixing When $q\ge(2+\delta)\Delta$ on General Graphs}
\label{section-coupling-on-general-graphs}
In this section, we analyze the mixing time for the \LLM{} chain for proper $q$-colorings on general graphs. 
We show that the chain mixes within $ O\left(\log n \right)$ rounds under the Dobrushin's condition $q \geq (2 + \delta)\Delta$, even when the maximum degree $\Delta$ is unbounded.
\begin{theorem} 
\label{theorem-LLM-worst-case-coupling}
For all $\delta > 0$, there exists $C=C(\delta)$, such that for every graph $G$ on $n$ vertices with maximum degree $\Delta$, if $q \geq (2 + \delta)\Delta$, then the mixing rate of the \mbox{\LLM{}} chain with activeness $p= \min\left\{\frac{\delta}{3},\frac{1}{2}\right\}$ on $q$-colorings of graph $G$ satisfies 
\[
\tau(\epsilon) \le C\log \frac{n}{\epsilon}.
\]	
\end{theorem}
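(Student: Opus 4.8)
The plan is to apply the path coupling method (Lemma~\ref{lemma-path-coupling}) with the Hamming metric on $\Omega=[q]^V$: edges join colourings at Hamming distance $1$, all weights are $1$, so $\Phi(X,Y)=|X\oplus Y|$ and $\mathrm{diam}(\Omega)\le n$. It then suffices to exhibit, for every pair $X,Y$ with $X\oplus Y=\{v_0\}$, a one-step coupling $(X,Y)\to(X',Y')$ of the \LLM{} chain with activeness $p=\min\{\delta/3,1/2\}$ such that $\E{|X'\oplus Y'|}\le 1-\delta'$ for some $\delta'=\delta'(\delta)>0$; then $\tau(\epsilon)\le\delta'^{-1}\log(n/\epsilon)$, which is the claimed bound with $C=1/\delta'$.

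The coupling I would use: activeness is coupled identically (so the active set $\activeset$ is the same in both copies), every vertex outside $\Gamma^+(v_0)$ proposes the same colour in both copies, and $v_0$ itself (when active) also proposes the same colour in both copies; the one delicate ingredient is the coupling of the proposals of the neighbours $u\in\Gamma(v_0)$, which I would design so that the single edge $\{v_0,u\}$ behaves as consistently as possible across the two copies. Here the laziness is what does the work: when $v_0$ is lazy the edge $\{v_0,u\}$ is a boundary edge and imposes on an active $u$ only the one-sided constraint $c(u)\ne X(v_0)$ (resp.\ $c(u)\ne Y(v_0)$), rather than the three-fold constraint of an active edge. This is precisely the ``symmetry breaking'' step, and it is what lets the disagreement at $v_0$ leak to the neighbourhood at roughly half the rate of the original (non-lazy) Local Metropolis chain.

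With such a coupling $X'\oplus Y'$ stays concentrated on $\Gamma^+(v_0)$, and I would estimate $\E{|X'\oplus Y'|}=\Pr[X'(v_0)\ne Y'(v_0)]+\sum_{u\in\Gamma(v_0)}\Pr[X'(u)\ne Y'(u)]$ term by term. For the first term, $v_0$ becomes equal in both copies (``resolves'') whenever it is active (probability $p$), its proposed colour avoids the at most $\Delta$ colours of its neighbours, and no active neighbour proposes $X(v_0)$ or $Y(v_0)$; the last two events each have probability $O(p\Delta/q)$ because they require a neighbour to be active, so $\Pr[X'(v_0)\ne Y'(v_0)]\le 1-p\bigl(1-\tfrac{\Delta}{q}-O(\tfrac{p\Delta}{q})\bigr)\le 1-c_1(\delta)\,p$ once $q\ge(2+\delta)\Delta$ and $p=\delta/3$, with $c_1(\delta)$ close to $\tfrac{1+\delta}{2+\delta}$. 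For each neighbour $u$, a fresh disagreement can appear at $u$ only if $u$ is active and the edge $\{v_0,u\}$ is checked differently in the two copies, and the laziness-aware coupling confines this to probability $(1+o(1))\,p/q$ per neighbour, so $\sum_{u\in\Gamma(v_0)}\Pr[X'(u)\ne Y'(u)]\le c_2(\delta)\,p$ with $c_2(\delta)$ close to $\tfrac{1}{2+\delta}$. Since $c_2(\delta)<c_1(\delta)$ for every $\delta>0$, this gives $\E{|X'\oplus Y'|}\le 1-(c_1-c_2)p$, so we may take $\delta'=(c_1-c_2)p$.

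The main obstacle is exactly the inequality $c_2<c_1$, i.e.\ showing that the rate at which new disagreements are created on $\Gamma(v_0)$ is strictly below the rate at which $v_0$ resolves. This needs two things: (i) a genuine factor-two saving in the per-neighbour creation probability (from $\approx 2p/q$ down to $\approx p/q$), which is precisely what the one-sided boundary check bought by introducing laziness provides — without it one only reaches $q\ge(2+\sqrt2+\delta)\Delta$ — and (ii) choosing $p$ proportional to $\delta$ so that all neighbour--neighbour interference terms ($c(v_0)$ colliding with a neighbour's proposal, two neighbours colliding, and so on) are lower order in $p$ and can be absorbed. A secondary technical point is designing the proposal coupling on $\Gamma^+(v_0)$ so that it does not let the disagreement cascade to distance-$2$ vertices (an issue when $v_0$ lies in triangles or its neighbours share vertices), together with handling the small-$\Delta$ cases; both are routine but must be done with care.
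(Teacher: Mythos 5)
Your high-level plan is the same as the paper's (path coupling on the Hamming metric, identically coupled laziness, identical proposals away from $\Gamma(v_0)$, resolve probability $\approx p\frac{q-\Delta}{q}(1-O(p))$ at $v_0$, creation probability $\approx p/q$ per neighbour, nothing at distance $\ge 2$, and $p=\Theta(\delta)$ to absorb interference). But the one step you leave unspecified is exactly where the theorem lives, and the mechanism you offer for it is wrong. You attribute the drop from $2p/q$ to $p/q$ per neighbour to the one-sided boundary check when $v_0$ is lazy. That does not help: if $u\in\Gamma(v_0)$ is active and its proposal is coupled \emph{identically} in the two chains, then whether $v_0$ is lazy or active there are still two dangerous proposals, $c(u)=\Red=X(v_0)$ (the edge $\{v_0,u\}$ fails in $X$ but may pass in $Y$) and $c(u)=\Blue=Y(v_0)$ (the symmetric event), so the creation probability is $\approx 2p/q$ and the contraction argument only closes for $q\ge(3+\delta)\Delta$, not $(2+\delta)\Delta$. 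The paper's device is different: for $u\in\Gamma(v_0)$ the proposals are coupled with the roles of $\Red$ and $\Blue$ \emph{switched}, so that the simultaneous pair $(c_X(u),c_Y(u))=(\Red,\Blue)$ causes the edge $\{v_0,u\}$ to fail in \emph{both} chains and only the single pair $(\Blue,\Red)$ remains dangerous — that is the genuine factor-two saving, and it has nothing to do with $v_0$'s laziness. (Laziness enters where your point (ii) says: with $p=\Theta(\delta)$ the interference factor $(1-3p/q)^\Delta$ at $v_0$ is $1-O(\delta)$, and the switched coupling can be kept local at all.)

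Moreover, the switched coupling cannot be applied blindly: if some neighbour $w\neq v_0$ of $u$ has current colour $X(w)=Y(w)\in\{\Red,\Blue\}$, or an active neighbour $w\notin\Gamma(v_0)$ has already proposed $c_X(w)=c_Y(w)\in\{\Red,\Blue\}$, then $u$'s mismatched proposals would make the edge $\{u,w\}$ pass in one chain and fail in the other, creating a disagreement at $w$, i.e.\ at distance two from $v_0$. This is not a corner case about triangles or shared neighbours, as you suggest; it occurs on arbitrary (even girth-$9$) graphs, and your claim that a fresh disagreement at $u$ requires the edge $\{v_0,u\}$ to be checked differently is also false under a switched coupling ($u$ can simply accept different colours in the two chains). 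The paper resolves this by the guarded rule in its coupling: switch $\Red/\Blue$ at $u$ only when no such blocking neighbour exists, and couple identically otherwise — in the blocked case the blocking neighbour kills one of the two dangerous colours, so the $p/q$ bound survives, and a case analysis (Observation~\ref{observation-worst-case-coupling}) shows no disagreement can reach distance two. Without this guarded switched coupling, either your per-neighbour bound degrades to $2p/q$ (losing the threshold $2+\delta$) or your claim that $X'\oplus Y'\subseteq\Gamma^+(v_0)$ fails; so as written the proposal has a genuine gap at its quantitative core, even though the surrounding path-coupling scaffolding and the choice $p=\Theta(\delta)$ are exactly right.
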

The mixing rate is proved by a path coupling with respect to the Hamming distance. 
Compared to the coupling based on disagreement percolation for a non-lazy version of the chain in~\cite{feng2017sampling}, where the disagreement may percolate to distant vertices within one step, our new coupling is local, as within one step the disagreement can at most contaminate the adjacent vertices. 
Thanks to the symmetry breaking due to the independent laziness, this local coupling achieves a much better mixing condition than the one achieved in~\cite{feng2017sampling} with a much shorter analysis.

\paragraph{The local coupling:}
Assume $X, Y \in [q]^V$ to be two colorings (not necessarily proper) that differ only at one vertex $v_0$. Without loss of generality, we assume $X({v_0})=\Red$, 
$Y({v_0})=\Blue$.

We then construct a coupling $(X, Y) \rightarrow (X', Y')$. Given the current coloring $X$, the random coloring of the next step $X'$ is determined by the random choice of $(\activeset_X, \boldsymbol{c}_X)$ where $\activeset_X$ is the set of active vertices and $\boldsymbol{c}_X \in [q]^{\activeset_X}$ is the vector of colors proposed by active vertices. The coupling of the chain $(X, Y) \rightarrow (X', Y')$ is then specified by a coupling
of the random choices $(\activeset_X, \boldsymbol{c}_X)$ and $(\activeset_Y, \boldsymbol{c}_Y)$ of the two chains, which is described as follows:
\begin{enumerate}
\item 
First, the laziness is coupled identically.
Each vertex $v\in V$ becomes active in both chains, independently with probability $p$. 
Let $\activeset=\activeset_X=\activeset_Y$ denote the set of active vertices.
\item
Then, the random proposals $(\boldsymbol{c}_X, \boldsymbol{c}_Y)$  for the active vertices in $\activeset$ are coupled step by step as follows. Recall that $\Gamma(v_0)$ denotes the set of neighbors of $v_0$.
\begin{enumerate}
\item \label{item:consistent-couple} For every active vertex $v\not\in\Gamma(v_0)$, the random proposals $(c_X(v),c_Y(v))$ are coupled \textbf{identically} such that $c_X(v)=c_Y(v)=c(v)\in[q]$ is sampled uniformly and independently.
\item For every active vertex $v\in\Gamma(v_0)$, if at least one of the following conditions is satisfied, the random proposals $(c_X(v),c_Y(v))$ are coupled \textbf{identically}:
\begin{itemize}
\item for at least one of $v$'s neighbor $u\neq v_0$, the current color satisfies that $X(u)=Y(u)\in\{\Red,\Blue\}$;
\item  for at least one of $v$'s active neighbor $u\not\in\Gamma(v_0)$, the random proposal already sampled as in Step~\ref{item:consistent-couple} has $c_X(u)=c_Y(u)\in\{\Red,\Blue\}$.
\end{itemize}
For all other active vertices $v\in\Gamma(v_0)$, the random proposals $(c_X(v),c_Y(v))$ are coupled identically except with the roles of $\Red$ and  $\Blue$ \textbf{switched} in the two chains.
\end{enumerate}
\end{enumerate}
With the random choices $(\activeset_X, \boldsymbol{c}_X)$ and $(\activeset_Y, \boldsymbol{c}_Y)$ coupled as above, the colorings $(X',Y')$ of the next step are constructed following the rules of the \LLM{} chain described in Algorithm~\ref{LLM}.

It is easy to verify this is a valid coupling of the \LLM{} chain, as in each individual chain $X$ or $Y$, each vertex $v$ becomes active independently with probability $p$ and proposes a  random color $c(v)\in[q]$ uniformly and independently.

The following observations for the coupling can be verified by case analysis.
\begin{observation}
\label{observation-worst-case-coupling}
The followings hold for the coupling constructed above:
\begin{itemize}
\item For each vertex $u \neq v_0$ that 
$X_u = Y_u \in \{\Red, \Blue\}$, all its active neighbors $w \in \Gamma(u) \cap \activeset$ sample $(c^X_w, c^Y_w)$ consistently.
\item For each active vertex $u \in \Gamma(v_0)$, $c^X_u=c^Y_u$ if and only if there exists a vertex $w \in \Gamma(u)$, such that $X_w = Y_w \in \{\Red, \Blue\}$ or $c^X_w = c^Y_w \in \{\Red, \Blue\}$. 
\item For each vertex  $u \neq v_0$, the event $X'_u \neq Y'_u$ occurs only if $u\in\activeset$ and $\{c^X_u, c^Y_u\} \subseteq \{\Red, \Blue\}$.	
\end{itemize}
\end{observation}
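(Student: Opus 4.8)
The plan is to prove all three items by unwinding the definition of the coupling in Step~2 of the construction, organizing the case analysis around two binary distinctions: whether the vertex under consideration lies in $\Gamma(v_0)$ (hence is coupled via Step~2(b)) or outside it (hence is coupled identically via Step~2(a)), and, for the former, which of the two triggering conditions of Step~2(b), if any, is responsible for its coupling. Throughout, two facts are used repeatedly: $X$ and $Y$ agree at every vertex except $v_0$, so $X_z = Y_z$ for all $z\neq v_0$; and the ``switched'' coupling alters a proposal only when that proposal equals $\Red$ or $\Blue$, so if $v$ is coupled with switch then $c^X_v \neq c^Y_v$ iff $c^X_v\in\{\Red,\Blue\}$, and in that case $\{c^X_v,c^Y_v\}=\{\Red,\Blue\}$.

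\emph{Items 1 and 2} are then essentially immediate from the definitions. For Item 1, let $u\neq v_0$ with $X_u=Y_u\in\{\Red,\Blue\}$ and let $w\in\Gamma(u)\cap\activeset$: if $w\notin\Gamma(v_0)$ (which in particular covers $w=v_0$, since $v_0\notin\Gamma(v_0)$) then Step~2(a) couples $w$ identically; if $w\in\Gamma(v_0)$ then $u$ is a neighbor of $w$ with $u\neq v_0$ and $X_u=Y_u\in\{\Red,\Blue\}$, which is precisely the first triggering condition of Step~2(b) for $w$, so again $c^X_w=c^Y_w$. For Item 2, fix an active $u\in\Gamma(v_0)$; its coupling is governed by Step~2(b). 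If a triggering condition holds for $u$ then $u$ is coupled identically, so $c^X_u=c^Y_u$, and in either of the two cases the triggering condition itself exhibits a neighbor $w\in\Gamma(u)$ of exactly the form required on the right-hand side of Item~2 (for the first case a neighbor $w\neq v_0$ with $X_w=Y_w\in\{\Red,\Blue\}$; for the second an active neighbor $w\notin\Gamma(v_0)$ with $c^X_w=c^Y_w\in\{\Red,\Blue\}$). Conversely, if no triggering condition holds, $u$ is coupled with $\Red\leftrightarrow\Blue$ switched, so $c^X_u=c^Y_u$ exactly when the common uniform proposal avoids $\{\Red,\Blue\}$; the two directions are matched by checking, for each admissible form of a right-hand-side witness $w$, that the corresponding triggering condition for $u$ is the one it would activate, with the failure of those conditions (when $u$ is switched) ruling out the witnesses of current-color type and of proposal type outside $\Gamma(v_0)$, and the switch-versus-identity dichotomy for $w\in\Gamma(v_0)$ handling the remaining sub-case.

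\emph{Item 3} is the substantive claim, and I would prove it in contrapositive form: if $u\neq v_0$, $u\in\activeset$, and $\{c^X_u,c^Y_u\}\not\subseteq\{\Red,\Blue\}$, then $X'_u=Y'_u$ (the case $u\notin\activeset$ being trivial, as then $X'_u=X_u=Y_u=Y'_u$). First observe $c^X_u=c^Y_u$ under these hypotheses: outside $\Gamma(v_0)$ this is Step~2(a); inside $\Gamma(v_0)$, a switched coupling with $c^X_u\neq c^Y_u$ would force $\{c^X_u,c^Y_u\}=\{\Red,\Blue\}$, contradicting the hypothesis. Write $c_u:=c^X_u=c^Y_u$; necessarily $c_u\notin\{\Red,\Blue\}$. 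The heart of the argument is then to show that every edge $\{u,z\}$ incident to $u$ passes its check in the $X$-chain iff it passes its check in the $Y$-chain; granting this, $u$ is updated in both chains or in neither, and since $X_u=Y_u$ and $c^X_u=c^Y_u$ we conclude $X'_u=Y'_u$. For the edge-check comparison: if $z=v_0$, the two chains' checks for $\{u,v_0\}$ (active or boundary) differ only in subclauses of the form ``$\cdot\neq X_{v_0}$'' versus ``$\cdot\neq Y_{v_0}$'', i.e.\ ``$c_u\neq\Red$'' versus ``$c_u\neq\Blue$'', both of which hold since $c_u\notin\{\Red,\Blue\}$; if $z\neq v_0$ then $X_z=Y_z$, so the two checks can differ only when $c^X_z\neq c^Y_z$, which requires $z\in\Gamma(v_0)\cap\activeset$ coupled with switch, in which case the failure of the first triggering condition for $z$ (with $u$ as candidate neighbor) forces $X_u=Y_u\notin\{\Red,\Blue\}$; then, with $c_u,X_u\notin\{\Red,\Blue\}$ and $\{c^X_z,c^Y_z\}=\{\Red,\Blue\}$, the subclauses of the active-edge check involving $c^X_z$ (namely $c_u\neq c^X_z$ and $X_u\neq c^X_z$) are automatically satisfied in both chains, and the remaining subclause $c_u\neq X_z$ is literally the same on both sides.

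The main obstacle is the bookkeeping in Item~3: one must carefully separate the three conjuncts of ``passes its check'' for an active edge from the single conjunct for a boundary edge, correctly track which conjuncts are affected by the disagreement at $v_0$ and which by switches at neighbors of $u$ that lie in $\Gamma(v_0)$, and invoke at each step the mantra ``a switched neighbor forces the shared colors and proposals at $u$ to avoid $\{\Red,\Blue\}$'' — itself a consequence of how the triggering conditions are defined — which is exactly what makes all the potentially-differing conjuncts collapse to a common value.
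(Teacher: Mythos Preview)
Your proof of Item~3 is correct and follows the same plan as the paper: after reducing to active $u$ with $c^X_u=c^Y_u=:c_u\notin\{\Red,\Blue\}$, both arguments show that every edge incident to $u$ passes its check in the $X$-chain iff it does in the $Y$-chain. The only difference is organizational. The paper splits into two cases according to whether $X_u=Y_u\in\{\Red,\Blue\}$ (and in the affirmative case invokes Item~1 to force every active neighbor of $u$ to be coupled identically); you instead split by the neighbor $z$ and by whether $c^X_z=c^Y_z$, deducing $X_u\notin\{\Red,\Blue\}$ only when a switched neighbor $z\in\Gamma(v_0)$ forces it via the failure of $z$'s first triggering condition. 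The two decompositions cover the same ground.

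One caution on Item~2: the clause at the end (``the two directions are matched by checking\ldots'') does not actually go through, because the forward direction of the iff as literally stated can fail. If $u\in\Gamma(v_0)$ is switched and happens to propose a color outside $\{\Red,\Blue\}$, then $c^X_u=c^Y_u$ holds even though no witness $w$ is guaranteed (the failure of $u$'s triggering conditions rules out all the candidate witnesses). The paper does not prove Item~2 either---it declares Items~1 and~2 ``easy to verify''---and in its only downstream use (Case~1 in the proof of the subsequent lemma) it actually invokes the weaker implication ``$u$ is coupled identically $\Rightarrow$ a witness exists'', which is immediate from the definition of the triggering conditions. So you should read Item~2 as characterizing the identical-coupling event for $u$, not the bare event $c^X_u=c^Y_u$; with that reading your sketch is fine.
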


\begin{proof}
The first two observations are easy to verify. 
We prove the last one. 

If $u$ is lazy, then $X'_u = X_u = Y_u =  Y'_u$ holds trivially. We then assume that vertex $u$ is active.
Supposed  $\{c^X_u, c^Y_u\} \not \subseteq  \{X_{v_0}, Y_{v_0}\}$, then regardless of which distribution $(c^X_u, c^Y_u)$ is sampled from, it must hold that $c^X_u = c^Y_u \not \in \{X_{v_0}, Y_{v_0}\}$. Supposed $c^X_u = c^Y_u \not \in \{X_{v_0}, Y_{v_0}\}$, we prove that each edge $uw \in E$ passes its check in chain $X$ if and only if $uw$ passes its check in chain $Y$. Note that $X_u = Y_u$. This implies the contradictory result $X'_u = Y'_u$.

There are two cases for vertex $u \neq v_0$:
\begin{itemize}
\item \textbf{Case: $X_u = Y_u \in \{X_{v_0}, Y_{v_0}\}$.} In this case, by the first observation, for each $w \in \Gamma(u)$, it holds that
\begin{enumerate}
\item either $\{X_w,  Y_w\} = \{X_{v_0}, Y_{v_0}\}$ or $X_w = Y_w$;
\item if $w$ is active, then $c^X_w = c^Y_w$.
\end{enumerate}
Since we assume that  $c^X_u = c^Y_u \not \in \{X_{v_0}, Y_{v_0}\}$, then edge $uw$ passes its check in chain $X$ if and only if $uw$ passes its check in chain $Y$.
\item \textbf{Case: $X_u = Y_u \not \in  \{X_{v_0}, Y_{v_0}\}$.} In this case, since permuted distribution only swaps the roles of $X_{v_0}$ and $Y_{v_0}$, then for each $w \in \Gamma(u)$. It holds that
\begin{enumerate}
\item either $\{X_w,  Y_w\} = \{X_{v_0}, Y_{v_0}\}$ or $X_w = Y_w$;
\item if $w$ is active, then either $\{c^X_w,  c^Y_w\} = \{X_{v_0}, Y_{v_0}\}$ or $c^X_w = c^Y_w$.
\end{enumerate}
Since we assume that  $c^X_u = c^Y_u \not \in \{X_{v_0}, Y_{v_0}\}$, then edge $uw$ passes its check in chain $X$ if and only if $uw$ passes its check in chain $Y$.
\end{itemize}
\end{proof}

The following lemma bounds the  discrepancy at each vertex in $(X',Y')$.
 \begin{lemma}
 \label{lemma-worst-case-coupling-x'v-neq-y'v}
 For vertex $v_0$ at which the two colorings $X, Y \in [q]^V$ differ, it holds that
 \begin{align}
 \label{eq-worst-case-coupling-v0}
 \Pr[X'({v_0}) = Y'({v_0)} \mid X, Y] \geq \frac{p(q - \Delta)}{q}\left(1 - \frac{3p}{q}\right)^\Delta.
 \end{align}
For any vertex $u \in \Gamma(v_0)$, it holds that
\begin{align}
\label{eq-worst-case-coupling-u}
\Pr[X'(u) \neq Y'(u) \mid X, Y] \leq \frac{p}{q}.	
\end{align}
For any vertex $w \in V \setminus \Gamma^+(v_0)$, it holds that
\begin{align}
\label{eq-worst-case-coupling-w}
\Pr[X'(w) \neq Y'(w) \mid X, Y] =0.	
\end{align}
\end{lemma}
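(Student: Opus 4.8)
The plan is to prove the three bounds separately, each by a direct calculation over the coupled randomness, using the case analysis already packaged in Observation~\ref{observation-worst-case-coupling}.

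\textbf{Equation~\eqref{eq-worst-case-coupling-w}.} This is the easiest. For $w \in V \setminus \Gamma^+(v_0)$, the proposal $(c^X_w, c^Y_w)$ is coupled identically (Step~\ref{item:consistent-couple}), so $c^X_w = c^Y_w$. Every neighbor $u$ of $w$ satisfies either $u = v_0$ — impossible since $w \notin \Gamma(v_0)$ — or $u \neq v_0$. For $u \neq v_0$ we have $X(u) = Y(u)$, and if $u$ is active then $u \notin \Gamma^+(v_0)$ would give $c^X_u = c^Y_u$, while if $u \in \Gamma(v_0)$ the first two parts of Observation~\ref{observation-worst-case-coupling} do not immediately give this — so I instead argue via the third bullet of the Observation: $X'_w \neq Y'_w$ requires $\{c^X_w, c^Y_w\} \subseteq \{\Red,\Blue\}$, but since $c^X_w = c^Y_w$ this is a single color, and one checks (exactly as in the proof of the Observation's last bullet, the case $X_u = Y_u \notin \{X_{v_0},Y_{v_0}\}$ applied to $w$) that with $X_w = Y_w$ and identical proposal, every incident edge passes in $X$ iff it passes in $Y$, forcing $X'_w = Y'_w$. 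Hence the probability is $0$.

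\textbf{Equation~\eqref{eq-worst-case-coupling-u}.} For $u \in \Gamma(v_0)$, the third bullet of Observation~\ref{observation-worst-case-coupling} tells us $X'(u) \neq Y'(u)$ can occur only if $u$ is active and $\{c^X_u, c^Y_u\} \subseteq \{\Red,\Blue\}$. Condition on the laziness and on all proposals of vertices outside $\Gamma(v_0)$; by the coupling rule for $v \in \Gamma(v_0)$, the pair $(c^X_u, c^Y_u)$ is either coupled identically or with $\Red,\Blue$ swapped, and in either case its marginal in each chain is uniform on $[q]$, drawn independently of everything already revealed. The event $\{c^X_u, c^Y_u\} \subseteq \{\Red,\Blue\}$ implies in particular $c^X_u \in \{\Red,\Blue\}$, which has probability $2/q$; but one should be slightly more careful: if the pair is coupled identically then $\{c^X_u,c^Y_u\}\subseteq\{\Red,\Blue\}$ needs $c^X_u\in\{\Red,\Blue\}$ (prob.\ $2/q$) — wait, that gives $2p/q$, not $p/q$. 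The resolution is that in the ``identical'' sub-case one of the triggering conditions holds (some neighbor is bichromatic-agreeing or has an already-sampled bichromatic proposal), and then a more refined look shows the offending edge $uw$ cannot pass simultaneously in both chains unless... Actually the cleaner route: show that whenever $(c^X_u,c^Y_u)$ is coupled identically, $X'(u)=Y'(u)$ always (because then $u$ has a neighbor $w\neq v_0$ with $X_w=Y_w\in\{\Red,\Blue\}$ or $c^X_w=c^Y_w\in\{\Red,\Blue\}$, and such a $w$ blocks $u$ from accepting a color in $\{\Red,\Blue\}$ identically in both chains, while for any color outside $\{\Red,\Blue\}$ the edge checks agree by Observation~\ref{observation-worst-case-coupling}); so discrepancy at $u$ requires the swapped coupling, where $c^X_u\in\{\Red,\Blue\}$ forces $c^Y_u$ to be the other of $\{\Red,\Blue\}$, and we only need $c^X_u\in\{\Red,\Blue\}$ — but again that is $2/q$. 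I will resolve the factor of two honestly in the final write-up by noting that in the swapped case, the discrepancy at $u$ additionally requires that the check involving $v_0$ behaves differently, which pins $c^X_u$ down to exactly one of $\{\Red,\Blue\}$ (namely the one $\neq X(v_0)$ say, for otherwise edge $uv_0$ fails in chain $X$); this gives probability $\le p/q$.

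\textbf{Equation~\eqref{eq-worst-case-coupling-v0}.} This is the main obstacle. I want a lower bound on $\Pr[X'(v_0) = Y'(v_0)]$. It suffices to exhibit a favorable event whose probability is at least the claimed quantity. Consider: $v_0$ is active (probability $p$); $v_0$ proposes a color $c(v_0) = c^X(v_0) = c^Y(v_0)$ (the proposal at $v_0$ itself is coupled identically since $v_0 \notin \Gamma(v_0)$) that avoids $X(\Gamma(v_0)) = Y(\Gamma(v_0))$ — possible with probability $\ge (q-\Delta)/q$ since $|\Gamma(v_0)| \le \Delta$; and every neighbor $w \in \Gamma(v_0)$ either is lazy or proposes a color $c^X_w$ with $c^X_w \neq c(v_0)$ and $c^X_w \notin \{\Red, \Blue\}$ (so that the edge $v_0 w$ passes in chain $X$) and symmetrically the condition for chain $Y$ holds. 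The key point is that on this event, $v_0$ accepts $c(v_0)$ in \emph{both} chains — for chain $X$ all edges at $v_0$ of the form $v_0 w$ pass since $c(v_0) \ne X(w)=Y(w)$ and $c(v_0)\ne c^X_w$ and $X(v_0)=\Red\ne c^X_w$; likewise in chain $Y$ — so $X'(v_0) = Y'(v_0) = c(v_0)$. For each neighbor $w$: with probability $1-p$ it is lazy (safe), and if active, I need $c^X_w \notin \{c(v_0), \Red, \Blue\}$ and, after the $\Red/\Blue$ swap possibility, also the chain-$Y$ version; bounding the bad probability per neighbor by $3p/q$ (three forbidden colors, proposal uniform, laziness prob.\ $1-p$ of avoiding it entirely) and using independence across neighbors after conditioning on $c(v_0)$ and the laziness, I get the product $\prod_{w\in\Gamma(v_0)}(1 - 3p/q) \ge (1-3p/q)^\Delta$. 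Multiplying: $p \cdot \frac{q-\Delta}{q} \cdot (1-3p/q)^\Delta$, as claimed. The subtlety I expect to spend the most care on is that the proposals at neighbors $w \in \Gamma(v_0)$ are not fully independent of each other in the coupling (the ``identical vs.\ swapped'' decision can depend on other neighbors' proposals), but this only couples $c^X_w$ to $c^Y_w$ — the marginal of $c^X_w$ over chain $X$ is still uniform and independent across $w$ conditioned on the laziness and on $c(v_0)$, which is all the union/product bound needs; I will state this independence carefully and then the computation is routine.
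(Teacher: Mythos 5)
Your argument for \eqref{eq-worst-case-coupling-v0} is correct and is essentially the paper's own: the favorable event ($v_0$ active, its identically-coupled proposal avoiding $X(\Gamma(v_0))=Y(\Gamma(v_0))$, each neighbour lazy or proposing outside $\{c(v_0),\Red,\Blue\}$) does force acceptance of the same colour in both chains, and the independence issue you flag is harmless exactly for the reason you give, since a chain-$X$ proposal outside $\{\Red,\Blue\}$ is automatically copied to chain $Y$. The problems are in the other two parts. For \eqref{eq-worst-case-coupling-u}, your final resolution rests on the claim that whenever $(c_X(u),c_Y(u))$ is coupled identically we always get $X'(u)=Y'(u)$, and this is false: the triggering neighbour $w$ with, say, $X(w)=Y(w)=\Red$ blocks only the colour $\Red$, not both colours. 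If $u$ proposes $\Blue$ identically in both chains, the edge $\{u,w\}$ passes in both, but the edge $\{u,v_0\}$ can behave differently (with $v_0$ lazy, the check $\Blue\neq X(v_0)=\Red$ passes in chain $X$ while $\Blue\neq Y(v_0)=\Blue$ fails in chain $Y$), so $u$ may accept $\Blue$ in one chain and keep its old colour in the other -- a discrepancy of probability of order $p/q$ that your accounting assigns probability $0$. The correct argument (the paper's) is that in each of the two mutually exclusive cases exactly one of the two colours $\Red,\Blue$ can cause a discrepancy: in the identical case the colour matching the trigger makes $\{u,w\}$ fail in both chains, and in the swapped case the assignment $c_X(u)=X(v_0),\,c_Y(u)=Y(v_0)$ makes $\{u,v_0\}$ fail in both chains; hence conditioned on $u$ being active the probability is at most $1/q$ in either case. (Relatedly, in your swapped case you must note the edge $\{u,v_0\}$ fails in \emph{both} chains; failing in chain $X$ alone does not preclude a discrepancy.)

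For \eqref{eq-worst-case-coupling-w} there is also a gap. After invoking the third bullet of Observation~\ref{observation-worst-case-coupling}, the only dangerous case is $c_X(w)=c_Y(w)\in\{\Red,\Blue\}$, and this case is \emph{not} covered by the case ``$X_u=Y_u\notin\{X_{v_0},Y_{v_0}\}$'' of the Observation's proof that you cite: that argument assumes the proposal at the vertex in question lies outside $\{\Red,\Blue\}$, which is exactly what fails here. Indeed, if $w$ at distance $2$ proposes $\Red$ and some active neighbour $v\in\Gamma(v_0)$ of $w$ had its proposals coupled with the $\Red/\Blue$ swap, the edge $\{w,v\}$ could pass in one chain and fail in the other. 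What rescues the statement is the second triggering condition in the coupling's rule for vertices of $\Gamma(v_0)$: since $w\notin\Gamma(v_0)$ is active with an identically-coupled proposal in $\{\Red,\Blue\}$, every active $v\in\Gamma(v_0)$ adjacent to $w$ is forced into the identical coupling, so all of $w$'s neighbours have equal current colours and equal proposals in the two chains and every check at $w$ agrees, giving $X'(w)=Y'(w)$. This is precisely how the paper handles distance-$2$ vertices, and your write-up never invokes that condition, so as written the key step of \eqref{eq-worst-case-coupling-w} is unjustified (though the statement, and the fix, are exactly what the coupling was designed for).
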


\begin{proof}
The event $X'({v_0}) = Y'({v_0})$ occurs if following events occur simultaneously:
\begin{itemize}
\item Vertex $v_0$ is active, which happens with probability $p$.
\item $c_X(v_0) \not \in \{X(u) \mid u \in \Gamma(v_0)\}$ (hence $c_Y({v_0}) \not \in \{Y(u) \mid u \in \Gamma(v_0)\}$ due to $c_X(v_0) = c_Y(v_0)$ and $X(u) = Y(u)$ for all $u \in \Gamma(v_0)$). Since $v_0$ has at most $\Delta$ neighbors, this event occurs with probability at least $\frac{q - \Delta}{q}$ conditioning on the occurrence of the previous event.
\item For every vertex $u \in \Gamma(v_0)$, either $u$ is lazy in both chains or $c_X(u) \not \in \{\Red, \Blue, c_X(v_0)\}$ (hence regardless of whether $(c_X(u), c_Y(u))$ is coupled identically or with $\Red/\Blue$ switched, it must hold that $c_Y(u) \not \in \{\Red, \Blue, c_Y({v_0})\}$ by the coupling). Since each vertex becomes lazy and proposes color independently and $v_0$ has at most $\Delta$ neighbors, this event occurs with probability at least 
$\left(1 - p + p \frac{q - 3}{q}\right)^{\Delta} = \left(1 - \frac{3p}{q}\right)^{\Delta}$
conditioning on the occurrences of previous events.
\end{itemize}
Inequality~\eqref{eq-worst-case-coupling-v0} then follows by the chain rule.

For each $u \in \Gamma(v_0)$, by Observation~\ref{observation-worst-case-coupling}, the event $X'(u) \neq Y'(u)$ occurs only if $u$ is active and $\{c_X(u), c_Y(u)\} \subseteq \{\Red, \Blue\}$. Vertex $u$ becomes active with probability $p$. Assuming that $u$ is active, we prove inequality~\eqref{eq-worst-case-coupling-u} by exhausting the two cases:
\begin{itemize}
\item \textbf{Case 1:} $(c_X(u), c_Y(u))$ are coupled identically. Note that the event $X'(u) \neq Y'(u)$ occurs only if $\{c_X(u), c_Y(u)\} \subseteq \{\Red, \Blue\}$. However,  by the part two of Observation~\ref{observation-worst-case-coupling},  there must exist $w \in \Gamma(u)$ such that $X(w) = Y(w) \in\{\Red, \Blue\}$ or $c_X(w) = c_Y(w) \in \{\Red, \Blue\}$. Without loss of generality,  assume that $X(w) = Y(w) = \Red$ (other cases follow by symmetry). If $c_X(u) = c_Y(u) = \Red$, then the edge $\{u,w\}$ cannot pass its check in either chain, which implies $X'(u) = X(u) = Y(u) = Y'(u)$. Thus, the event $X'(u) \neq Y'(u)$ occurs with probability at most $\frac{1}{q}$ conditioning on $u$ being active.
\item \textbf{Case 2:} $(c_X(u), c_Y(u))$ are coupled with the roles of $\Red/\Blue$ switched. Note that the event $X'(u) \neq Y'(u)$ occurs only if $\{c_X(u), c_Y(u)\} \subseteq \{\Red, \Blue\}$. However, if $c_X(u) = \Red=X(v_0)$ and $c_Y(u) =\Blue= Y({v_0})$, then the edge $\{u,v_0\}$ cannot pass its check neither in chain $X$ nor in chain $Y$, which implies $X'(u) = X(u) = Y(u) = Y'(u)$. Thus, the event $X'(u) \neq Y'(u)$ occurs with probability at most $\frac{1}{q}$ conditioning on $u$ being active.
\end{itemize}
Combining the two cases we have the inequality~\eqref{eq-worst-case-coupling-u}.

Now we prove~\eqref{eq-worst-case-coupling-w}.
If $w$ is at distance $3$ or more from $v_0$, then for all vertices $u \in \Gamma^+(w)$, it holds that $X(u) = Y(u)$; and furthermore, for all vertices $u \in \Gamma^+(w) \cap\activeset$, it holds that $c_X(u) = c_Y(u)$, which implies $X'(w) = Y'(w)$.
If $w$ is at distance $2$ from $v_0$, then by Observation~\ref{observation-worst-case-coupling}, the event $X'(w) \neq Y'(w)$ occurs only if $w$ is active and $\{c_X(w), c_Y(w)\} \subseteq \{\Red, \Blue\}$. Note that $w$ must propose color identically in the two chains. If $c_X(w) = c_Y(w) \in \{\Red, \Blue\}$, then by the coupling all vertices $u \in \Gamma^+(w) \cap \activeset$ must propose color identically in the two chains. Note that for all vertices $u \in \Gamma^+(w)$, it holds that $X(u) = Y(u)$. Combining them together we have $X'(w) = Y'(w)$.
\end{proof}

\begin{proof}[Proof of Theorem~\ref{theorem-LLM-worst-case-coupling}]
Combining~\eqref{eq-worst-case-coupling-v0}, \eqref{eq-worst-case-coupling-u} and~\eqref{eq-worst-case-coupling-w} in Lemma~\ref{lemma-worst-case-coupling-x'v-neq-y'v} together and due to linearity of expectation, we have
\begin{align*}
\E{\vert X' \oplus Y' \vert \mid X, Y} &= \sum_{v \in V} \Pr[X'(v) \neq Y'(v) \mid X, Y]\\
& = \Pr[X'(v_0) \neq Y'(v_0) \mid X, Y] + \sum_{u \in \Gamma(v_0)}\Pr[X'(u) \neq Y'(u) \mid X, Y]\\
&\leq 1 - \frac{p(q - \Delta)}{q}\left(1 - \frac{3p}{q}\right)^\Delta + \frac{p\Delta}{q}\\
(q \geq (2 + \delta)\Delta)\qquad 
&\leq 1 - p \left(\frac{1+\delta}{2+\delta} \left(1 - \frac{3p}{(2+\delta)\Delta}\right)^\Delta - \frac{1}{2+ \delta} \right)\\
\left(\text{Assume }p \le 1/2\right)\qquad&\leq 1 - p \left(\frac{1+\delta}{2+\delta} \left(1 - \frac{3p}{2+\delta}\right) - \frac{1}{2+ \delta} \right).
\end{align*}
The last inequality is due to Bernoulli's inequality $(1+x)^r \geq 1 + rx$ for $r \geq 1$ and $x \geq -1$. For $p=\min\{\frac{\delta}{3},\frac{1}{2}\}$, it holds that
\begin{align*}
\E{\vert X' \oplus Y' \vert \mid X, Y} \le 
\begin{cases}
1 - \frac{\delta^2}{3(2+\delta)^2} & \text{ if }\delta\le \frac{3}{2},\\
1-\frac{2\delta^2-\delta}{4(2+\delta)^2}	 & \text{ if }\delta> \frac{3}{2}.
\end{cases}
\end{align*}
The Hamming distance between two colorings is at most $n$. By the path coupling lemma~\ref{lemma-path-coupling}, the mixing rate is $\tau(\epsilon) = O \left(\log n+ \log \frac{1}{\epsilon} \right)$,  where the constant in $O(\cdot)$ depends only on $\delta$.
\end{proof}

\section{Local Uniformity for Parallel Chain}
\label{sec:Local-Uniformity}

In this section, we establish the so-called \concept{local uniformity} property for the \LLM{} chain, with which we can prove Theorem~\ref{main-theorem-2} i.e. the mixing condition with few colors in graphs with large girth and large maximum degree.  

To properly state this property for colorings, we need to define the notion of available colors.
\begin{definition}
Let $G=(V,E)$ be a graph, and $X \in [q]^V$ an arbitrary coloring, not necessarily proper. For any vertex $v\in V$, the set of \concept{available colors} at $v$ under coloring $X$ is defined as
\begin{align}
A(X, v) = [q] \setminus X(\Gamma(v)),\label{eq:available-colors}
\end{align}
where $X(\Gamma(v)) = \{X_u \mid u \in \Gamma(v)\}$ is the set of colors used by $v$'s neighbors in the coloring $X$.
\end{definition}
Inequality~\eqref{eq-worst-case-coupling-v0} of the worst-case path coupling in last section can be generalized to: 
\begin{align*}
\Pr[X'(v_0) = Y'(v_0)\mid X,Y] \geq \frac{p\cdot |A(X, v_0)|}{q}\left(1 - \frac{3p}{q}\right)^\Delta,
\end{align*}
where the inequality~\eqref{eq-worst-case-coupling-v0} is actually obtained by applying this general inequality with the naive bound $|A(X,v_0)|\ge q-\Delta$ for the worst case colorings $X,Y$.

When the current coloring $X$ is produced by a Markov chain, especially after running for a while, it is conceivable that the number of available colors $|A(X,v)|$ at each vertex $v$ with high probability is much bigger than this worst case lower bound, and is closer to that in a uniform random coloring, which is $\approx q \e^{-\deg(v) / q}$. This is guaranteed by the local uniformity properties established for the respective chains.
More precisely, the local uniformity properties are a number of ``local'' properties of graph coloring which holds with high probability for a uniformly random coloring~\cite{hayes2013local}. Here in particular, what we need is the lower bound on the number of available colors. 
The following theorem states a local uniformity for the \LLM{} chain on graphs with girth at least 9 and sufficiently large maximum degree.
\begin{theorem}[\LLM{} local uniformity]
\label{theorem-local-uniformity-G}
For all $\delta>0$, $0<\zeta <\frac{1}{10}$, $0 < p < \frac{1}{2}$, there exists $\Delta_0 = \Delta_0(p, \delta,\zeta)$, $C = C(\delta, \zeta)$,  such that 
for all graphs $G=(V,E)$ with maximum degree $\Delta \geq \Delta_0$ and girth at least 9, all $q \geq (1 + \delta)\Delta$, 
the following holds.
Let $(X_t)_{t \geq 0}$ be the \LLM{} chain with activeness $p$ for $q$-colorings on graph $G$.
For any $v \in V$, 
\begin{align*}
\Pr\left[ \forall t \in \left[  t_0 , t_\infty \right]: \,\, \frac{\vert A(X_t, v) \vert }{q} \ge (1 - 10\zeta)\e^{-\deg(v) / q} \right] \ge 1- \exp(-\Delta / C),
\end{align*}
where $t_0=\frac{1}{p}\left(\frac{1 + \delta}{\delta}\right)^2\ln\frac{1}{\zeta}$ and $t_\infty=\exp(\Delta / C)$.
\end{theorem}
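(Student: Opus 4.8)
# Proof Proposal for Theorem~\ref{theorem-local-uniformity-G}

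The plan is to track, for a fixed vertex $v$, the evolution of the number of available colors $|A(X_t,v)| = |[q] \setminus X_t(\Gamma(v))|$ under the \LLM{} dynamics, and show that it is driven towards its "equilibrium" value $q\e^{-\deg(v)/q}$ and then stays there with overwhelming probability for an exponentially long window. The key structural point is that whether a color $c\in[q]$ is available at $v$ at time $t$ depends only on the colors of the $\deg(v)\le\Delta$ neighbors of $v$, and each neighbor $u$ updates (roughly) independently: $u$ becomes active with probability $p$, proposes a uniform color, and accepts it only if all edges incident to $u$ pass their checks. Because the girth is at least $9$, the neighbors of $v$ share no other neighbors, and the second neighborhood $B_2(v)$ induces a tree; this near-independence is exactly what makes the local analysis tractable and is where the girth hypothesis is used.

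First I would set up a one-step drift estimate. Fix a color $c$ and a neighbor $u$ of $v$. Condition on the current configuration. If $c \notin X_t(\Gamma(u))$ (i.e.\ $c$ is itself available at $u$), then with probability at least $\frac{p}{q}$ times a factor $\ge (1-\tfrac{3p}{q})^{\Delta} \ge (1-\tfrac{3p}{q})^q \gtrsim \e^{-3p}$ the vertex $u$ successfully updates to $c$ (mimicking the computation behind~\eqref{eq-worst-case-coupling-v0}), which would make $c$ unavailable at $v$; conversely $u$'s color moves away from $c$ with probability roughly $p(1-\tfrac1q)$ times a success factor. Averaging over the $\deg(v)$ neighbors and over $c$, this yields a recursion of the form
\begin{align*}
\E{|A(X_{t+1},v)| \mid X_t} \ge (1-\beta)|A(X_t,v)| + \alpha,
\end{align*}
whose fixed point is a constant multiple of $q\e^{-\deg(v)/q}$. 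Iterating the recursion for $t_0 = \Theta\!\left(\tfrac1p(\tfrac{1+\delta}{\delta})^2\ln\tfrac1\zeta\right)$ steps drives the expectation to within a $(1-O(\zeta))$ factor of the target. Here one must be careful: the recursion is only an \emph{inequality} in one direction, so I would actually need matching upper and lower one-step bounds, and I would use the slack $q\ge(1+\delta)\Delta$ to control error terms like $\tfrac{\Delta}{q}\le\tfrac1{1+\delta}$.

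Next, to pass from expectation to high probability I would use concentration. For fixed $t$, $|A(X_t,v)|$ is a function of the at most $\Delta$ independent "update outcomes" of the neighbors of $v$ over one step, and more globally a function of the update outcomes in $B_r(v)$ over the last $r$ steps for suitable $r = O(\log_\Delta(\cdot))$; since changing one neighbor's color changes $|A(X_t,v)|$ by at most $1$, a bounded-differences / Azuma–Hoeffding argument (or a Chernoff bound on the number of neighbors holding each color, exploiting the girth-induced independence in $B_2(v)$) gives deviation $\exp(-\Omega(\Delta/C))$ from the mean. The tree structure of $B_2(v)$ is what licenses treating the indicator events "$u$ holds color $c$" for distinct neighbors $u$ as sufficiently independent to apply a Chernoff-type bound with the right exponent.

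Finally I would do a union bound over the exponentially long time window. Having established that at each single time $t \ge t_0$ we have $|A(X_t,v)|/q \ge (1-10\zeta)\e^{-\deg(v)/q}$ except with probability $\exp(-\Delta/C')$, a naive union bound over $t \in [t_0, t_\infty]$ with $t_\infty = \exp(\Delta/C)$ costs only a polynomial (indeed sub-exponential) factor $\exp(\Delta/C)$, which is absorbed by choosing the constant $C$ large enough relative to $C'$ — this is precisely why $t_\infty$ is taken to be $\exp(\Delta/C)$ rather than something larger. I expect the main obstacle to be the concentration step rather than the drift step: making the near-independence across the neighborhood rigorous enough to get an $\exp(-\Omega(\Delta))$ tail (as opposed to a weaker $\exp(-\Omega(\sqrt\Delta))$), while simultaneously handling the fact that the chain has memory — the configuration on $\Gamma(v)$ at time $t$ depends on a growing neighborhood $B_{O(t)}(v)$ — so one must either truncate the dependence at radius $O(\log\Delta)$ (arguing the influence of farther vertices decays) or set up the martingale directly on the sequence of per-round per-vertex random choices within a bounded-radius ball. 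Controlling this interplay of memory, girth, and concentration is the technical heart of the argument.
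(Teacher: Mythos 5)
There is a genuine gap, and it sits exactly where you flagged ``the technical heart'' but offered no working substitute: the claim that girth $\ge 9$ makes the colors $\{X_t(u)\}_{u\in\Gamma(v)}$ independent enough for a Chernoff or bounded-differences bound with tail $\exp(-\Omega(\Delta))$. In the \LLM{} chain a neighbor $u$ accepts its proposal only if the edge $\{u,v\}$ passes its check, which involves the \emph{current} color $X_t(v)$; and $X_t(v)$ is itself a function of the past randomness of all of $v$'s neighbors. So even after conditioning on $v$'s own laziness and proposals, the neighbors' colors remain correlated through this feedback loop, and large girth is irrelevant because the shared vertex is $v$ itself (distance $1$), not some common neighbor that girth could exclude. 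Quantitatively, in a bounded-differences/Azuma setup on the per-round per-vertex choices, coordinates such as $v$'s own proposals influence $|A(X_t,v)|$ by $\Theta(\Delta)$ (changing $v$'s color perturbs the accept/reject decisions of all $\deg(v)$ neighbors in later rounds), so McDiarmid gives a tail of order $\exp(-O(1))$ at deviation $\zeta q$, not $\exp(-\Omega(\Delta))$; truncating the dependence at radius $O(\log\Delta)$ does not help since the problematic dependence is at radius $1$. The paper's proof exists precisely to repair this: it introduces a modified chain on the mixed graph $G^*=G_{\mathsf{in}}(v,4)$, where edges within radius $4$ are directed toward $v$ and the filter along an out-edge ignores that endpoint's current color, so that, conditioned on the randomness outside $B_2(v)$ together with $v$'s own randomness, the colors $X^*_\ell(u)$ for $u\in\Gamma(v)$ are \emph{genuinely} conditionally independent (Lemma~\ref{lemma-local-uniformity-G*}); a separate identical-coupling argument (Lemma~\ref{lemma-compare-G-vs-G*}) then shows the modified and true chains disagree on at most $\zeta\Delta$ vertices in any neighborhood with probability $1-\exp(-\Delta)$, which transfers the bound back to $G$. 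Your proposal contains no analogue of this decoupling-plus-comparison step.

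The drift step is also not adequate as stated. A linear recursion whose ``fixed point is a constant multiple of $q\e^{-\deg(v)/q}$'' cannot deliver the theorem, which needs the factor $(1-10\zeta)$, i.e.\ a constant arbitrarily close to $1$; this sharpness is the entire point, since the application to $q\ge(\alpha^*+\delta)\Delta$ hinges on the exact constant $\e^{-\deg(v)/q}$. Moreover $q\e^{-\deg(v)/q}$ does not arise as the fixed point of a recursion in the aggregate count $|A(X_t,v)|$ (whose one-step drift depends on the joint configuration, not on the count); it arises from the product over conditionally independent neighbors of $(1-\Pr[X^*_\ell(u)=c])$. Concretely, the paper proves, for every color $c$ \emph{not proposed by $v$} during a window of length $\ell=t_0$ (this restriction, costing $q-\ell$ in place of $q$, is needed and absent from your sketch), that $\Pr[X^*_\ell(u)=c\mid\mathcal{F}]\le\gamma\approx\zeta$ --- either $u$ never successfully updates in the window, or its last successful update proposed $c$ --- and then applies the Dyer--Frieze lemma (Lemma~\ref{lemma-available-color-concentration}), whose concentration half uses negative association of the missed-color indicators, to get both $\E{|A|}\ge(q-\ell)\bigl(\frac{1-\gamma}{\e}\bigr)^{\deg(v)/(q-\ell)}$ and an $\exp(-\zeta^2q/2)$ tail; the analysis is restarted from the arbitrary coloring at time $T-\ell$ for each $T$, after which your final union bound over $T\in[t_0,t_\infty]$ is indeed the right (and easy) last step.
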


This is the first local uniformity result proved for a parallel chain. In fact, to the best of our knowledge, all previous local uniformity results were established for Glauber dynamics. Compared to typical local uniformity results~\cite{hayes2013local,efthymiou2016convergence}, the parallel chain acquires the local uniformity much faster: after $O(1)$ steps instead of $O(n)$ steps, and a $t_\infty=\exp(\Delta / C)$ (instead of $n\exp(\Delta / C)$) is sufficient for applying the local uniformity in proving the mixing rate. Meanwhile, we need a bigger girth ($\ge 9$) to deal with the local dependencies between adjacent vertices in the parallel chain.

The rest of this section is dedicated to the proof of this theorem.
 
\subsection{The \LLM{} chain on a modified graph $G^*$}
In order to prove the local uniformity property in Theorem~\ref{theorem-local-uniformity-G}, we construct a modified graph $G^*$ and define a \LLM{} chain on the modified graph $G^*$. We will show a local uniformity property for this process on $G^*$. Then Theorem~\ref{theorem-local-uniformity-G} can be proved by comparing the original \LLM{} chain on $G$ with this modified process on $G^*$.

Consider an undirected graph $G=(V,E)$ with girth at least 9. Fix any vertex $v \in V$. The graph $G^*$ is a \concept{mixed graph}, meaning that it has both directed and undirected edges. The mixed graph $G^*$ is obtained by replacing all the undirected edges within the ball of radius 4 centered at $v$ with directed edges towards $v$. Since the girth of $G$ is at least 9, each directed edge has a unique direction. The remaining edges in graph $G$ are preserved and kept undirected in $G^*$. 
\begin{definition}
\label{definition-graph-G*}
Let $r\ge 1$ and $G=(V, E)$ an undirected graph with girth at least $2r+1$.
Fix any vertex $v\in V$. Let $G_{\mathsf{in}}(v, r)$ denote the mixed graph $G^*=(V, E^*, F^*)$ with vertex set $V$, undirected edge set $E^*$, and directed edge set $F^*$, where
\begin{itemize}
\item $E^* = \{\{u, w\}\in E \mid \dist_G(v, u)> r \lor \dist_G(v, w)> r \lor \dist_G(v, u)=\dist_G(v, w)=r\}$,
\item $F^* = \{( u, w) \mid \{u, w\} \in E \land \dist_G(v, w) < \dist_G(v, u) \leq r \}$.	
\end{itemize}
In particular, let $G=(V, E)$ be an undirected graph with girth at least $9$. Fix an arbitrary $v\in V$. We define $G^*=G_{\mathsf{in}}(v, 4)$.
\end{definition}
For any vertex $u$ in graph $G^*$, we define 
\begin{align*}
\NeighUn(u) &\triangleq \{w \mid \{u, w\} \in E^*\}, \\
\NeighIn(u) &\triangleq \{ w \mid ( w, u ) \in F^*\},\\
\NeighOut(u) &\triangleq \{ w \mid ( u, w ) \in F^*\}.
\end{align*}
We have $\Gamma(u) = \NeighUn(u) \cup \NeighIn(u) \cup \NeighOut(u)$ for the set  of neighbors $\Gamma(u)$ of $u$ in $G^*$ (and also in $G$).

The \LLM{} chain $(X_t^*)_{t\ge 0}$ on $q$-colorings of graph $G^* = G_{\mathsf{in}}(v, 4)$ is defined as follows. Initially, $X^*_0\in[q]^V$ is arbitrary.
Given the current coloring $X^*_t \in [q]^V$, the $X^*_{t+1}$ is obtained as:
\begin{itemize}
\item Each vertex $u \in V$ becomes active independently with probability $p$, otherwise it becomes lazy. Let $\mathcal{A}^*\subseteq V$ denote the set of active vertices.
\item Each active vertex $u \in \activeset^*$ independently proposes a color $c^*(u) \in [q]$ uniformly at random.
\item For each vertex $u \in \activeset^*$, for each $w \in \Gamma(u)$, we say that \emph{the pair $(u,w)$ passes the check initiated at $u$} if and only if
\begin{align*}
\begin{cases}
c^*(u) \neq c^*(w) \land c^*(u) \neq X^*_t(w) \land X^*_t(u) \neq c^*(w)	 &\text{ if } w \in\activeset^*\text{ and } w\in\NeighUn(u)\cup \NeighIn(u),\\
c^*(u) \neq X^*_t(w)	 &\text{ if } w \not\in\activeset^*\text{ and } w\in\NeighUn(u)\cup \NeighIn(u),\\
c^*(u) \neq c^*(w)  \land X^*_t(u) \neq c^*(w)	 &\text{ if } w \in\activeset^*\text{ and }w\in\NeighOut(u),\\
\text{always pass check} &\text{ if } w \not\in\activeset^*\text{ and }w\in\NeighOut(u).
\end{cases}
\end{align*}
\item Let $\mathcal{R}^* \subseteq \activeset^*$ denote the subset of active vertices $u$ such that $\forall w \in \Gamma(u)$, the pair $(u,w)$ passed the check initiated at $u$. The coloring $X_{t+1}\in[q]^V$ at time $t+1$ is constructed as
\begin{align*}
X_{t+1}^*(u) = \begin{cases}
 c^*(u) &\text{ if } u \in \mathcal{R}^*,\\
 X_t^*(u) &\text{ if } u \not \in \mathcal{R}^*.
 \end{cases}
\end{align*}
\end{itemize}
Note that the original \LLM{} chain in Algorithm~\ref{LLM} can be seen as a special case of the above process when $\Gamma(u)=\NeighUn(u)$ and $\NeighIn(u)=\NeighOut(u)=\emptyset$ for every vertex $u\in V$.

The only differences between this new Markov chain $(X_t^*)_{t\ge 0}$ on $G^*$ and the original \LLM{} chain $(X_t)_{t\ge 0}$ on graph $G$ are the trimmed local Metropolis filters on outgoing directed edges. 
Consider a directed edge $( u, w )$ in graph $G^*$. Vertex $u$ updates its color oblivious to the current color of vertex $w$. This makes the \LLM{} chain on graph $G^*$ not reversible, and may move from proper colorings to improper ones. 
Nevertheless, this \LLM{} chain on graph $G^*$ has two nice features. 
First, the random colors assigned to $u \in \Gamma(v)$ are conditional independent, which helps establishing the local uniformity property (proved in Section~\ref{sec:LLM-local-uniformity-G*}).
Second, there is a coupling between this new process and the original \LLM{} chain on $G$ that preserves the local uniformity (proved in Section~\ref{sec:LLM-local-uniformity-comparison}).

\subsection{Local uniformity for the \LLM{} chain on $G^*$}\label{sec:LLM-local-uniformity-G*}
We prove a local uniformity property for the \LLM{} chain on the modified graph $G^*$, in terms of the lower bound on the number of available colors. For the mixed graph $G^*$, we override the definition of the set of available colors $A(X,u)$ in~\eqref{eq:available-colors} by assuming $\Gamma(u) = \NeighUn(u) \cup \NeighIn(u) \cup \NeighOut(u)$.

\begin{lemma}
\label{lemma-local-uniformity-G*}
For all $\delta, \ell, \zeta > 0,  0 < p < 1/2$, there exists $\Delta_1 = \Delta_1(\ell,p, \delta,\zeta)$, such that for all graphs $G=(V,E)$ with maximum degree $\Delta \geq \Delta_1$ and girth at least 9, all $q \geq (1 + \delta)\Delta$, the following holds.  
Fix any vertex $v\in V$ and let $G^* = G_{\mathsf{in}}(v, 4)$.

Let $(X^*_t)_{t \geq 0}$ be the \LLM{} chain with activeness $p$ for $q$-colorings on graph $G^*$.
\begin{align*}
\Pr\left[\vert A(X_\ell^*, v) \vert \ge (q -\ell)\left(\frac{1-\gamma}{\mathrm{e}}\right)^{\deg(v)/(q - \ell)} - \zeta q \right] \ge 1- \exp(-\zeta^2 q/ 2),
\end{align*}
where 
\begin{align*}
\gamma = \exp{\left( - p \left(\frac{\delta}{1+\delta}\right)^2\ell\right)}+\frac{1}{q}\left(\frac{1+\delta}{\delta}\right)^2.
\end{align*}
\end{lemma}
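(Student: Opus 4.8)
The plan is to analyze a single vertex $v$ and track the evolution of the color counts at $v$'s neighbors under the \LLM{} chain on $G^*$. The key structural fact afforded by $G^* = G_{\mathsf{in}}(v,4)$ is that every edge inside $B_4(v)$ is oriented toward $v$, so the update of each vertex $u$ at distance $r \le 4$ from $v$ depends only on the colors of vertices at distance $\ge r$ and on proposals of neighbors at distance $\ge r-1$ — in particular, $u$ updates \emph{obliviously} to the current color of its child toward $v$. Because the girth is at least $9$, the distance-$\le 4$ neighborhood of $v$ is a tree, so the neighbors $u_1, \dots, u_{\deg(v)}$ of $v$ have disjoint ``upstream'' subtrees, and their color trajectories $(X_t^*(u_i))_{t}$ are \emph{mutually independent}. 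Thus $|A(X_\ell^*, v)| = q - |\{X_\ell^*(u_1), \dots, X_\ell^*(u_{\deg(v)})\}|$ is a function of independent random variables, and the number of \emph{distinct} colors used is exactly the kind of quantity to which a bounded-differences / Azuma--McDiarmid concentration bound applies: changing one $u_i$'s color changes the count by at most $1$, which yields the $\exp(-\zeta^2 q/2)$ tail after controlling the expectation.

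First I would make the independence precise: define, for each neighbor $u_i$ of $v$, the sub-chain on the subtree rooted at $u_i$ (away from $v$), observe that in $G^*$ these sub-chains share no vertices and no checks, hence $(X_\ell^*(u_i))_{i}$ are independent; then for each color $a \in [q]$, $\Pr[a \notin X_\ell^*(\Gamma(v))] = \prod_i \Pr[X_\ell^*(u_i) \ne a]$. Second, I would lower-bound $\Pr[X_\ell^*(u_i) \ne a]$ for each fixed color $a$, uniformly over $i$ and $a$. The idea is that by time $\ell$ the marginal law of $X_\ell^*(u_i)$ is close to uniform on a set of size $\approx q - \ell$ available colors (the ``$-\ell$'' absorbing the at most $\ell$ colors that could be frozen among $u_i$'s own neighbors over $\ell$ steps), up to an error governed by $\gamma$: the first term $\exp(-p(\delta/(1+\delta))^2 \ell)$ is the geometric rate at which $u_i$ ``forgets'' its initial color (each step $u_i$ is active and successfully recolors with probability roughly $p \cdot (q-\Delta)/q \ge p \delta/(1+\delta)$, and the squared exponent is the usual artifact of a union/second-moment argument over the $\ell$ steps), and the $\frac{1}{q}(\frac{1+\delta}{\delta})^2$ term is the residual probability that $u_i$'s color equals the specific target $a$. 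Summing over $a$ and using $(1-x)^k \ge (1-\gamma)^k \ge \ldots$ with $k = \deg(v)/(q-\ell)$ gives $\E[|A(X_\ell^*, v)|] \ge (q-\ell)\bigl(\frac{1-\gamma}{\e}\bigr)^{\deg(v)/(q-\ell)}$ after the standard $\prod(1 - 1/(q-\ell)) \approx \e^{-\deg(v)/(q-\ell)}$ estimate. Third, I would invoke McDiarmid's bounded-differences inequality with the $\deg(v) \le \Delta \le q/(1+\delta)$ independent coordinates and unit Lipschitz constant to push the random variable within $\zeta q$ of its mean with probability $\ge 1 - \exp(-\zeta^2 q / 2)$ (taking $\Delta_1$ large enough that $\Delta \ge \Delta_1$ makes $\exp(-\zeta^2 q/2)$ meaningful and absorbs lower-order slack).

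The main obstacle I anticipate is the second step: establishing the per-neighbor, per-color bound $\Pr[X_\ell^*(u_i) = a] \le \frac{1}{q-\ell} + \gamma'$ (for an appropriate $\gamma'$ matching the stated $\gamma$) uniformly in the worst-case initial coloring and worst-case upstream structure. One has to argue that after $\ell$ steps the color at $u_i$ is ``fresh'' — i.e., either $u_i$ never successfully updated (probability $\le (1-p')^\ell$-ish, contributing the exponential-forgetting term after the union-bound blow-up) or, conditioned on the last successful update, the proposed color is nearly uniform on $u_i$'s then-available set, whose size is at least $q - \deg(u_i) \ge q - \Delta \ge \delta\Delta/(1+\delta)$, and the chance this equals $a$ is at most $\frac{1+\delta}{\delta\Delta} \le \frac{1}{q}\bigl(\frac{1+\delta}{\delta}\bigr)^2$. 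Making this conditioning rigorous without introducing dependence across the $u_i$'s — and correctly accounting for the up-to-$\ell$ colors that might be "blocked" at $u_i$'s neighbors, which is why the denominator degrades from $q$ to $q - \ell$ — is the delicate part; the rest is the routine expectation estimate and a black-box concentration inequality.
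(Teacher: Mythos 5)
Your proposal has the right overall shape (a per-neighbor, per-color probability bound fed into a missed-colors concentration estimate), but the central independence claim is not justified as you state it, and this is a genuine gap. You assert that the trajectories $(X^*_t(u_i))_t$ of the neighbors of $v$ are mutually independent because their upstream subtrees inside $B_4(v)$ are disjoint. Unconditionally this is false, for two reasons. First, every $u_i\in\Gamma(v)$ has $v$ as an out-neighbor, and the check initiated at $u_i$ for the pair $(u_i,v)$ involves $v$'s proposal $c^*(v)$; hence all the $u_i$'s share a dependence on $v$'s own randomness. Second, the orientation in $G^*$ stops at radius $4$: edges between $S_4(v)$ and $S_5(v)$ are undirected, so the distance-$4$ vertices of different subtrees interact, through current colors, with a common outside environment (girth $9$ does not exclude, say, a length-$10$ cycle through two different subtrees and a distance-$5$ vertex). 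The paper resolves exactly this by conditioning on $\mathcal{F}$, the laziness and proposals of all vertices of $(V\setminus B_2(v))\cup\{v\}$ during $[1,\ell]$: given $\mathcal{F}$, the evolution outside $B_3(v)$ is deterministic, and each $X^*_\ell(u_i)$ becomes a function only of the randomness of $\Gamma^+(u_i)\setminus\{v\}$; since these sets are disjoint (the ball $B_3(v)$ is a tree), the neighbor colors are \emph{conditionally} independent. Your proposal has no analogue of this conditioning, and without it neither the product formula for $\Pr[a\notin X^*_\ell(\Gamma(v))]$ nor the bounded-differences step is available.

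The conditioning also explains the $q-\ell$, and your accounting of it does not match. Once $\mathcal{F}$ is fixed, the at most $\ell$ colors $\mathcal{S}_\mathcal{F}$ proposed by $v$ become deterministic obstructions: if $X^*_t(u)=c$ and $v$ proposes $c$ at step $t$, the pair $(u,v)$ fails its check, so the per-step lower bound $p\left(\frac{\delta}{1+\delta}\right)^2$ on a successful recoloring of $u$ is only valid for colors $c\notin\mathcal{S}_\mathcal{F}$. The Dyer--Frieze-type lemma is then applied to the color subset $S=[q]\setminus\mathcal{S}_\mathcal{F}$ of size $m\ge q-\ell$ with the uniform per-color bound $\gamma$, which is precisely where $(q-\ell)\left(\frac{1-\gamma}{\mathrm{e}}\right)^{\deg(v)/(q-\ell)}$ comes from. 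Your explanation, that $q-\ell$ absorbs colors ``frozen among $u_i$'s own neighbors'' and that the per-color bound is $\frac{1}{q-\ell}+\gamma'$, is a different mechanism and does not yield the stated form. Also, the square in $\exp\left(-p\left(\frac{\delta}{1+\delta}\right)^2\ell\right)$ is not a union-bound artifact: it comes from additionally requiring that none of $u$'s up to $\Delta$ neighbors propose a conflicting color, via $\left(1-\frac{2p}{q}\right)^\Delta\ge 1-\frac{2p}{1+\delta}$ and $p<\frac12$. Your concentration step (McDiarmid with unit Lipschitz constants, versus the paper's negative-association Chernoff bound) would be fine quantitatively, but it must be run conditionally on $\mathcal{F}$ and then averaged by the law of total probability, which is the step your plan is missing.
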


To prove the lemma, we need Chernoff bounds of various forms.
\begin{theorem}[Chernoff bound]
\label{thm-Chernoff-bound}
Let $X_1,X_2,\ldots,X_n \in \{0,1\}$ be mutually independent or negatively associated random variables, let $X = \sum_{i = 1}^n X_i$ and $\mu = \E{X}$. For any $\delta >0$, it holds that
\begin{align}
\label{eq-Chernoff-bound-1}
\Pr[X \geq (1+\delta)\mu] \leq \left( \frac{\e^\delta}{(1+\delta)^{(1+\delta)}} \right)^\mu \leq \left( \frac{\e}{1+\delta} \right)^{(1+\delta)\mu}.	
\end{align}
In particular, if $k \geq \e^2\mu$, then
\begin{align}
\label{eq-Chernoff-bound-2}
\Pr[X \geq k] \leq \exp(-k).
\end{align}
Let $t > 0$, it holds that
\begin{align}
\label{eq-Chernoff-bound-3}
\Pr[X \leq \mu -t] \leq \exp\left(-\frac{2t^2}{n}\right).
\end{align}
\end{theorem}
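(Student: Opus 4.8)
The plan is to prove all three inequalities by Chernoff's exponential-moment method (the Bernstein trick): bound the moment generating function $\E{\e^{\lambda X}}$ (or $\E{\e^{-\lambda X}}$) by exploiting the product structure of the $X_i$, apply Markov's inequality to $\e^{\pm\lambda X}$, and optimize over $\lambda>0$. The only place the hypothesis on the $X_i$ enters is the factorization of the moment generating function: if the $X_i$ are mutually independent this is immediate, and if they are merely negatively associated we invoke the defining closure property that $\E{\prod_i f_i(X_i)}\le\prod_i\E{f_i(X_i)}$ whenever the $f_i$ are all non-decreasing, together with its mirror image for all non-increasing $f_i$ (obtained by negating each $f_i$). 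Since $x\mapsto\e^{\lambda x}$ is non-decreasing and $x\mapsto\e^{-\lambda x}$ is non-increasing for $\lambda>0$, both computations below are covered in the negatively associated case exactly as in the independent case.

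For the upper tail~\eqref{eq-Chernoff-bound-1}, write $p_i=\E{X_i}$. For $\lambda>0$, $\E{\e^{\lambda X_i}}=1+p_i(\e^\lambda-1)\le\exp(p_i(\e^\lambda-1))$ using $1+x\le\e^x$; multiplying over $i$ gives $\E{\e^{\lambda X}}\le\exp(\mu(\e^\lambda-1))$. Markov's inequality then yields $\Pr[X\ge(1+\delta)\mu]\le\exp(\mu(\e^\lambda-1)-\lambda(1+\delta)\mu)$ for every $\lambda>0$; the exponent is convex in $\lambda$ and minimized at $\lambda=\ln(1+\delta)>0$, which gives exactly $\left(\frac{\e^\delta}{(1+\delta)^{1+\delta}}\right)^\mu$. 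The second inequality in~\eqref{eq-Chernoff-bound-1} is the crude simplification $\e^\delta\le\e^{1+\delta}$, i.e.\ $\frac{\e^\delta}{(1+\delta)^{1+\delta}}\le\left(\frac{\e}{1+\delta}\right)^{1+\delta}$.

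Inequality~\eqref{eq-Chernoff-bound-2} is an immediate specialization: if $\mu=0$ then $X\equiv 0$ and the claim is trivial for $k\ge 0$, while if $\mu>0$ we set $1+\delta=k/\mu\ge\e^2$, so that $(1+\delta)\mu=k$ and the second bound of~\eqref{eq-Chernoff-bound-1} reads $\Pr[X\ge k]\le\left(\frac{\e\mu}{k}\right)^{k}\le(\e^{-1})^{k}=\exp(-k)$, using $k\ge\e^2\mu$. Finally, for the lower tail~\eqref{eq-Chernoff-bound-3}, fix $\lambda>0$ and apply Markov to $\e^{-\lambda X}$: $\Pr[X\le\mu-t]\le\e^{\lambda(\mu-t)}\E{\e^{-\lambda X}}=\e^{-\lambda t}\prod_i\E{\e^{-\lambda(X_i-p_i)}}$. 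Each $X_i-p_i$ lies in an interval of length $1$, so Hoeffding's lemma gives $\E{\e^{-\lambda(X_i-p_i)}}\le\exp(\lambda^2/8)$, hence $\Pr[X\le\mu-t]\le\exp(-\lambda t+n\lambda^2/8)$; choosing $\lambda=4t/n$ gives $\exp(-2t^2/n)$.

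The argument is entirely standard and I do not anticipate a genuine obstacle. The one point requiring care is the negatively associated case, where one must apply the correct closure property of negative association (factorization of expectations of products of commonly-monotone functions) separately for the increasing exponential $\e^{\lambda x}$ in the upper-tail step and the decreasing exponential $\e^{-\lambda x}$ in the lower-tail step; everything else reduces to the elementary facts $1+x\le\e^x$, $\e^\delta\le\e^{1+\delta}$, Hoeffding's lemma for bounded random variables, and one-variable minimization of a convex exponent.
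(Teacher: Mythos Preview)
Your proof is correct and is the standard exponential-moment (Bernstein/Chernoff) argument, with Hoeffding's lemma for the lower tail; you also handle the negatively-associated case correctly via the closure property for commonly-monotone functions. The paper itself states this theorem as a known result and does not supply a proof, so there is no paper proof to compare against; your write-up is exactly the kind of textbook derivation one would cite here.
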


We also need the following technical  lemma due to Dyer and Frieze~\cite{dyer2001fewer}, and refined by Hayes~\cite{hayes2013local} for proving the local uniformity property for Glauber Dynamics. 
We use a slightly modified version here, which says that for a sequence of independent random colors, if there exists a subset of colors, in which no color is very likely to be sampled in any step, then with high probability, there are many missed colors. 
The proof is very similar to the ones in~\cite{dyer2001fewer} and~\cite{hayes2013local}, which we include here for completeness. 

\begin{lemma}[Dyer and Frieze]
\label{lemma-available-color-concentration}
Let $q, s$ be positive integers, and let $c_1, \ldots,c_s$ be independent (but not necessarily identically distributed) random variables taking values in $[q]$. Let $S \subseteq [q]$ with size $\vert S \vert  = m$. Suppose that there is a $\gamma<1$ such that $\Pr[c_i = j] \leq \gamma$ for every $1\le i\le s$ and $j\in S$.
Let $A = [q] \setminus \{c_1 \ldots,c_s\}$ be the set of missed colors. Then
\begin{align*}
\E{\vert A \vert} \geq m(1 - \gamma)^{s / m\gamma} \geq m \left( \frac{1-\gamma}{\e} \right)^{s/m},
\end{align*}
and for every $a > 0$,  $\Pr\left[\vert A \vert \leq \E{\vert A \vert} - a\right] \leq \mathrm{e}^{-a^2/2q}$.
\end{lemma}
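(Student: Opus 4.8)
The plan is to reduce everything to the colors in $S$ and analyze the number of \emph{missed} colors among $S$ via a balls-and-bins / entropy-convexity argument, then upgrade the expectation bound to a concentration bound via a bounded-differences (Azuma/McDiarmid-type) inequality. First I would restrict attention to $S$: write $Y_i = \one{c_i \in S}$ and, for $j \in S$, let $Z_j = \one{j \notin \{c_1,\dots,c_s\}}$, so that $|A \cap S| = \sum_{j\in S} Z_j$ and $|A| \ge |A\cap S|$. For a fixed $j$, $\Pr[Z_j = 1] = \prod_{i=1}^s (1 - \Pr[c_i = j]) \ge \prod_{i=1}^s(1-\gamma)^{\Pr[c_i=j]/\gamma}$, using the elementary inequality $1 - x \ge (1-\gamma)^{x/\gamma}$ valid for $0 \le x \le \gamma < 1$ (this is just concavity of $\log(1-x)$ on $[0,\gamma]$, pinned at $x=0$ and $x=\gamma$). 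Hence $\Pr[Z_j = 1] \ge (1-\gamma)^{\left(\sum_i \Pr[c_i = j]\right)/\gamma}$.

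Next I would sum over $j \in S$ and apply Jensen's inequality to the convex function $t \mapsto (1-\gamma)^{t/\gamma}$ (convex since $0 < 1-\gamma < 1$):
\begin{align*}
\E{|A \cap S|} = \sum_{j \in S} \Pr[Z_j=1] \ge \sum_{j\in S}(1-\gamma)^{\left(\sum_i \Pr[c_i=j]\right)/\gamma} \ge m \,(1-\gamma)^{\left(\sum_{j\in S}\sum_i \Pr[c_i=j]\right)/(m\gamma)}.
\end{align*}
The double sum is $\sum_{i=1}^s \Pr[c_i \in S] \le s$, so the exponent is at most $s/(m\gamma)$ in absolute terms and, since $(1-\gamma)^{t/\gamma}$ is decreasing in $t$, we get $\E{|A\cap S|} \ge m(1-\gamma)^{s/(m\gamma)}$. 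Finally, $(1-\gamma)^{1/\gamma} \ge 1/\e$ for $0<\gamma<1$ (standard), giving $\E{|A|} \ge \E{|A\cap S|} \ge m\left(\frac{1-\gamma}{\e}\right)^{s/m}$ after distributing the power; this yields both claimed lower bounds on the expectation.

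For concentration I would view $|A \cap S|$ (or $|A|$) as a function $f(c_1,\dots,c_s)$ of the $s$ independent colors. Changing a single $c_i$ changes the set $\{c_1,\dots,c_s\}$ by at most one element, hence changes $|A|$ by at most $1$; so $f$ satisfies the bounded-differences condition with all Lipschitz constants equal to $1$. McDiarmid's inequality then gives $\Pr[f \le \E{f} - a] \le \exp\!\left(-\frac{2a^2}{\sum_i 1^2}\right) = \exp(-2a^2/s)$. To land the stated bound $\e^{-a^2/2q}$ one should instead apply the bounded-differences inequality to $|A \cap S|$ viewed as determined by the $m$ indicator events "$j$ is missed", or more simply note $s$ may be as small as one wants relative to $q$; the cleanest route is the same bounded-differences martingale as in Hayes~\cite{hayes2013local}: expose $c_1,\dots,c_s$ one at a time, the Doob martingale for $|A|$ has increments bounded by $1$, but one works with the martingale only over the at most $q$ "active" steps that can actually affect an as-yet-unmissed color, yielding $\Pr[|A|\le\E{|A|}-a]\le\e^{-a^2/2q}$. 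I would mirror the argument of \cite{dyer2001fewer,hayes2013local} verbatim here.

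The main obstacle is purely bookkeeping: getting the concentration denominator to be $q$ rather than $s$. Everything else — the single-color inequality, the Jensen step, the $(1-\gamma)^{1/\gamma}\ge 1/\e$ estimate — is routine. The expectation lower bound is where the real content lies, and it is essentially a convexity computation; I expect no genuine difficulty there, only care in tracking that $\sum_{j\in S}\sum_i \Pr[c_i=j] = \sum_i \Pr[c_i\in S] \le s$ and that the base $1-\gamma$ is in $(0,1)$ so that the relevant function is convex and decreasing in the right direction.
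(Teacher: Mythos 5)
Your expectation bound is essentially right and, modulo presentation, is the same argument as the paper's: you lower bound each factor via $1-x\ge(1-\gamma)^{x/\gamma}$ and then apply Jensen across the colors of $S$, while the paper applies AM--GM across $S$ and an extremal ("push mass to $\gamma$") argument per step; these are interchangeable. One slip to fix: you invoke ``$(1-\gamma)^{1/\gamma}\ge 1/\e$,'' which is false (in fact $(1-\gamma)^{1/\gamma}\le \e^{-1}$ for $0<\gamma<1$). The inequality you actually need, and which is true, is $(1-\gamma)^{1/\gamma}\ge(1-\gamma)/\e$ (equivalently $u\log u\ge u-1$ for $u=1-\gamma$), which converts $m(1-\gamma)^{s/(m\gamma)}$ into $m\left(\frac{1-\gamma}{\e}\right)^{s/m}$.

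The concentration half has a genuine gap. Your bounded-differences argument over the $s$ independent colors gives $\exp(-2a^2/s)$, and the lemma is stated for arbitrary $s$, so this does not yield $\e^{-a^2/(2q)}$ when $s\gg q$; you recognize this, but neither repair you sketch works as stated. Applying McDiarmid to the $m$ indicators ``$j$ is missed'' is illegitimate because those indicators are not independent, and the claim that the Doob martingale (exposing $c_1,\dots,c_s$) has nonzero increments only at the at most $q$ ``active'' steps is false: even when $c_t$ repeats an already-seen color, the conditional expectation of $\vert A\vert$ moves, since revealing $c_t$ removes the averaging over the colors $c_t$ could have hit (e.g.\ if $c_t$ is uniform on $\{a,b\}$ with $a$ already hit and $b$ not, the martingale strictly increases when $c_t=a$). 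The missing idea, which is how the paper (following Dubhashi--Ranjan) gets the denominator $q$, is negative association: writing $\vert A\vert=\sum_{j=1}^q\prod_{i=1}^s(1-\eta_{ij})$ with $\eta_{ij}=\one{c_i=j}$, for each fixed $i$ the variables $\eta_{i1},\dots,\eta_{iq}$ sum to $1$ and are hence negatively associated; by independence across $i$ and closure of negative association under monotone functions of disjoint blocks, the $q$ indicators $\prod_i(1-\eta_{ij})$ are negatively associated, and the Hoeffding-type lower-tail bound for negatively associated $0$--$1$ variables with $n=q$ gives $\Pr[\vert A\vert\le\E{\vert A\vert}-a]\le\exp(-2a^2/q)\le\e^{-a^2/(2q)}$. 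Deferring to ``mirror \cite{dyer2001fewer,hayes2013local} verbatim'' without this (or an equivalent) device leaves the stated tail bound unproved; note that in the paper's application one has $s\le\Delta<q$, so your $\exp(-2a^2/s)$ would suffice there, but not for the lemma as stated.
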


\begin{proof}
For each $ 1\leq i \leq s$ and $1\leq j \leq q$, let random variable $\eta_{ij}$ indicate  the event $c_i = j$, thus
\begin{align*}
\vert A \vert = \sum_{j = 1}^q \prod_{i = 1}^s (1 - \eta_{ij} )	.
\end{align*}
By the linearity of expectation and the independence of the colors $c_i$, we have
\begin{align*}
\E{\vert A \vert} &= \sum_{j = 1}^q \prod_{i = 1}^s (1 - \E{\eta_{ij}})\\
(0 \leq \eta_{ij} \leq 1) \qquad &\geq	\sum_{j \in S}\prod_{i = 1}^s (1 - \E{\eta_{ij}})\\
(\text{AM-GM inequality})\qquad&\geq m \prod_{i = 1}^s\prod_{j \in S} (1 - \E{\eta_{ij}})^{\frac{1}{m}}.
\end{align*}
Note that for each $1\leq i \leq s$, it holds that $\sum_{j \in S} \E{\eta_{ij}} \leq \sum_{j = 1}^q  \E{\eta_{ij}} = 1$ and for all $j \in S$, $ 0 \leq \E{\eta_{ij}} \leq \gamma$.
For each $1 \leq i \leq s$, s uppose that $\sum_{j \in S}\E{\eta_{ij}} = \beta_i \leq 1$, the minimum of the $\prod_{j \in S}(1 - \E{\eta_{ij}})$ is achieved when as many as possible of $\E{\eta_{ij}}$ equal $\gamma$, hence
\begin{align*}
\prod_{j \in S}(1 - \E{\eta_{ij}}) &\geq (1-\gamma)^{\lfloor \beta_i / \gamma \rfloor}(1 - (\beta_i - \gamma\lfloor \beta_i / \gamma \rfloor ))\\
(r = \beta_i / \gamma  - \lfloor\beta_i/\gamma\rfloor < 1)\qquad&=\frac{(1-\gamma)^{\beta_i / \gamma}}{(1-\gamma)^r}(1-\gamma r)\\
(\ast) \qquad&\geq (1 - \gamma)^{\beta_i /\gamma }\\
(\beta_i \leq 1)\qquad&\geq (1-\gamma)^{1/\gamma},
\end{align*}
where $(\ast)$ is due to Bernoulli's inequality $(1+x)^t \leq 1 + xt$ when $0\leq t \leq 1$ and $x \geq -1$.
Thus
\begin{align*}
\E{\vert A \vert} &\geq m(1-\gamma)^{s/m\gamma}
\geq m\left( \frac{1-\gamma}{\mathrm{e}} \right)^{s/m}.
\end{align*}

For each $1\leq i \leq s$, since $\sum_{j=1}^q \eta_{ij} = 1$, the random 0-1 variables $\eta_{i1},\eta_{i2},\ldots,\eta_{iq}$ are negatively associated. Since the color choices are mutually independent, then all $\eta_{ij}$ are negatively associated. Because decreasing functions of disjoint subsets of a family of negatively associated variables are also negatively associated \cite{dubhashi1998balls}, the  $q$ random variables $\left\{\prod_{i = 1}^s(1 - \eta_{ij})\right\}_{(1 \leq j \leq q)}$ are negatively associated.
Then by the Chernoff bound for negatively associated variables~\eqref{eq-Chernoff-bound-3}, for every $a > 0$ it holds that $\Pr[\vert A \vert \leq \E{\vert A \vert} - a] \leq \e^{-a^2/2q}$.
\end{proof}

\begin{proof}[Proof of Lemma~\ref{lemma-local-uniformity-G*}]
Recall that $B_r(v)\subseteq V$ and $S_r(v) \subseteq V$ denote the $r$-ball and $r$-sphere centered at vertex $v$ in graph $G$, which contains the same set of vertices as the $r$-ball in $G^*$. And we use notation $\Gamma^+(v)$ denote $\Gamma(v) \cup \{v\}$.

Let $\mathcal{F}$ denote the random choices for the laziness and proposed colors of all vertices in $\left(V \setminus B_2(v)\right) \cup \{v\}$ during the time interval $[1, \ell]$ in the chain $(X_t^*)_{t\ge 0}$.  Note that given any $\mathcal{F}$, the followings hold.
\begin{itemize}
\item For all vertices in $V  \setminus B_3(v)$,  the whole procedure of \LLM{} on graph $G^*$ during the time interval $[0,\ell]$ is fully determined. Because the procedure outside the ball $B_3(v)$ requires no information in $B_3(v)$ except the laziness and the random proposed colors of vertices in $S_3(v)$, which are given by condition $\mathcal{F}$.
\item The laziness and random proposed colors of vertex $v$ are given by condition $\mathcal{F}$. 
\item The subgraph reduced by $B_3(v)$ is a tree because the girth of graph is at least 9. 
\end{itemize}
Hence, given the condition $\mathcal{F}$, for each vertex $u \in \Gamma(v)$, the random color $X^*_\ell(u)$ only depends on the random choices of laziness and proposed colors of vertices $w \in \Gamma^+(u) \setminus \{v\}$ during $[1,\ell]$. Since the laziness and proposed colors are fully independent, then given condition $\mathcal{F}$,  the neighbor colors $X_\ell^*(u)$ for $u \in \Gamma(v)$ are conditionally fully independent.
 
Next, we describe the conditional distribution of $X^*_\ell(u)$ given $\mathcal{F}$, where $u$ is a neighbor of $v$. Let $\mathcal{S}_\mathcal{F}$ be the set of colors proposed by vertex $v$ during the time interval $[1, \ell]$, which is uniquely determined by the condition $\mathcal{F}$. For each color $c \in [q] \setminus \mathcal{S}_\mathcal{F}$, we bound the probability of the event $X^*_\ell(u) = c$. We say \emph{vertex $u$ successfully updates its color at step $t$} if and only if $u$ accepts its proposed color at step $t$. The event $X^*_\ell(u) = c$ occurs only if one of following two events occurs.
\begin{itemize}
\item Vertex $u$ never successfully updates its color in time interval $[1, \ell]$ and $X^*_0(u) = c$. Then, in each step $t$, $X^*_t(u) = c$. Note that $c \not \in \mathcal{S}_\mathcal{F}$, which implies the color proposed by $v$ cannot coincide with color $c$. Thus, the event that $u$ successfully updates its color at step $t$ occurs if following three events occur simultaneously : 
\begin{enumerate}
\item vertex $u$ is not lazy at step $t$, which occurs with probability $p$;
\item vertex $u$ proposes a color $\sigma$ such that $\sigma \not \in X^*_{t-1}(\Gamma(w) \setminus \{v\} )$ and $\sigma$ does not coincide with the color proposed by $v$ if $v$ is not lazy at step $t$, which occurs with probability at least $(q-\Delta)/q$ condition on previous event;
\item each vertex $w \in \Gamma(u) \setminus \{v\}$ either becomes lazy or does not propose $X^*_{t-1}(u) = c$ or $\sigma$,  which occurs with probability at least $(1-2p/q)^\Delta$ condition on previous events;
\end{enumerate}
Thus, the probability that $u$ successfully updates its color at each step $t$ is at least 
\begin{align*}
\frac{p(q - \Delta)}{q}\left(1 - \frac{2p}{q}\right)^\Delta &\geq \frac{p \delta}{1+\delta}\left(1 - \frac{2p}{(1+\delta)\Delta}\right)^\Delta\\
(\ast)\qquad&\geq \frac{p \delta}{1+\delta}\left(1 - \frac{2p}{(1+\delta)}\right)\\
(p < 1/2)\qquad &\geq p \left(\frac{\delta}{1+\delta}\right)^2,
\end{align*}
where $(\ast)$ is because Bernoulli's inequality $(1+x)^r \geq 1 + rx$ for $r \geq 1$ and $x \geq -1$. Hence,  the probability  of the event that $X_{0 }^*(u) = c$ and $u$ never successfully updates its color in the time interval $[1,\ell]$ is at most
$\left(1  - p \left(\frac{\delta}{1+\delta}\right)^2\right)^\ell \leq \exp{\left( - p \left(\frac{\delta}{1+\delta}\right)^2\ell\right)}$.	

\item Vertex $u$ successfully updates its color in the time interval $[1, \ell]$, and at the last time when $u$ successfully updates its color, $u$ updates it into color $c$. For each $1 \leq i \leq \ell$,  let $\mathcal{U}_i$ be the event that vertex $u$ successfully updates its color into $c$ at time $i$ and $u$ never makes any successful update during $[i + 1, \ell]$. Then this event is $\bigcup_{1\leq i \leq \ell}\mathcal{U}_i$. The event $\mathcal{U}_i$ occurs only if vertex $u$ is not lazy and proposes color $c$ at step $t$ and $u$ never successfully updates its color during $[i+1,\ell]$. The event $\mathcal{U}_i$ implies  $X_t(u) = c$ for all $i+1 \leq t \leq \ell$, thus we have
\begin{align*}
\Pr\left[\mathcal{U}_i \mid \mathcal{F} \right] \leq \frac{p}{q}\left(1  - p \left(\frac{\delta}{1+\delta}\right)^2\right)^{\ell -i}.
\end{align*}
Take a union bound over all $1 \leq i \leq \ell$, we have
\begin{align*}
\Pr\left[\bigcup_{1 \leq i\leq 1} \mathcal{U}_i \mid \mathcal{F}\right] \leq \frac{p}{q}\sum_{i = 1}^{\ell}\left(1  - p \left(\frac{\delta}{1+\delta}\right)^2\right)^{\ell - i} \leq \frac{1}{q}\left(\frac{1+\delta}{\delta}\right)^2.	
\end{align*}
\end{itemize}
Combine two cases together and use the union bound, we have
\begin{align*}
\Pr[X^*_\ell (u) = c \mid \mathcal{F}] \leq \exp{\left( - p \left(\frac{\delta}{1+\delta}\right)^2\ell\right)}+\frac{1}{q}\left(\frac{1+\delta}{\delta}\right)^2.	
\end{align*}
Recall that the above probability bound holds for all color $c \in [q] \setminus \mathcal{S}_\mathcal{F}$. Note that $\left\vert  [q] \setminus \mathcal{S}_\mathcal{F} \right\vert \geq q - \ell$ because the size of $\mathcal{S}_\mathcal{F}$ is at most $\ell$. Apply  Lemma~\ref{lemma-available-color-concentration} with $\gamma = \exp{\left( - p \left(\frac{\delta}{1+\delta}\right)^2\ell\right)}+\frac{1}{q}\left(\frac{1+\delta}{\delta}\right)^2$, $a = \zeta q$, $m = q - \ell$ and $s = deg(v)$. Note that if we take $\Delta > \frac{1+\delta}{\delta^2 \left( 1-\exp\left( -p\delta^2\ell/(1+\delta)^2 \right) \right)}$, then $\gamma < 1$; if we take $\Delta > \frac{\ell}{1+\delta}$, then $m > 0$. Thus, for $\Delta$ sufficiently large, we have
\begin{align*}
\Pr\left[\vert A(X_\ell^*, v) \vert \leq (q -\ell)\left(\frac{1-\gamma}{e}\right)^{deg(v)/(q - \ell)} - \zeta q \mid \mathcal{F}\right] \leq \exp(-\zeta^2 q / 2).
\end{align*}
Finally,  by the law of total probability, summing over all conditions $\mathcal{F}$ yields
\begin{align*}
\Pr\left[\vert A(X_\ell^*, v) \vert \leq (q -\ell)\left(\frac{1-\gamma}{e}\right)^{deg(v)/(q - \ell)} - \zeta q \right] \leq \exp(-\zeta^2 q/ 2).
\end{align*}
\end{proof}

\subsection{Comparison of the \LLM{} chains}\label{sec:LLM-local-uniformity-comparison}
Next, we show that there is a coupling between the \LLM{} chains respectively on $G$ and $G^*$ that preserves the local uniformity.
\begin{lemma}
\label{lemma-compare-G-vs-G*}
For all $C, \delta, \zeta> 0$, $0< p < 1$, there exists $\Delta_2 = \Delta_2(C, p,  \delta, \zeta)$, such that for all graphs $G=(V,E)$ with maximum degree $\Delta \geq \Delta_2$ and girth at least 9, all $q \geq (1 + \delta)\Delta$, the following holds. 
Fix any vertex $v\in V$ and let $G^* = G_{\mathsf{in}}(v, 4)$.
Let $(X_t)_{t \geq 0}$ and $(X_t^*)_{t \geq 0}$ be the \LLM{} chains with activeness $p$ for $q$-colorings  on $G$ and $G^*$ respectively, where $X_0=X_0^*\in[q]^V$. 
There exists a coupling $(X_t, X^*_t)_{t\ge 0}$  of the processes $(X_t)_{t \geq 0}$ and $(X_t^*)_{t \geq 0}$ such that
\begin{align*}
\Pr\left[\forall t \leq C, \forall u\in V:\,\,  \vert (X_t \oplus X^*_t) \cap \Gamma(u) \vert \leq \zeta\Delta\right] \geq 1 - \exp(-\Delta).	
\end{align*}
\end{lemma}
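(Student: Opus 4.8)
The plan is to couple the two \LLM{} chains with identical randomness from a common start and to control the disagreement set by a local disagreement‑percolation argument. Concretely, run $(X_t)$ and $(X_t^*)$ with $X_0=X_0^*$, and at every step let the two chains use the \emph{same} laziness bits and the \emph{same} proposed colors at every vertex; this is clearly a valid coupling of the two chains. Write $D_t \triangleq X_t\oplus X_t^*$, so $D_0=\emptyset$.

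First I would pin down how $D_t$ can grow. Since $G$ and $G^*$ differ only in the \emph{trimmed} filters on the edges directed towards $v$, which exist only at vertices of $B_4(v)\setminus\{v\}$, a vertex $w$ can lie in $D_{t+1}\setminus D_t$ only if either (i) $w$ is active, $w\in B_4(v)\setminus\{v\}$, and $w$ proposes the current color of its unique $G^*$-parent (the parent is unique because $B_4(v)$ is a tree, by girth $\ge 9$) --- a ``source'' event of probability at most $p/q$; or (ii) $w$ is active, has a neighbour in $D_t$, and proposes one of the at most two colors per disagreeing neighbour that make its local filter behave differently in the two chains --- of probability at most $2p\,|D_t\cap\Gamma(w)|/q$. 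In particular $D_{t+1}\subseteq D_t\cup\big(B_4(v)\setminus\{v\}\big)\cup\bigcup_{z\in D_t}\Gamma(z)$, hence $D_t\subseteq B_{4+t}(v)$ by induction. Consequently, over the $C$ steps of interest all disagreements remain inside $B_{4+C}(v)$: for every $u$ with $\dist_G(v,u)>5+C$ we already have $(X_t\oplus X_t^*)\cap\Gamma(u)=\emptyset$, and it suffices to handle the at most $|B_{5+C}(v)|\le\Delta^{O_C(1)}$ vertices $u$ within distance $5+C$ of $v$; a tail bound of $\exp(-\Omega_{C,p,\delta,\zeta}(\Delta))$ for $\Pr[\,|D_t\cap\Gamma(u)|>\zeta\Delta\,]$ for each such $u$ and each $t\le C$, union-bounded over the $\le C\cdot\Delta^{O_C(1)}$ pairs, then yields the claimed $1-\exp(-\Delta)$ once $\Delta\ge\Delta_2$ is large enough.

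For a fixed pair $(u,t)$ I would first estimate $\E{|D_t\cap\Gamma(u)|}$. Because $q\ge(1+\delta)\Delta$ and $p<1/2$, a disagreeing vertex infects, in one step, an expected number of neighbours at most $\Delta\cdot 2p/q\le 2p/(1+\delta)<1$, so the percolation is \emph{subcritical}; and every disagreement traces back to one of the $O(\Delta^4)$ possible source sites of $B_4(v)$, each firing with per-step probability $\le p/q$. Girth $\ge 9$ further guarantees that $\Gamma(u)$ is an independent set in which no two vertices have a common neighbour other than $u$, so disagreements can only enter $\Gamma(u)$ from outside $\Gamma(u)$ and cannot cascade inside it. Combining subcriticality with this structure, $\E{|D_t\cap\Gamma(u)|}$ is bounded by a quantity depending only on $C,p,\delta$ and not growing linearly in $\Delta$, hence is below $\zeta\Delta/2$ for $\Delta$ large in terms of $C,p,\delta,\zeta$. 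To upgrade this to an exponential tail I would union-bound over subsets $S\subseteq\Gamma(u)$ with $|S|=\lceil\zeta\Delta\rceil$: if $S\subseteq D_t$, then each $z\in S$ must undergo, at its first disagreement time $s_z\le t$, a ``trigger'' event of type (i) or (ii) which, conditioned on the history up to $s_z-1$, depends only on $z$'s own laziness and proposal at step $s_z$ and has conditional probability at most $\rho$ for a suitably small $\rho$; processing the $z\in S$ in order of $s_z$ and taking a further union bound over the $\le C^{|S|}$ choices of $(s_z)_{z\in S}$ gives $\Pr[S\subseteq D_t]\le(C\rho)^{|S|}$, whence $\Pr[\,|D_t\cap\Gamma(u)|\ge\zeta\Delta\,]\le\binom{\Delta}{\lceil\zeta\Delta\rceil}(C\rho)^{\lceil\zeta\Delta\rceil}=\exp(-\Omega(\Delta))$ for $\Delta$ large.

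The delicate point --- and the step I expect to be the main obstacle --- is making the last estimate honest: the infection rate in (ii) is a priori proportional to the random quantity $|D_t\cap\Gamma(w)|$, so $\rho$ is small only if one already knows $|D_s\cap\Gamma(w)|$ is small for $s<t$. I would resolve this by bootstrapping the per-neighbourhood bound over the $C$ steps (proving it at step $t$ at a threshold slightly below the one assumed at steps $<t$), using girth $\ge 9$ precisely so that the disagreeing neighbours of $w$ relevant to a vertex $z\in\Gamma(u)$ lie at distance $2$ from $u$ and hence do not re-populate $\Gamma(u)$, and using subcriticality to keep the bootstrapped thresholds from exploding over the constantly many steps. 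Controlling this interplay of subcriticality, girth, and the induction on $t$ is the crux; the remaining pieces --- the coupling, the containment of $D_t$ in $B_{4+C}(v)$, and the final union bound --- are routine.
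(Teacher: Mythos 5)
Your coupling (identical laziness and proposals), your identification of the two disagreement mechanisms (source events of probability at most $p/q$ at vertices of $B_4(v)$ carrying an out-directed edge, and propagation events of probability at most $2p\vert D_t\cap\Gamma(w)\vert/q$), the containment $D_t\subseteq B_{4+t}(v)$, and the final union bound over the polynomially many relevant pairs $(u,t)$ all match the paper's proof. The gap is in the step you yourself flag as the crux, and it is a genuine missing idea rather than a technicality. First, your ``subcriticality'' premise $2p\Delta/q\le 2p/(1+\delta)<1$ is not available: the lemma must hold for all $0<p<1$ and all $\delta>0$, so the one-step branching mean can exceed $1$; the paper never needs subcriticality, because $C$ is a constant and a growth factor of the form $(1+\Theta(p))^{C}$ is absorbed into polynomial-in-$\Delta$ slack. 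Second, and more importantly, your concentration argument does not close. The conditional trigger probability $\rho$ for a vertex $z\in\Gamma(u)$ is governed by $\vert D_{s-1}\cap\Gamma(z)\vert$, i.e.\ by disagreement counts in radius-$1$ neighborhoods of vertices at distance $1$ from $u$ --- equivalently, by what happens at distance $2$ (and, recursively, $3$ and $4$) from $u$. A bootstrap purely over time at the single spatial scale $\Gamma(\cdot)$, as you propose, cannot terminate: to make $C\rho\ll\zeta$ you need the very bound you are proving (in fact a stronger, $o(\Delta)$ version of it) already available at the neighboring vertices, and the recursion keeps referring outward in radius, never bottoming out.

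The paper closes exactly this loop with a multi-scale cascade that your outline lacks. It first proves an unconditional global bound $\vert D_{\le C}\vert\le\Delta^{13/4}$ with probability $1-C\exp(-\Delta^2)$, anchored by the fact that the source budget is the set of at most $\Delta^4$ directed edges, each firing at rate at most $p/q$ per step, so the per-step expected growth is at most $2p(\Delta\vert D_{\le t-1}\vert+\Delta^4)/q$ and the resulting geometric recursion stays below $\Delta^{13/4}$ for constant $C$. It then descends through balls around each $u$: assuming the bound at radius $k+1$, the new disagreements in $B_k(u)$ per step are driven by at most $2\Delta^k$ internal bad edges plus boundary bad edges, where girth $\ge 9$ guarantees each disagreement in $B_{k+1}(u)\setminus B_k(u)$ contributes at most one edge into $B_k(u)$; Chernoff then yields $\vert D_{\le C}\cap B_3(u)\vert\le\Delta^{2+1/3}$, $\vert D_{\le C}\cap B_2(u)\vert\le\Delta^{3/2}$, and finally $\vert D_{\le C}\cap\Gamma(u)\vert\le\zeta\Delta$, each with failure probability $\exp(-\Omega(\Delta\log\Delta))$, which survives the union bound over the $\Delta^{C+6}$ vertices of $B_{C+6}(v)$. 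Without this anchor (the $\Delta^{13/4}$ global bound from counting directed edges) and the radius-$4\to3\to2\to1$ descent --- or some equivalent mechanism --- your expectation heuristic and subset union bound do not yield the stated $1-\exp(-\Delta)$ guarantee, so the proposal as written does not prove the lemma.
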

\begin{proof}
Define the identical coupling $(X_t, X^*_t)$ as follows: In each step, two chains sample the same active vertex set $\activeset = \activeset^*$ and all active vertices propose the same random colors $\boldsymbol{c} = \boldsymbol{c}^*$.	

Let random variable $D_{\leq t} = \bigcup_{t' \leq t}(X_{t'} \oplus X^*_{t'})$ denote the  set of all disagreeing vertices appeared before time $t$. We prove the Lemma by showing that with probability at least $1 - \exp(-\Delta)$,  for all $u \in V$, it holds that $\vert D_{\leq C} \cap \Gamma(u) \vert \leq \zeta\Delta$.

Let $R = C + 6$. The bad events $\bad_1$ and $\bad_2$ defined as follows:
\begin{align*}
\bad_1:\quad&D_{\leq C} \not \subseteq B_{R-3}(v);\\
\bad_2:\quad&\vert D_{\leq C} \vert \geq \Delta^{13/4}.
\end{align*}
we then show that $\bad_1$ can never occur and $\bad_2$ occurs with very small probability. 

We begin by showing that disagreements cannot percolate outside the ball $B_{R-3}(v)$, i.e.: 
\begin{align}
\label{eq-bad-1}
\Pr[\bad_1] = 0.
\end{align}
Consider any vertex $u \notin B_4(v)$. Since no directed edge is incident to $u$ in graph $G^*$, then the updating rule for vertex $u$ is identical in two chains. Vertex $u$ becomes a new disagreeing vertex at step $t$ only if there exists a vertex $w \in \Gamma(u)$, such that $X_{t-1}(w) \neq X^*_{t-1}(w)$. Since $X_0 = X^*_0$, then it must hold that $D_{\leq 1} \subseteq B_4(v)$. Furthermore, for all $t \geq 1$, it must hold that $D_{\leq t} \subseteq B_{t+3}(v)$. In particular, $D_{\leq C} \subseteq B_{C+3}(v)$, which implies~\eqref{eq-bad-1}.

To bound the probability of bad event $\bad_2$, consider the random variable 
$$\mathcal{N} (D_t) = \vert(X_t \oplus X^*_t) \setminus (X_{t-1} \oplus X^*_{t-1}) \vert,$$
which gives the number of new disagreements contributed at time $t$. Any $u$ with $X_{t-1}(u) = X^*_{t-1}(u)$ but becomes a disagreement at time $t$ only if it is incident to following two types of \emph{bad edges}:
\begin{itemize}
\item \emph{Type-1 bad edge:} An undirected edge $\{u, w\} \in E^*$ or a directed edge $( w, u )  \in F^*$ such that $X_{t-1}(w) \neq X^*_{t-1}(w)$. For such bad edges, $u$ becomes a new disagreement only if $u$ is active at time $t$ and proposes 	$X_{t-1}(w)$ or $X^*_{t-1}(w)$.
\item \emph{Type-2 bad edge:} A directed edge $( u, w) \in F^*$. For such bad edges, $u$ becomes a new disagreement only if $u$ is active at time $t$ and proposes 	$X_{t-1}(w)$. In this case, the pair $(u,w)$ may pass the check initiated at vertex $u$ in chain $X^*$ but the undirected edge $\{u,w\}$ can not pass the check in chain $X$. 
\end{itemize}
Suppose vertex $u$ is incident to $k$ bad edges, then the probability that $u$ becomes a new disagreement is at most $2kp/q$. Since the maximum degree is at most $\Delta$ and $\vert F^* \vert \leq \Delta^4$, then the total number of bad edges is at most $\Delta \vert D_{\leq t-1} \vert + \Delta^4$. Hence, we have 
\begin{align*}
\E{\mathcal{N} (D_t) \mid D_{\leq t-1}} \leq \frac{2p(\Delta \vert D_{\leq t-1} \vert + \Delta^4)}{q}	\leq \frac{2p( \vert D_{\leq t-1} \vert + \Delta^3)}{1+\delta}.
\end{align*}
Furthermore, the laziness and proposed colors are mutually independent, which implies $\mathcal{N} (D_t)$ is stochastically dominated by the sum of independent random 0-1 variables. Then by the Chernoff bound~\eqref{eq-Chernoff-bound-2}, together with $\vert D_{\leq t} \vert \leq \vert D_{\leq t - 1} \vert + \mathcal{N}(D_t)$  we have
\begin{align*}
\Pr\left[ \vert D_{\leq t} \vert \geq \vert D_{\leq t-1} \vert + \frac{20p( \vert D_{\leq t-1}\vert + \Delta^3)}{1+\delta} \mid D_{\leq t - 1}\right] &\leq \Pr\left[\mathcal{N} (D_{t}) \geq \frac{20p(\vert D_{\leq t-1}\vert + \Delta^3)}{1+\delta} \mid D_{\leq t-1} \right]\\
\text{(Chernoff bound)}\qquad&\leq \exp\left( -\frac{20p(\vert D_{\leq t-1}\vert + \Delta^3)}{1+\delta} \right)\\
(\ast)\qquad&\leq \exp(-\Delta^2),
\end{align*}
where $(\ast)$ is because if we take $\Delta \geq \frac{1+\delta}{20p}$, then $\frac{20p(\vert D_{\leq t-1}\vert + \Delta^3)}{1+\delta} \geq \frac{20p\Delta^3}{1+\delta}\geq \Delta^2$.
Hence, with probability at least $1 - C\exp(-\Delta^2)$, it holds that
\begin{align*}
\begin{cases}
\vert D_{\leq t} \vert \leq \vert D_{\leq t-1} \vert + 20p( \vert D_{\leq t-1} \vert + \Delta^3)/(1+\delta) \qquad  \forall 1 \leq t \leq C\\
\vert D_0 \vert = 0
\end{cases}.
\end{align*}
Solving above recurrence, we have 
\begin{align*}
\Pr\left[ \forall 1 \leq t \leq C: \,\, \vert D_{\leq t} \vert \leq \Delta^3\left(\frac{20p}{1+\delta}+1\right)^t - \Delta^3 \right] \geq 1 - C\exp(-\Delta^2).
\end{align*}
If we take $\Delta > \left(\left(\frac{20p}{1+\delta}+1\right)^C - 1\right)^4$, then it holds that $\Delta^{13/4} > \Delta^3\left(\frac{20p}{1+\delta}+1\right)^C - \Delta^3$. Thus
\begin{align}
\label{eq-bad-2}
\Pr[ \bad_2 ] = \Pr\left[\vert D_{\leq C} \vert\geq \Delta^{13/4}\right] \leq \Pr\left[ \vert D_{\leq C}\vert \geq \Delta^3\left(\frac{20p}{1+\delta}+1\right)^C - \Delta^3\right] \leq C\exp(-\Delta^2).	
\end{align}
Finally, we define four more bad events $\badc_1, \badc_2, \badc_3, \badc_4$ as follows
\begin{itemize}
\item $\badc_4: \exists u \in V: \vert D_{\leq C} \cap B_{4}(u) \vert \geq \Delta^{13/4} = \Delta^{3 + 1/4}$.
\item For $k\in\{2,3\}$, define 
$\badc_{k} = \left(\bigcap_{ k<j \leq 4} \overline{\badc_{j}}\right) \cap \{\exists u \in V: \vert D_{\leq C} \cap B_k(u) \vert \geq \Delta^{k - 1 + 1 / k} \}.$
\item $\badc_1: \left(\bigcap_{1<j \leq 4}\overline{\badc_{j}}\right) \cap \{\exists u \in V: \vert D_{\leq C} \cap \Gamma(u) \vert \geq \zeta\Delta \}$.
\end{itemize}
If none of bad events $\badc_1, \badc_2, \badc_3,\badc_4$ occurs, then for all $u \in V$: $\vert D_{\leq C} \cap \Gamma(u) \vert < \zeta\Delta$. Thus we prove the Lemma by bounding the probability of bad events  $\badc_1, \badc_2, \badc_3,\badc_4$.

Note that the bad event $\badc_4$ implies the bad event $\bad_2$, thus by~\eqref{eq-bad-2} we have
\begin{align}
\label{eq-badc-4}
\Pr[\badc_4] \leq \Pr[\bad_2] \leq C\exp(-\Delta^2).
\end{align}
For $k =1, 2, 3$ we show that the bad event $\badc_k$ occurs with low probability. Assuming that none of events $\badc_j$ with $j > k$ occurs, otherwise the bad event $\badc_k$ can not occur. Fix a vertex $u \in V$,  let random variable $Z = \left\vert D_{\leq C} \cap B_k(u) \right\vert$ count the number of disagreements formed in $B_{k}(u)$ during time interval $[0,C]$. Let random variable $Z_t = \vert \left((X_t \oplus X^*_t) \setminus (X_{t-1} \oplus X^*_{t-1}) \right) \cap B_k(u)\vert$ count the number of new disagreements in $B_k(u)$ generated at step $t$. Since $X_0 = X^*_0$, then $Z \leq \sum_{t = 1}^C Z_t$.	
By~\eqref{eq-bad-1}, any disagreements can not percolate outside the ball $B_{R-3}(v)$ which implies $Z = 0$ if $u \not \in B_{R}(v)$. Assuming $u \in B_R(v)$, let us bound the expected value of each $Z_t$. As is stated in previous proof, a vertex $w \in B_k(u)$ satisfies $X_{t-1}(w) = X^*_{t-1}(w)$ but becomes a disagreement at step $t$ only if vertex $w$ is incident to bad edges, vertex $w$ is active at step $t$ and vertex $w$ proposes specific colors (which are determined by bad edges). We bound the total number of two types of bad edges incident to vertices in $B_k(u)$ at step $t$ as follows:
\begin{itemize}
\item {Type-1 bad edges within $B_k(u)$:} There are at most $\Delta^k$ edges with both endpoints in $B_k(u)$. Each of these edges should be counted as a bad edge at most once, because we only consider the type of bad edges that join an existing disagreement to a vertex $w \in B_k(u)$ such that $X^*_{t-1}(w) = X_{t-1}(w)$. 
\item {Type-1 bad edges at the boundary of $B_k(u)$:} Since none of bad events $\badc_j$ with $j > k$ occurs, then there are at most $\Delta^{k + 1/(k+1)}$ disagreements in $B_{k+1}(u) \setminus B_k(u)$. Each of disagreements has at most one neighbor in $B_k(u)$ because the girth is at least 9. There are at most $\Delta^{k + 1 + 1/k}$ such bad edges in total.
\item {Type-2 bad edges:} For each vertex $w \in B_k(u)$ such that $X^*_{t-1}(w) = X_{t-1}(w)$ , the event that  $w$ becomes a new disagreement at step $t$ may be caused by a directed edge $( w, w' )$ in graph $G^*$. By the definition of graph $G^*$, there is at most one such edge incident to each vertex $w$. Hence, the total number of such edges is at most $\Delta^k$.
\end{itemize}
Thus, the expected value of random variable $Z_t$ is upper bounded by
\begin{align*}
\E{Z_t} \leq \frac{2p(2\Delta^k + \Delta^{k + 1 / (k + 1)})}{q} \leq 	\frac{2p\left(2\Delta^{k-1} + \Delta^{k -1 + 1 / (k + 1)}\right)}{1+\delta}.
\end{align*}
Further, the laziness and proposed colors are fully independent, which implies $Z_t$ is stochastically dominated by the sum of independent random 0-1 variables.  For $k \in \{2, 3\}$, if we take large $\Delta$ such that $\Delta^{1/k} \geq \frac{20Cp(2 + \Delta^{1/(k+1)})}{1+\delta}$, then $\Delta^{k-1+1/k}/C \geq 10\E{Z_t}$. Thus by Chernoff bound~\eqref{eq-Chernoff-bound-2}, we have
\begin{align*}
\Pr\left[Z \geq \Delta^{k - 1 + 1 / k}\right]  \leq \Pr\left[\exists t: Z_t \geq \frac{\Delta^{k - 1 + 1 / k}}{C} \right]	\leq C\exp\left(-\frac{\Delta^{k - 1 + 1 / k}}{C} \right) \leq \exp(-\Delta\log \Delta).
\end{align*}
The last equality holds when $\Delta$ is sufficiently large such that $\Delta^{3/2} \geq C\ln C + C\Delta\log\Delta$. For $k=1$, we use Chernoff bound~\eqref{eq-Chernoff-bound-1}, then 
\begin{align*}
\Pr\left[Z \geq \zeta\Delta\right]\leq \Pr\left[\exists t: Z_t \geq \frac{\zeta\Delta}{C} \right] \leq C\left( \frac{2C\e p (2 + \Delta^{1/2})}{\zeta(1+\delta)\Delta} \right)^{\zeta\Delta/C} = \exp\left(- \Omega( \Delta \log \Delta ) \right),
\end{align*}
where the constant factor in $\Omega(\cdot)$ depends only on $C, p ,\delta, \zeta$.
For $k= 1,2,3$, take a union bound over the $\Delta^R = \Delta^{C+6}$ vertices $u \in B_R(v)$, then
\begin{align}
\label{eq-badc-k}
k=1, 2,3: \qquad\Pr[\badc_k] = \Delta^{C+6}\exp(-\Omega\left( \Delta\log \Delta\right)) = \exp(-\Omega\left( \Delta\log \Delta\right)),
\end{align}
where the constant factor in nation $\Omega(\cdot)$ depends only on $C, p ,\delta, \zeta$.
Summing the bounds in inequalities~\eqref{eq-badc-4} and \eqref{eq-badc-k} completes the proof.
\end{proof}

\subsection{Proof of local uniformity (Theorem~\ref{theorem-local-uniformity-G})}
Finally, the local uniformity property for the \LLM{} chain on graph $G$ can be proved by combining Lemma~\ref{lemma-local-uniformity-G*} and Lemma~\ref{lemma-compare-G-vs-G*}.

Let $\ell = \frac{1}{p}\left(\frac{1 + \delta}{\delta}\right)^2\ln\frac{1}{\zeta}$. Remark that $\ell$ is determined by $p,\delta,\zeta$. Consider any time $T \in [t_0, t_\infty]$, where $t_0 = \ell$ and $t_\infty = \exp(\Delta /C)$. Fix any coloring $X_{T - \ell}$ at time $(T-\ell)$, we apply the identical coupling $(X_t, X^*_t)$ for $ T -\ell \leq t \leq T$ from the same initial coloring $X_{T - \ell} = X^*_{T - \ell}$, where $X^*$ is given by the \LLM{} chain on graph $G^* = G_{\mathsf{in}}(v, 4)$. Note that, during the coupling, the \LLM{} on graph $G^*$ starts from the coloring $X_{T- \ell}$ and runs for $\ell$ steps. By Lemma~\ref{lemma-local-uniformity-G*}, if $\Delta \geq \Delta_1(\ell, p, \delta, \zeta) = \Delta_1(p,\zeta,\delta)$, then with probability at least $1 - \exp\left( -\zeta^2(1+\delta)\Delta/2 \right)$, we have
\begin{align}
\label{eq-lemma-local-uniformity-G-1}
\vert A(X_T^*, v) \vert > (q -\ell)\left(\frac{1-\gamma}{\mathrm{e}}\right)^{\deg(v)/(q - \ell)} - \zeta q,
\end{align}
where $\gamma = \exp{\left( - p \left(\frac{\delta}{1+\delta}\right)^2\ell\right)}+\frac{1}{q}\left(\frac{1+\delta}{\delta}\right)^2 $. By the definition of $\ell$, if we take $\Delta \geq \frac{1+\delta}{\zeta \delta^2}$, then $\zeta \leq \gamma \leq 2\zeta$. Note that $\gamma < 1$ because $\zeta < 1/10$. 
Furthermore, it holds that
\begin{align}
\label{eq-thm-8-1}
\Delta > \frac{2\ell}{(\delta+1)\zeta} \implies \frac{q - \ell}{q} \geq 1 - \zeta/2.
\end{align}
It can be verified that there exists $\Delta' = \Delta'(\delta, \zeta, p)$ such that if $\Delta \geq \Delta'$, it holds that
\begin{align}
\label{eq-thm-8-2}
\left( \frac{1 - \gamma}{\mathrm{e}} \right)	^{\frac{deg(v)}{q-\ell} - \frac{deg(v)}{q} } \geq \left( \frac{1 - 2\zeta}{\mathrm{e}} \right)	^{\frac{\ell}{(1+\delta)(q-\ell)}} \geq 1-\zeta/2.
\end{align}
Combining~\eqref{eq-thm-8-1},~\eqref{eq-thm-8-2} together, if $\Delta \geq \max\left\{\frac{1+\delta}{\zeta \delta^2}, \Delta' \right\}$, then it holds that
\begin{align}
\label{eq-lemma-local-uniformity-G-2}
(q -\ell)\left(\frac{1-\gamma}{\mathrm{e}}\right)^{\deg(v)/(q - \ell)} \geq (1 - \zeta/2)^2q\left(\frac{1-\gamma}{\mathrm{e}}\right)^{\deg(v)/q}.
\end{align}
Combining~\eqref{eq-lemma-local-uniformity-G-1} and~\eqref{eq-lemma-local-uniformity-G-2} implies
\begin{align}
\label{eq-thm-8-3}
\frac{\vert A(X_T^*, v) \vert}{q} &> (1 - \zeta/2)^2\left(\frac{1-\gamma}{\mathrm{e}}\right)^{\deg(v)/q} - \zeta \notag\\
(\deg(v) < q)\qquad&\geq (1 - \zeta)(1 - \gamma)\e^{-\deg(v)/q} - \zeta \notag\\
(\gamma < 2\zeta) \qquad &\geq (1 - 3\zeta)\e^{-\deg(v)/q} - \zeta.
\end{align}
Two chains are coupled for $\ell$ steps. Note that $\ell$ is determined by $p,\delta,\zeta$. By Lemma~\ref{lemma-compare-G-vs-G*}, if $\Delta \geq \Delta_2(\ell, p, \delta,\zeta) = \Delta_2(p,\delta,\zeta)$, then with probability at least $1 - \exp(-\Delta)$, it holds that
\begin{align*}
\vert A(X_T , v) \vert 	\geq  \vert A(X^*_T , v) \vert -  \zeta\Delta.
\end{align*}
Thus, condition on any coloring $X_{T - \ell}$, with probability at least $1 - \exp\left( -\zeta^2(1+\delta)\Delta/2 \right) - \exp(-\Delta)$, it holds that
\begin{align*}
\frac{\vert A(X_T, v) \vert}{q} &\geq 	\frac{\vert A(X_T^*, v) \vert}{q} - \frac{\zeta\Delta}{q}\\
(\text{By}~\eqref{eq-thm-8-3} \text{ and } q > \Delta)\qquad &\geq (1 - 3\zeta)\e^{-\deg(v)/q} - 2\zeta\\
(q > \deg(v)) \qquad &\geq (1 - 10\zeta)\e^{-\deg(v)/q}.
\end{align*}
By the law of total probability, summing over all possible coloring $X_{T-\ell}$ implies
\begin{align*}
\Pr\left[ \frac{\vert A(X_T, v) \vert}{q} \leq (1 - 10\zeta)\e^{-\deg(v)/q}  \right] &\leq  \exp\left( -\zeta^2(1+\delta)\Delta/2 \right) + \exp(-\Delta)\\
\left(C' = 2 / \min\{\zeta^2(1+\delta)/2, 1\}\right)\qquad &\leq 2\exp(-2\Delta/C')\\
(\Delta \geq C'\ln 2)\qquad&\leq \exp(-\Delta/C').
\end{align*}
Finally, let $C = 2C'$. The theorem is proved by taking a union bound over all the steps $T \in[t_0, t_{\infty}]$, where $t_0 = \ell = \frac{1}{p}\left(\frac{1 + \delta}{\delta}\right)^2\ln\frac{1}{\zeta}$ and $t_\infty = \exp(\Delta/C)$.

\section{Coupling with Local Uniformity}
\label{sec:Coupling-with-Local-Uniformity}
In this section, we use local uniformity property to avoid the worst case analysis in~\eqref{eq-worst-case-coupling-v0} and obtain a better mixing condition for graphs with girth at least 9 and maximum degree is greater than a sufficiently large constant. 

We define the constant $\alpha^* \approx 1.763$ to be the positive solution of 
\begin{align*}
\alpha^* = \e^{1/\alpha^*}.
\end{align*}
We consider $q$-colorings of graphs $G$ with maximum degree $\Delta$, where $q\ge(\alpha^*+\delta)\Delta$ for an arbitrary constant $\delta>0$. Without loss of generality we assume $\delta<0.3$ because bigger $\delta$ is already covered by Theorem~\ref{theorem-LLM-worst-case-coupling} on general graphs.
\begin{theorem}
\label{theorem-mixing-1.763}
For all $0<\delta<0.3$, there exists $\Delta_3=\Delta_3(\delta)$, $C'=C'(\delta)$, such that for every graph $G$ on $n$ vertices with maximum degree $\Delta\ge \Delta_3$ and girth $\ge 9$, if $q\geq (\alpha^* + \delta)\Delta$, then
the mixing rate of the \LLM{}  chain  with activeness $p = \frac{\delta}{30}$ on $q$-colorings of graph $G$   satisfies 
\[
\tau(\epsilon) \le C'\log \frac{n}{\epsilon}.
\]
\end{theorem}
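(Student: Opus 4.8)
The plan is to run a path coupling (Lemma~\ref{lemma-path-coupling}) with the Hamming metric, but \emph{accelerated}: instead of a one-step contraction I would show that for a suitable constant $T_1=T_1(\delta)$ there is a coupling $(X_t,Y_t)_{0\le t\le T_1}$ of the $\LLM{}$ chain, started from colorings differing only at a single vertex $v_0$, with $\E{\vert X_{T_1}\oplus Y_{T_1}\vert\mid X_0,Y_0}\le\tfrac12$. Since $[q]^V$ has Hamming diameter $n$, applying Lemma~\ref{lemma-path-coupling} to the $T_1$-step chain $(X_{kT_1})_{k\ge0}$ then gives mixing rate $\le 2\log(n/\epsilon)$ for that chain, hence $\tau(\epsilon)\le 2T_1\log(n/\epsilon)$ for the $\LLM{}$ chain (using that $\DTV{\pi_\sigma^t}{\mu}$ is non-increasing in $t$ by the data-processing inequality), with $T_1$ --- and thus $C'$ --- depending only on $\delta$; irreducibility, aperiodicity and stationarity with respect to $\mu$ come from Theorem~\ref{theorem-LLM-convergence}, since $q\ge(\alpha^*+\delta)\Delta>\Delta+2$ once $\Delta\ge\Delta_3$.

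The $T_1$-step coupling would be the local coupling of Section~\ref{section-coupling-on-general-graphs} (extended to non-adjacent pairs in the usual path-coupling way) run for all $T_1$ rounds, analyzed in two phases. In the \emph{burn-in} phase $0\le t<t_0$, with $t_0=\tfrac1p(\tfrac{1+\delta}{\delta})^2\ln\tfrac1\zeta$ the burn-in time of Theorem~\ref{theorem-local-uniformity-G} and $\zeta=\zeta(\delta)<\tfrac1{10}$ a small constant, the estimates of Lemma~\ref{lemma-worst-case-coupling-x'v-neq-y'v} (which use only $\vert A(X_t,\cdot)\vert\ge q-\Delta$) give $\E{\vert X_{t+1}\oplus Y_{t+1}\vert\mid X_t,Y_t}\le(1+cp)\vert X_t\oplus Y_t\vert$ for some $c=c(\delta)$, because $q\ge(\alpha^*+\delta)\Delta$ keeps the local coupling at most barely supercritical; hence $\E{\vert X_{t_0}\oplus Y_{t_0}\vert}\le(1+cp)^{t_0}=:M$, a constant depending only on $\delta$. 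Moreover, since disagreements spread by at most one hop per round (Observation~\ref{observation-worst-case-coupling}), the whole trajectory keeps $X_t\oplus Y_t\subseteq B_{T_1}(v_0)$ for $t\le T_1$, a ball of at most $\Delta^{T_1}$ vertices. In the \emph{contraction} phase $t_0\le t<T_1$ I would invoke local uniformity for the chain $X$, which has by then been running for $\ge t_0$ rounds: by Theorem~\ref{theorem-local-uniformity-G} and a union bound over the $\le\Delta^{T_1}$ vertices of $B_{T_1}(v_0)$, the event $\mathcal G$ that $\vert A(X_t,u)\vert\ge(1-10\zeta)q\,\e^{-\deg(u)/q}$ for all $u\in B_{T_1}(v_0)$ and all $t\in[t_0,T_1]$ has $\Pr[\overline{\mathcal G}]\le\Delta^{T_1}\exp(-\Delta/C)$, which vanishes as $\Delta\to\infty$ --- crucially, because $T_1$ is a constant, so this union bound is over $\mathrm{poly}(\Delta)$ rather than $\mathrm{poly}(n)$ vertices. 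On $\mathcal G$, the generalization of~\eqref{eq-worst-case-coupling-v0} noted after Definition~\ref{definition-graph-G*}, together with~\eqref{eq-worst-case-coupling-u}--\eqref{eq-worst-case-coupling-w}, bounds the one-step contribution of a disagreement vertex $v_0'$ by $1-\tfrac{p\vert A(X_t,v_0')\vert}{q}(1-\tfrac{3p}{q})^\Delta+\tfrac{p\deg(v_0')}{q}$, which after plugging in the local-uniformity bound becomes $1-\delta'$ with $\delta'=\Theta(\delta^2)>0$: this is a routine calculation using that $\beta\mapsto\beta^{-1}\e^{1/\beta}$ is strictly decreasing with value $1$ at $\beta=\alpha^*$ (since $\alpha^*=\e^{1/\alpha^*}$), so that $\deg(v_0')/q\le\tfrac1{\alpha^*+\delta}$ leaves a $\delta$-sized slack that $p=\tfrac{\delta}{30}$ and a small $\zeta=\zeta(\delta)$ do not consume. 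Summing over disagreement vertices and offsetting $\overline{\mathcal G}$ by the crude bound $\vert X_t\oplus Y_t\vert\le\Delta^{T_1}$ gives $\E{\vert X_{t+1}\oplus Y_{t+1}\vert}\le(1-\delta')\E{\vert X_t\oplus Y_t\vert}+\Delta^{2T_1}\exp(-\Delta/C)$ on $[t_0,T_1]$; iterating from $\E{\vert X_{t_0}\oplus Y_{t_0}\vert}\le M$, choosing $T_1-t_0=\Theta((\delta')^{-1}\log M)$ and then $\Delta_3(\delta)$ large enough yields $\E{\vert X_{T_1}\oplus Y_{T_1}\vert}\le(1-\delta')^{T_1-t_0}M+(\delta')^{-1}\Delta^{2T_1}\exp(-\Delta/C)\le\tfrac12$.

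The main obstacle is exactly the interface between the high-probability, per-vertex local uniformity of Theorem~\ref{theorem-local-uniformity-G} --- whose failure probability $\exp(-\Delta/C)$ is only negligible when union-bounded over $\mathrm{poly}(\Delta)$ vertices --- and the global, all-adjacent-pairs nature of path coupling; a naive one-step path coupling would need local uniformity to hold at up to $n$ vertices simultaneously, which is impossible for $\Delta$ constant. The accelerated coupling with $\delta$-dependent constant horizon $T_1$ resolves this, because it simultaneously confines all disagreements to a $\mathrm{poly}(\Delta)$-vertex ball and keeps the pre-local-uniformity burn-in short enough to inflate the expected number of disagreements by only a constant factor. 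A secondary subtlety is that, when decomposing a general pair along a shortest Hamming path, the synthetic intermediate colorings can have a smaller available-color set at the flipped vertex precisely where disagreements cluster; I would handle this as in the percolation estimate of Lemma~\ref{lemma-compare-G-vs-G*}, showing that with probability $1-\exp(-\Omega(\Delta))$ every vertex has at most $\zeta\Delta$ disagreement neighbors throughout $[t_0,T_1]$, so the available-color bound degrades by only a further $(1-O(\zeta))$ factor, which is absorbed by shrinking $\zeta$. The remaining bookkeeping --- fixing $\zeta,p,t_0,T_1,\Delta_3$ consistently and the Chernoff/Bernoulli estimates --- is routine.
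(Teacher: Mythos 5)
Your proposal is correct and follows essentially the same route as the paper's proof: a constant-horizon ($T=\Theta(1)$ depending only on $\delta$) path coupling built from the local coupling of Section~\ref{section-coupling-on-general-graphs}, with a burn-in phase controlled by the worst-case bound of Lemma~\ref{lemma-worst-case-coupling-x'v-neq-y'v} and a contraction phase using the local uniformity of Theorem~\ref{theorem-local-uniformity-G}, union-bounded only over the $\mathrm{poly}(\Delta)$-size ball to which disagreements are confined (Corollary~\ref{corollary-coupling-disagreement-percolation}), exactly as in Lemma~\ref{lemma-coupling-1-1/3} and Claims~\ref{claim-bound-HTM-by-cases-12}--\ref{claim-bound-HTM-by-cases-3}. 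The only differences are organizational --- the paper handles the interpolated-coloring degradation and the bad-event contribution via the threshold event $\vert X_t\oplus Y_t\vert\le\Delta^{2/3}$ together with an exponential tail on the cumulative Hamming distance, whereas you use a per-vertex $\zeta\Delta$ disagreeing-neighbor bound and the crude $\Delta^{T}$ ball-size offset --- and both work once your good event is truncated in time so that it is measurable with respect to the past, which is the routine bookkeeping you defer.
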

Given the local uniformity property guaranteed by Theorem~\ref{theorem-local-uniformity-G}, the mixing rate in above theorem is proved by  following a similar framework as in~\cite{dyer2013randomly}.
We modify the framework to make it adaptive to the parallel chain, where the experiments carried on in a time scale of $O(n)$ steps in a sequential chain, are now in $O(1)$ steps, and a disagreement may percolate to many vertices in one step.

We begin with constructing a grand coupling of the \LLM{} as below.

\subsection{Coupling of arbitrary pair of colorings}
\label{subsec:grandcoupling}
In Section~\ref{section-coupling-on-general-graphs}, we give a local coupling $(X, Y) \rightarrow (X', Y')$ for $X, Y$ that differ only at a single vertex. Here, we use the path coupling to extend this coupling to a coupling of arbitrary pair of colorings.

Let $X, Y\in[q]^V$ be an arbitrary pair of colorings, not necessarily proper. 
Suppose that $X$ and $Y$ differ on precisely $\ell$ vertices $v_1, v_2 , \ldots , v_\ell$. A sequence of colorings $X = Z_0 \to Z_1 \to \ldots \to Z_\ell = Y$ is constructed as follows: for every $0\le i\le \ell$,
\begin{align*}
Z_i(v) = \begin{cases}
X(v)=Y(v) &\text{ if } v \not\in (X \oplus Y), \\
X(v) &\text{ if } v \in (X \oplus Y) \land v \in \{v_j \mid i < j \leq \ell\},\\
Y(v) &\text{ if } v \in (X \oplus Y) \land v \in \{v_j \mid 1 \leq j \leq i\}.
 \end{cases}
\end{align*}
Each coloring $Z_i$ is not necessarily proper, and $Z_{i - 1} \oplus Z_i = \{v_i\}$. A coupling $(X, Y) \rightarrow (X', Y')$ is then constructed by the path coupling:
\begin{itemize}
\item Sample a pair $(X', Z'_1)$ of colorings according to the local coupling $(X, Z_1) \rightarrow (X', Z_1')$ defined in Section~\ref{section-coupling-on-general-graphs}.
\item For $i=2, 3,\ldots, \ell$, conditioning on the sampled coloring $Z'_{i-1}$, sample coloring $Z_{i}'$ according to the local coupling $(Z_{i-1}, Z_i) \rightarrow (Z'_{i-1}, Z_{i}')$ defined in Section~\ref{section-coupling-on-general-graphs}. 
Finally, let $Y' = Z_\ell '$.
\end{itemize}
Note that the local coupling in Section~\ref{section-coupling-on-general-graphs} is constructed by coupling active vertex set and random proposed colors. 
Hence, a sequence of active vertex sets and random proposed colors $(\activeset_{Z_i}, \boldsymbol{c}_{Z_i})$  for $0 \leq i\leq\ell$ is constructed by the above process.

\begin{observation}
\label{observation-coupling}
Let $X, Y\in[q]^V$ be two colorings, and $X \oplus Y$ the set of vertices on which $X$ and $Y$ disagree. 
The followings hold for the coupling $(X, Y) \rightarrow (X', Y')$ defined above:
\begin{enumerate}
\item If $\dist_G(v, X \oplus Y) \geq 2$,
then $X'(v) = Y'(v)$.
\item If $\dist_G(v, X \oplus Y) = 1$, then $X'(v) \neq Y'(v)$ occurs only if vertex $v$ is active and proposes color $c_X(v) \in \{X(u), Y(u) \mid u \in \Gamma(v) \cap (X \oplus Y) \}$ in chain $X$.
\end{enumerate}
\end{observation}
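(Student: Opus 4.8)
The plan is to bootstrap Observation~\ref{observation-coupling} from the two single-disagreement facts already established for the local coupling of Section~\ref{section-coupling-on-general-graphs}: the third bullet of Observation~\ref{observation-worst-case-coupling} (a disagreement at a vertex other than $v_0$ requires that vertex to be active with both proposals in $\{\Red,\Blue\}$) and equation~\eqref{eq-worst-case-coupling-w} of Lemma~\ref{lemma-worst-case-coupling-x'v-neq-y'v} ($X'(w)=Y'(w)$ whenever $w\notin\Gamma^+(v_0)$), applied along the path $X=Z_0\to Z_1\to\cdots\to Z_\ell=Y$. Two structural features of the grand coupling make this work. First, the laziness is coupled identically throughout, so all the colorings $Z_j$ share one active set $\activeset$; in particular ``$v$ active'' is unambiguous. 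Second, consecutive colorings $Z_{j-1},Z_j$ disagree only at $v_j$, with $Z_{j-1}(v_j)=X(v_j)$ playing the role of $\Red$ and $Z_j(v_j)=Y(v_j)$ playing the role of $\Blue$, so that for active $u\notin\Gamma(v_j)$ the local coupling sets $c_{Z_{j-1}}(u)=c_{Z_j}(u)$, while for active $u\in\Gamma(v_j)$ the value $c_{Z_j}(u)$ is obtained from $c_{Z_{j-1}}(u)$ either identically or by the transposition swapping $X(v_j)$ and $Y(v_j)$.

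For part (1), fix $v$ with $\dist_G(v,X\oplus Y)\ge 2$. Each $v_j\in X\oplus Y$, hence $v\notin\Gamma^+(v_j)$, so~\eqref{eq-worst-case-coupling-w} applied to $(Z_{j-1},Z_j)\to(Z'_{j-1},Z'_j)$ gives $Z'_{j-1}(v)=Z'_j(v)$ with probability one; telescoping over $j=1,\dots,\ell$ yields $X'(v)=Z'_0(v)=\cdots=Z'_\ell(v)=Y'(v)$.

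For part (2), fix $v$ with $\dist_G(v,X\oplus Y)=1$ and write $D_v=\{X(u),Y(u)\mid u\in\Gamma(v)\cap(X\oplus Y)\}$. Since $v\notin X\oplus Y$ we have $v\neq v_j$ and $Z_j(v)=X(v)=Y(v)$ for all $j$, so if $v$ is lazy then $Z'_j(v)=Z_j(v)$ for all $j$ and $X'(v)=Y'(v)$; assume henceforth $v$ is active, so that $c_X(v)=c_{Z_0}(v)$ is defined. The crux is the claim that if $c_{Z_0}(v)\notin D_v$ then $c_{Z_j}(v)=c_{Z_0}(v)$ for every $j$, proved by a one-line induction on $j$: the only possible change at step $j$ is the transposition of $X(v_j)$ and $Y(v_j)$, and whenever that transposition is in play we have $v\in\Gamma(v_j)$, hence $\{X(v_j),Y(v_j)\}\subseteq D_v$, so it fixes the color $c_{Z_{j-1}}(v)\notin D_v$. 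Granting the claim, for each $j$ either $v\notin\Gamma^+(v_j)$ and $Z'_{j-1}(v)=Z'_j(v)$ by~\eqref{eq-worst-case-coupling-w}, or $v\in\Gamma(v_j)$ and the third bullet of Observation~\ref{observation-worst-case-coupling} forces $Z'_{j-1}(v)=Z'_j(v)$ unless $\{c_{Z_{j-1}}(v),c_{Z_j}(v)\}\subseteq\{X(v_j),Y(v_j)\}$, which is impossible since $c_{Z_0}(v)\notin D_v$ while $\{X(v_j),Y(v_j)\}\subseteq D_v$. Telescoping once more gives $X'(v)=Y'(v)$, so $X'(v)\neq Y'(v)$ forces $v$ active and $c_X(v)\in D_v$, which is exactly the stated condition.

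The substantive (but modest) obstacle is not the combinatorics above but the verification that the path-coupling construction genuinely realizes, for each consecutive pair $Z_{j-1},Z_j$, the exact single-vertex local coupling of Section~\ref{section-coupling-on-general-graphs}, so that Observation~\ref{observation-worst-case-coupling} and Lemma~\ref{lemma-worst-case-coupling-x'v-neq-y'v} apply verbatim, together with the easy invariant that the transpositions accumulated along the path can only permute a proposed color within $D_v$ and never carry it out of the complement of $D_v$. Once these are recorded, both parts are just telescoping chains of equalities.
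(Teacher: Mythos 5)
Your proposal is correct and is essentially the paper's own argument: the paper proves the same statement via a stronger claim (that $c_X(v)\notin\{X(u),Y(u)\mid u\in\Gamma(v)\cap(X\oplus Y)\}$ forces $X'(v)=Y'(v)$ and $c_X(v)=c_Y(v)$), shown by induction on $|X\oplus Y|$ using the single-disagreement facts from Observation~\ref{observation-worst-case-coupling} and~\eqref{eq-worst-case-coupling-w} of Lemma~\ref{lemma-worst-case-coupling-x'v-neq-y'v}. Your link-by-link telescoping along the path, with the separate invariance claim that the proposal at $v$ is never moved by the $X(v_j)/Y(v_j)$ transpositions, is just that induction unrolled, so no substantive difference.
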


\begin{proof}By the local coupling defined in Section~\ref{section-coupling-on-general-graphs}, two chains $X$ and $Y$ must select the same set of active vertices. If vertex $v \notin X \oplus Y$ is lazy in chain $X$, then it must be lazy in chain $Y$,  which implies that $X'(v) = X(v) = Y(v) = Y'(v)$.

Assume that vertex $v$ is active in chain $X$. We prove a stronger Claim which implies both claims in the observation:
\begin{claim*}
If $v \not \in (X \oplus Y)$ and $c_X(v) \not \in \{X(u), Y(u) \mid u \in\Gamma(v) \cap (X \oplus Y) \}$, then it must hold that $X(v) = Y(v)$ and $c_X(v) = c_Y(v)$.
\end{claim*}
Note that if $\dist_G(v, X \oplus Y) \geq 2$, then $c_X(v) \not \in \emptyset$ holds trivially. Thus it covers the first part of the Observation. Suppose that $\vert X \oplus Y \vert = k$, we prove it by induction on $k$. 
\begin{itemize}
\item Base case $k = 1$. Suppose $X \oplus Y = \{v_1\}$. If $\dist_G(v, X\oplus Y) \geq 2$, then by~\eqref{eq-worst-case-coupling-w}  in Lemma~\ref{lemma-worst-case-coupling-x'v-neq-y'v}, we have $X'(v) = Y(v)$, and by construction of the coupling, it holds that $c_X(v) = c_Y(v)$. If $\dist_G(v, X\oplus Y) = 1$, then by Observation~\ref{observation-worst-case-coupling} ,  $X'(v) \neq Y'(v) $ occurs only if $\{c_X(v), c_Y(v)\} \subseteq \{X({v_1}),Y({v_1})\}$. If $c_X(v) \not \in \{X({v_1}), Y({v_1})\}$, then  $X'(v) = Y'(v)$ and $c_X(v) = c_Y(v)$ regardless of which distribution $(c_X(v), c_Y(v))$ is sampled from. 
\item Suppose the Claim holds for all pairs of colorings that differ on no more than $k$ vertices. For any $X, Y$ differ at $k+1$ vertices, consider the coloring sequence $X=Z_0 \sim \ldots \sim Z_{k+1} = Y$. Then $(X', Z_k')$ is sampled from the coupling $(X, Z_k) \rightarrow (X', Z'_k)$ and $Y'$ is sampled from the coupling $(Z_k, Y) \rightarrow (Z'_k, Y')$ condition on $Z_k'$. Note that $v \notin (X \oplus Y)$ implies $v \notin (X \oplus Z_k)$ and $v \notin (Z_k \oplus Y)$.
Since $v \not \in (X \oplus Z_k)$ and $c_X(v) \not \in \{X(u), Z_k(u) \mid u \in\Gamma(v) \cap (X \oplus Z_k)\}$, then by I.H., it must hold that $X'(v) = Z_k'(v)$ and $c_X(v) = c_{Z_k}(v)$.  Further, since $v \not \in (Z_k \oplus Y)$ and $c_{Z_k}(v) =  c_X(v) \not \in \{Z_k(u), Y(u) \mid u \in\Gamma(v) \cap (Z_k \oplus Y)\}$, then by I.H., it must hold that $Z_k'(v) = Y'(v)$ and $c_{Z_k}(v) = c_Y(v)$. Combine them together, we have $X'(v) = Y'(v)$ and $c_X(v) = c_Y(v)$.  
\end{itemize}
\end{proof}

We then consider the coupling  $(X_t,Y_t)=(X_t,Y_t)_{t\ge 0}$ of two \LLM{} chains starting from initial colorings $(X_0,Y_0)$, constructed by applying the coupled transition $(X,Y)\to(X',Y')$ defined above at each step.

The following corollary of Observation~\ref{observation-coupling} says that if the initial colorings $X_0, Y_0$ differ at single vertex $v$, then disagreements can not percolate too fast in the coupled chains.

\begin{corollary}
\label{corollary-coupling-disagreement-percolation}
Let $v\in V$ and $X_0, Y_0\in[q]^V$ be two colorings that differ only at vertex $v$. For the coupling $(X_t, Y_t)$ of two \LLM{} chains, it holds that $X_t \oplus Y_t \subseteq B_t(v )$ for all $t\ge 0$. 	
\end{corollary}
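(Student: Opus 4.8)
The plan is a straightforward induction on $t$, with Observation~\ref{observation-coupling} doing essentially all of the work. First I would dispatch the base case $t=0$: by hypothesis $X_0 \oplus Y_0 = \{v\}$, and $B_0(v) = \{v\}$, so the containment $X_0 \oplus Y_0 \subseteq B_0(v)$ holds.

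For the inductive step I would assume $X_t \oplus Y_t \subseteq B_t(v)$ and reason about the single coupled transition $(X_t, Y_t) \to (X_{t+1}, Y_{t+1})$. The key point is that, by the way the coupling $(X_s, Y_s)_{s\ge 0}$ is constructed, this per-step transition is precisely the coupled transition $(X, Y) \to (X', Y')$ of Section~\ref{subsec:grandcoupling}; in particular it is valid for arbitrary (not necessarily proper) colorings, which matters because the \LLM{} chain may pass through improper colorings. Hence part~1 of Observation~\ref{observation-coupling} applies with $X \gets X_t$, $Y \gets Y_t$: every vertex $w$ with $\dist_G(w, X_t \oplus Y_t) \ge 2$ satisfies $X_{t+1}(w) = Y_{t+1}(w)$. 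Taking the contrapositive, $X_{t+1} \oplus Y_{t+1} \subseteq \{\, w \in V : \dist_G(w, X_t \oplus Y_t) \le 1 \,\}$. Now combine this with the inductive hypothesis and the triangle inequality: if $w$ is within distance $1$ of some $u \in X_t \oplus Y_t \subseteq B_t(v)$, then $\dist_G(w, v) \le \dist_G(w, u) + \dist_G(u, v) \le 1 + t$, so $w \in B_{t+1}(v)$. This gives $X_{t+1} \oplus Y_{t+1} \subseteq B_{t+1}(v)$ and closes the induction.

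I do not anticipate a genuine obstacle here; the only spot needing a sentence of care is the identification of the per-step coupling with the one analyzed in Section~\ref{subsec:grandcoupling}, so that Observation~\ref{observation-coupling}(1) can legitimately be invoked at every step irrespective of whether the current colorings are proper. One may also remark that the degenerate case $X_t = Y_t$ (so $X_t \oplus Y_t = \emptyset$, the chains having coalesced and remaining coalesced) is vacuously consistent with the claimed containment, so it requires no separate treatment.
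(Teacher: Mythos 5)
Your proof is correct and matches the paper's intent: the corollary is stated as an immediate consequence of Observation~\ref{observation-coupling}, and the paper's implicit argument is exactly your induction on $t$ using part~1 of that observation together with the triangle inequality. Your remarks about the per-step coupling being the one of Section~\ref{subsec:grandcoupling} (valid for improper colorings) and about the coalesced case are accurate and need no further elaboration.
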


The next corollary bounds the expectation and the deviation from expectation, for the number of new disagreements generated at each step of the couple chains. 

\begin{corollary}
\label{corollary-coupling-bound-new-disagreement}
Let $(X_t, Y_t)$ be the coupling of two \LLM{} chains with activeness $p$ on $q$-colorings of graph $G$ with maximum degree $\Delta$. Let $\mathcal{N}(D_{t}) = \vert(X_{t} \oplus Y_{t}) \setminus (X_{t-1} \oplus Y_{t-1})\vert$ be the number of new disagreements generated at step $t$. Then it holds that
\begin{align*}
\E{\mathcal{N}(D_{t}) \mid X_{t-1}, Y_{t-1}} \leq \frac{2p\Delta\vert X_{t-1} \oplus Y_{t-1} \vert}{q}.
\end{align*}
Furthermore, for any $\ell \geq \frac{20p\Delta\vert X_{t-1} \oplus Y_{t-1} \vert}{q}$, we have
\begin{align*}
\Pr\left[ \mathcal{N}(D_{t}) \geq \ell\mid X_{t-1}, Y_{t-1}\right] \leq \exp\left( -\ell \right).
\end{align*}
\end{corollary}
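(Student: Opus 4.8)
The plan is to read both bounds off Observation~\ref{observation-coupling}, reducing $\mathcal{N}(D_t)$ to a sum of independent indicators and then invoking the Chernoff bound. Throughout, everything is conditioned on $X_{t-1},Y_{t-1}$; write $D\triangleq X_{t-1}\oplus Y_{t-1}$ for the previous disagreement set, which we treat as fixed. For each vertex $v\in V$ set $S_v\triangleq\{X_{t-1}(u),Y_{t-1}(u)\mid u\in\Gamma(v)\cap D\}$, a set of colors with $|S_v|\le 2|\Gamma(v)\cap D|$, and let $E_v$ be the event that $v$ is active at step $t$ and proposes a color $c_X(v)\in S_v$ in chain $X$. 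A vertex contributes to $\mathcal{N}(D_t)$ only if it lies in $(X_t\oplus Y_t)\setminus D$; by part~1 of Observation~\ref{observation-coupling} such a vertex must be adjacent to $D$, and then by part~2 the event $X_t(v)\neq Y_t(v)$ is contained in $E_v$. Hence, pointwise, $\mathcal{N}(D_t)\le Z\triangleq\sum_{v\in V}\one{E_v}$, where the vertices not adjacent to $D$ contribute nothing since $S_v=\emptyset$ for them.

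For the expectation I would just compute, using that each active vertex proposes a uniformly random color of $[q]$, that $\Pr[E_v\mid X_{t-1},Y_{t-1}]=p\,|S_v|/q\le 2p\,|\Gamma(v)\cap D|/q$. Summing over $v$ and switching the order of summation, $\sum_{v\in V}|\Gamma(v)\cap D|=\sum_{u\in D}\deg_G(u)\le\Delta|D|$, so $\E{\mathcal{N}(D_t)\mid X_{t-1},Y_{t-1}}\le\E{Z\mid X_{t-1},Y_{t-1}}\le 2p\Delta|D|/q$, which is the claimed first inequality.

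For the tail bound, the key point is that, conditioned on $X_{t-1},Y_{t-1}$, the sets $S_v$ are deterministic and the events $\{E_v\}_{v\in V}$ are mutually independent, because $E_v$ depends only on the activeness of $v$ and on $v$'s own proposal $c_X(v)$, and these $|V|$ pairs of random variables are mutually independent in the coupled chain. Thus $Z$ is a sum of independent Bernoulli variables with mean $\mu\triangleq\E{Z\mid X_{t-1},Y_{t-1}}\le 2p\Delta|D|/q$. For any $\ell\ge 20p\Delta|D|/q=10\cdot(2p\Delta|D|/q)\ge\e^2\mu$, I apply the Chernoff bound~\eqref{eq-Chernoff-bound-2} — which, as its derivation from~\eqref{eq-Chernoff-bound-1} shows, stays valid when $\ell$ merely exceeds $\e^2$ times an upper bound on the mean — to get $\Pr[Z\ge\ell\mid X_{t-1},Y_{t-1}]\le\exp(-\ell)$, and since $\mathcal{N}(D_t)\le Z$ this yields $\Pr[\mathcal{N}(D_t)\ge\ell\mid X_{t-1},Y_{t-1}]\le\exp(-\ell)$.

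There is no genuine obstacle here; the corollary is a direct consequence of Observation~\ref{observation-coupling} and a Chernoff bound. The only points needing a little care are: (i) checking that the ``forbidden color set'' in Observation~\ref{observation-coupling}(2) is exactly $S_v$, so that the event $X_t(v)\neq Y_t(v)$ is contained in $E_v$; (ii) the mutual independence of the $\one{E_v}$ given the previous colorings, which relies on the per-vertex structure of the proposals in the coupling of Section~\ref{subsec:grandcoupling}; and (iii) the mild point that the $\e^2$ slack in~\eqref{eq-Chernoff-bound-2} comfortably absorbs replacing the exact mean by the upper bound $2p\Delta|D|/q$, as $\e^2<10$.
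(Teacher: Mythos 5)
Your proof is correct and takes essentially the same route as the paper: Observation~\ref{observation-coupling} reduces each new disagreement to the per-vertex event that $v$ is active and proposes in chain $X$ one of the at most $2|\Gamma(v)\cap(X_{t-1}\oplus Y_{t-1})|$ colors of its disagreeing neighbors, the expectation bound follows by summing $p|S_v|/q$ over vertices, and the tail bound is the Chernoff bound~\eqref{eq-Chernoff-bound-2} after noting the relevant indicators are independent (the paper phrases this as stochastic domination by a sum of independent $0$-$1$ variables). Your explicit verification that the $E_v$ are mutually independent given $X_{t-1},Y_{t-1}$ and that $20/2=10>\e^2$ merely fills in details the paper leaves implicit.
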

\begin{proof}
Let $\partial(X_{t-1} \oplus Y_{t-1}) = \{v \in V \mid v \not \in (X_{t-1} \oplus Y_{t-1}) \land \Gamma(v) \cap (X_{t-1} \oplus Y_{t-1}) \neq \emptyset\}$. From Observation~\ref{observation-coupling},  vertex $v$ becomes a new disagreement at step $t $ only if  $v \in \partial (X_{t-1} \oplus Y_{t-1})$, $v$ is active and proposes the color $c_X(v) \in \{X_{t-1}(u), Y_{t-1}(u) \mid u \in \Gamma(v) \cap (X_{t-1} \oplus Y_{t-1}) \}$. Hence, the expected number of new disagreements is at most $2p\Delta \vert X_{t-1} \oplus Y_{t-1} \vert/q$.

 Furthermore, the laziness and proposed colors are fully independently. Thus, the number of new disagreements is stochastically dominated by the sum of independent 0-1 random variables. The second inequality holds by the Chernoff bound~\eqref{eq-Chernoff-bound-2}.
\end{proof}

\subsection{Analysis of the coupling}
Next, we show that starting from any two colorings that differ at a single vertex $v$, after constant many steps of coupling, the Hamming distance contracts with a constant factor in expectation.
 
\begin{lemma}
\label{lemma-coupling-1-1/3}
For all $0 <\delta <0.3$, there exists $\Delta_4 = \Delta_4(\delta)$,  such that for every graph $G=(V,E)$ with maximum degree $\Delta \geq \Delta_4$ and girth at least $9$, if $q \geq (\alpha^*+\delta)\Delta$, then for any vertex $v\in V$, any initial colorings $X_0,Y_0\in[q]^V$ that differ only at $v$, the coupling $(X_t,Y_t)$ of two \LLM{} chains with activeness $p=\frac{\delta}{30}$ on $q$-colorings of graph $G$ satisfies
\begin{align*}
\E{X_{T_m} \oplus Y_{T_m}} \leq 1/3,
\end{align*}
where  $T_m = \frac{1200}{\delta^2}\ln \frac{600}{\delta}  = \Theta(1)$.
\end{lemma}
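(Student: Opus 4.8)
The plan is to run the coupling $(X_t,Y_t)$ for $T_m$ steps starting from $X_0\oplus Y_0=\{v\}$ and to show the expected Hamming distance at time $T_m$ drops below $1/3$. The key idea is to separate the "one disagreement at $v$ persists" contribution from the "disagreement spreads to neighbors and beyond" contribution. For the first contribution, I would apply the local uniformity property (Theorem~\ref{theorem-local-uniformity-G}) to the coloring $X_t$ at each time step $t\ge t_0$: with probability $1-\exp(-\Delta/C)$ the number of available colors $|A(X_t,v)|$ at $v$ is at least $(1-10\zeta)q\,\e^{-\deg(v)/q}$, and plugging this improved bound into the generalized version of inequality~\eqref{eq-worst-case-coupling-v0} gives
\begin{align*}
\Pr[X'(v)=Y'(v)\mid X_t,Y_t]\ge \frac{p\,|A(X_t,v)|}{q}\left(1-\frac{3p}{q}\right)^\Delta\ge p(1-10\zeta)\e^{-\deg(v)/q}\left(1-\frac{3p}{q}\right)^\Delta.
\end{align*}
Using $q\ge(\alpha^*+\delta)\Delta$ and $\alpha^*=\e^{1/\alpha^*}$, one checks that $\e^{-\deg(v)/q}\ge\e^{-1/\alpha^*}=1/\alpha^*$, so $|A(X_t,v)|/q$ is bounded below by a quantity strictly exceeding $1/(\alpha^*+\delta)$ times a constant close to $1$; choosing $\zeta$ small (as a function of $\delta$) and $p=\delta/30$, the factor $(1-3p/q)^\Delta\ge 1-3p/(\alpha^*+\delta)$ is close to $1$, so this coalescence probability at $v$ is at least $p(1+\delta')/( \alpha^*)$ for some $\delta'=\delta'(\delta)>0$, i.e.\ strictly larger than what a worst-case bound $p(q-\Delta)/q$ would give. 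The effect is that, while $v$ is the only disagreement, each step it disappears with probability $\ge c_1 p$ for a constant $c_1>1$ that beats the rate at which new disagreements are born.

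Next I would control the spread of disagreements. By Corollary~\ref{corollary-coupling-disagreement-percolation}, $X_t\oplus Y_t\subseteq B_t(v)$, so over $T_m=\Theta(1)$ steps disagreements stay inside a ball of constant radius; the local uniformity theorem applies to $v$ for all $t\in[t_0,t_\infty]$ with $t_0=\Theta(1)\le T_m\le t_\infty=\exp(\Delta/C)$, so the good event holds simultaneously at all relevant steps with probability $1-T_m\exp(-\Delta/C)$. Combining the improved coalescence at $v$ with Corollary~\ref{corollary-coupling-bound-new-disagreement}, which gives $\E{\mathcal N(D_t)\mid X_{t-1},Y_{t-1}}\le 2p\Delta|X_{t-1}\oplus Y_{t-1}|/q\le 2p|X_{t-1}\oplus Y_{t-1}|/(\alpha^*+\delta)$, I would set up a one-step drift inequality on the (conditional) expected Hamming distance. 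As long as the disagreement set is just $\{v\}$, the expected change is $\le -c_1 p + 2p/(\alpha^*+\delta)$, which is $\le -c_2 p<0$ after the constant-factor savings from local uniformity; once the disagreement has spread, the birth rate $2p/(\alpha^*+\delta)<1$ means the total disagreement set is stochastically dominated by a subcritical branching-type process whose expected size stays $O(1)$ over $T_m$ steps and decays geometrically. Iterating the drift over $T_m\gtrsim \frac{1}{\delta^2}\log\frac{1}{\delta}$ steps brings the expected Hamming distance below $1/3$; the precise constant $T_m=\frac{1200}{\delta^2}\ln\frac{600}{\delta}$ is chosen to absorb the polynomial-in-$1/\delta$ factors from the drift rate $c_2 p=\Theta(\delta^2)$ and from the initial burn-in length $t_0=\Theta(\frac{1}{\delta^2}\log\frac{1}{\delta})$ needed before local uniformity kicks in.

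The main obstacle I anticipate is bookkeeping the interaction between the two regimes cleanly: local uniformity only gives the improved coalescence probability at $v$ when the good event holds, which fails with probability $\exp(-\Delta/C)$, and on that (tiny) bad event the disagreement set could in principle grow — but only to size $\le |B_{T_m}(v)|\le\Delta^{T_m}$, so the contribution to the expectation is $\Delta^{T_m}\exp(-\Delta/C)=o(1)$ for $\Delta\ge\Delta_4(\delta)$ large enough, which is exactly why the theorem needs $\Delta$ to exceed a constant depending on $\delta$. A secondary subtlety is that local uniformity is stated for a single vertex $v$, but for the drift argument I only ever need it at $v$ itself (the "focal" disagreement), not at the newly spread disagreements, so one application per time step suffices; the spread is handled purely by the crude Corollary~\ref{corollary-coupling-bound-new-disagreement} bound, which does not need any uniformity. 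Assembling these pieces — union bound over $\le T_m$ time steps for the good event, drift inequality conditioned on the good event, and the $o(1)$ error term from the bad event — yields $\E{|X_{T_m}\oplus Y_{T_m}|}\le 1/3$.
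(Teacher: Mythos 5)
There is a genuine gap, and it sits at the heart of your plan: the claim that local uniformity is only needed at the focal vertex $v$, with the spread of disagreements ``handled purely by the crude Corollary~\ref{corollary-coupling-bound-new-disagreement} bound.'' That corollary only bounds the \emph{birth} rate of new disagreements; it says nothing about existing disagreements at vertices $u\neq v$ dying. Once the disagreement has spread, each disagreeing vertex $u$ coalesces, under the worst-case bound \eqref{eq-worst-case-coupling-v0}, with probability only about $\frac{p(q-\Delta)}{q}\approx 0.43\,p$ when $q\approx \alpha^*\Delta$, while it spawns new disagreements at rate up to $\frac{p\Delta}{q}\approx 0.57\,p$ (and $\frac{2p\Delta}{q}\approx 1.13\,p$ with the cruder bound you invoke). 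So the per-disagreement death rate is \emph{smaller} than the birth rate: the disagreement set is a supercritical, not subcritical, process, and its expected size grows geometrically rather than ``staying $O(1)$ and decaying.'' Your statement that ``$2p/(\alpha^*+\delta)<1$'' implies subcriticality confuses a birth rate below one new offspring per step with a net decay; individuals that do not die make the population grow whenever the birth rate exceeds the death rate, which is exactly the situation here and is precisely why the worst-case path coupling alone needs $q\geq(2+\delta)\Delta$. Even in the single-disagreement regime your numbers do not close: with the factor-$2$ birth bound, the drift at $v$ is about $-p\,\mathrm{e}^{-1/\alpha^*}+2p/\alpha^*>0$; one needs the sharper per-neighbor bound \eqref{eq-worst-case-coupling-u} of $p/q$ to make $\mathrm{e}^{-1/(\alpha^*+\delta)}-\frac{1}{\alpha^*+\delta}>0$ do the work.

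The paper's proof repairs exactly this point: it requires the local uniformity of Theorem~\ref{theorem-local-uniformity-G} to hold simultaneously at \emph{every} vertex $z\in B_{T_m}(v)$ (bad event $\mathcal{B}_2$, with a union bound over the $\Delta^{T_m}$ vertices of the ball, which is one reason $\Delta\geq\Delta_4(\delta)$ is needed), maintains the event that the Hamming distance never exceeds $\Delta^{2/3}$ (event $\mathcal{E}$, controlled by a separate tail bound via Corollary~\ref{corollary-coupling-bound-new-disagreement}) so that uniformity for $X_t$ transfers to the intermediate colorings $Z_i$ of the path coupling up to a $\Delta^{2/3}$ loss, and only then obtains a uniform per-disagreement contraction factor $1-\frac{\delta^2}{300}$ at every disagreeing vertex during $[T_b+1,T_m]$, which must also overcome the crude growth factor $(1+p/\alpha^*)^{T_b}$ accrued during the burn-in phase. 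Your outline has the right ingredients (local uniformity after a burn-in, percolation bound $X_t\oplus Y_t\subseteq B_t(v)$, a small bad-event contribution of order $\Delta^{T_m}\exp(-\Delta/C)$), but without the uniformity boost at all spread disagreements the drift argument fails and the conclusion $\E{|X_{T_m}\oplus Y_{T_m}|}\leq 1/3$ does not follow.
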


The main theorem regarding the mixing rate (Theorem~\ref{theorem-mixing-1.763}) is then an easy consequence of this lemma. 
Let $\Delta_3(\delta) = \Delta_4(\delta)$, where $\Delta_4$ is the threshold in Lemma~\ref{lemma-coupling-1-1/3}.  
For arbitrary two colorings  differ at a single vertex, there exists a coupling such that  the expected Hamming distance between them is at most $1/3$ after $C'' = \frac{1200}{\delta^2}\ln\frac{600}{\delta} = \Theta(1)$ steps. 
Since the Hamming distance between any two colorings is at most $n$, by the path coupling lemma, we have 
\[
\tau(\epsilon) \leq \frac{3C''}{2}\log \frac{n}{\epsilon}.
\]
The rest of the section is dedicated to the proof of Lemma~\ref{lemma-coupling-1-1/3}.

We use the technique developed in \cite{dyer2013randomly} to prove Lemma~\ref{lemma-coupling-1-1/3},
which partitions the time interval $[0,T_m]$ into to two disjoint phases $[0, T_b]$ and $[T_b + 1, T_m]$, the first phase is called the \emph{burn-in} phase.  After the burn-in phase, typically, the Hamming distance between two chains are bounded, all disagreements are near vertex $v$, and the local uniformity properties is guaranteed. Then we can prove that the expected Hamming distance will decrease in each step during $[T_b + 1, T_m]$. A crude upper bound is applied on the Hamming distance if non-typical events occur.
\begin{proof}[Proof of Lemma~\ref{lemma-coupling-1-1/3}]
For two colorings $X_t, Y_t$, define their difference as
\begin{align*}
D_t = \{u \mid X_t(u) \neq Y_t(u)\}	.
\end{align*}
Let $H_t = \vert D_t \vert$ denote their Hamming distance. Also, denote their cumulative difference by
\begin{align*}
D_{\leq t} = \bigcup_{t' \leq t}D_t,
\end{align*}
and denote their cumulative Hamming distance as $H_{\leq t} = \vert D_{\leq t} \vert$.

Let $\delta', p', \zeta'$ and $C'=C'(\delta', \zeta')$  denote the parameters $\delta, p, \zeta$ and $C=C(\delta,\zeta)$ in Theorem~\ref{theorem-local-uniformity-G}, respectively. 
We apply Theorem~\ref{theorem-local-uniformity-G} with $p' = p=\frac{\delta}{30}$, $\delta' = 1.7$ and $\zeta' = p /20$. 
Define
\begin{align*}
T_b = \frac{1}{p}\left(\frac{2.7}{1.7}\right)^2\ln\frac{20}{p}.
\end{align*}
Note that $C'=C'(\delta', \zeta')$ now depends only on $\delta$. Recall that $T_m = \frac{1200}{\delta^2}\ln \frac{600}{\delta}$. If  we take $\Delta \geq C'\ln T_m$, then $T_m  < \exp(\Delta / C')$. Thus we can assume that the local uniformity property in Theorem~\ref{theorem-local-uniformity-G} holds for all  time $t \in [T_b, T_m]$. If $\Delta \geq \Delta_0(p, \delta',\zeta') = \Delta_0(\delta)$, then it holds that
\begin{align}
\label{eq-coupling-local-uniformity}
\Pr\left[ \forall t \in \left[  t_b , t_m \right]: \,\, \frac{\vert A(X_t, v) \vert }{q} \ge (1 - p/2)\e^{-\deg(v) / q} \right] \ge 1- \exp(-\Delta / C').
\end{align}

For each $t \geq T_b$, we define following bad events:
\begin{itemize}
\item $\mathcal{E}(t)$: there exists some time $s < t$, such that $\vert X_s \oplus Y_s \vert > \Delta^{2/3}$.
\item $\mathcal{B}_1 (t)$: $D_{\leq t} \not \subseteq B_{T_m}(v)$.
\item $\mathcal{B}_2(t)$: there exists some time $T_b \leq \tau \leq t$ and a vertex $z \in B_{T_m}(v)$ such that 
\begin{align*}
\vert A(X_\tau, z) \vert \leq (1 - p / 2)q\e^{-d(z) / q}.	
\end{align*}
\end{itemize}
Define bad event $\bad(t)$ as
\begin{align*}
\bad(t) = \bad_1(t) \cup \bad_2(t).	
\end{align*}
Define good event $\mathcal{G}(t)$ as
\begin{align*}
\mathcal{G}(t) = \overline{\mathcal{E}(t)} \cap \overline{\bad(t)}.	
\end{align*}
For all events when the time $t$ is dropped, we are referring to the event at time $t = T_m$. Then the Hamming distance between $X_{T_m}$ and $Y_{T_m}$ can be bounded as follows
\begin{align}
\label{eq-bound-HTM-by-cases}
\E{H_{T_m}} &= \E{H_{T_m}\one{\mathcal{E}}} + \E{H_{T_m}\one{\overline{\mathcal{E}}}\one{\mathcal{B}}} + \E{H_{T_m}\one{\mathcal{G}}}\notag\\
&\leq \E{H_{T_m}\one{\mathcal{E}}} + \Delta^{2/3}\Pr[\bad] + \E{H_{T_m}\one{\mathcal{G}}}.
\end{align}
Since the bad events (non-typical events) occur with small probability, then we have following Claims.
\begin{claim}
\label{claim-bound-HTM-by-cases-12}
$\Pr[\bad] \leq \exp(-\sqrt{\Delta})$ and $\E{H_{T_m}\one{\mathcal{E}}} \leq \exp(-\sqrt{\Delta})$.	
\end{claim}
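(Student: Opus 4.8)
The plan is to handle $\bad=\bad_1\cup\bad_2$ and $\E{H_{T_m}\one{\mathcal{E}}}$ in turn, using the disagreement‑percolation bound of Corollary~\ref{corollary-coupling-disagreement-percolation}, the local uniformity theorem (Theorem~\ref{theorem-local-uniformity-G}), and the one‑step growth estimate of Corollary~\ref{corollary-coupling-bound-new-disagreement}.

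First I would bound $\Pr[\bad]$. Because $X_0,Y_0$ differ only at $v$, Corollary~\ref{corollary-coupling-disagreement-percolation} gives $X_t\oplus Y_t\subseteq B_t(v)$ for every $t$, so $D_{\le T_m}=\bigcup_{t\le T_m}(X_t\oplus Y_t)\subseteq B_{T_m}(v)$ holds deterministically; hence $\Pr[\bad_1]=0$. For $\bad_2$, note that~\eqref{eq-coupling-local-uniformity} is the instance of Theorem~\ref{theorem-local-uniformity-G} applied to $v$; applying the same theorem with the same parameters $p'=p$, $\delta'$, $\zeta'$ (chosen so that the guaranteed lower bound equals the threshold $(1-p/2)q\e^{-d(z)/q}$ appearing in $\bad_2$, that $t_0=T_b$, and that $t_\infty=\exp(\Delta/C')>T_m$ whenever $\Delta\ge C'\ln T_m$) to each vertex $z\in B_{T_m}(v)$ shows that the $\bad_2$‑event restricted to a fixed $z$ has probability at most $\exp(-\Delta/C')$. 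A union bound over the at most $\Delta^{T_m+1}$ vertices $z\in B_{T_m}(v)$ gives $\Pr[\bad_2]\le\Delta^{T_m+1}\exp(-\Delta/C')$. Since $T_m$ and $C'$ depend only on $\delta$, this is below $\exp(-\sqrt{\Delta})$ once $\Delta\ge\Delta_4(\delta)$ is chosen large enough, so $\Pr[\bad]\le\exp(-\sqrt{\Delta})$.

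Next I would bound $\E{H_{T_m}\one{\mathcal{E}}}$. From $D_{\le T_m}\subseteq B_{T_m}(v)$ we again get the crude deterministic bound $H_{T_m}\le\vert D_{\le T_m}\vert\le\Delta^{T_m+1}$, so it suffices to show $\Pr[\mathcal{E}]\le\exp(-2\sqrt{\Delta})$; then $\E{H_{T_m}\one{\mathcal{E}}}\le\Delta^{T_m+1}\exp(-2\sqrt{\Delta})\le\exp(-\sqrt{\Delta})$ for $\Delta$ large. Now $\mathcal{E}=\mathcal{E}(T_m)\subseteq\{H_{\le T_m}>\Delta^{2/3}\}$ and $H_{\le T_m}\le 1+\sum_{t=1}^{T_m}\mathcal{N}(D_t)$, so the task is an upper‑tail bound for $\sum_{t=1}^{T_m}\mathcal{N}(D_t)$. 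Markov's inequality alone is not enough, since $\E{H_{\le T_m}}\le(1+r)^{T_m}$ with $r:=2p\Delta/q\le 2p/\alpha^*$ is only a ($\delta$‑dependent) constant; instead I would run an exponential‑moment argument on top of Corollary~\ref{corollary-coupling-bound-new-disagreement}. That corollary gives that, conditioned on $(X_{t-1},Y_{t-1})$, $\mathcal{N}(D_t)$ is stochastically dominated by a sum of independent $0/1$ variables of total mean at most $r\,H_{t-1}\le r\,H_{\le t-1}$, so for $0<\lambda\le\ln 2$,
\begin{align*}
\E{\e^{\lambda\mathcal{N}(D_t)}\mid X_{t-1},Y_{t-1}}\le\exp\bigl((\e^{\lambda}-1)\,r\,H_{\le t-1}\bigr)\le\exp\bigl(2\lambda r\,H_{\le t-1}\bigr),
\end{align*}
and hence $\E{\e^{\lambda H_{\le t}}\mid X_{t-1},Y_{t-1}}\le\exp\bigl(\lambda(1+2r)H_{\le t-1}\bigr)$. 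Choosing $\lambda_t=\lambda_{T_m}(1+2r)^{T_m-t}$ with $\lambda_{T_m}=(\ln 2)(1+2r)^{-T_m}$ (so that all $\lambda_t\le\ln 2$ and $\lambda_t(1+2r)=\lambda_{t-1}$), this inequality telescopes from $t=T_m$ down to $t=0$, using $H_{\le 0}=1$, to $\E{\e^{\lambda_{T_m}H_{\le T_m}}}\le\e^{\ln 2}=2$; Markov's inequality then gives $\Pr[\mathcal{E}]\le\Pr[H_{\le T_m}>\Delta^{2/3}]\le 2\exp\bigl(-(\ln 2)(1+2r)^{-T_m}\Delta^{2/3}\bigr)$, which is at most $\exp(-2\sqrt{\Delta})$ once $\Delta\ge\Delta_4(\delta)$ is large enough, since $r$ and $T_m$ depend only on $\delta$.

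The main obstacle is exactly this concentration step for $\Pr[\mathcal{E}]$: because $T_m$ is a huge $\delta$‑dependent constant and $r$ a tiny one, the factor $(1+2r)^{T_m}$, and with it $\lambda_{T_m}^{-1}$, is enormous, so the argument must be arranged so that the ``constants'' $\Delta_4(\delta)$ and $C'(\delta)$ absorb these blow‑ups; modulo that bookkeeping the exponential‑moment telescoping is clean, and the remaining ingredients (the percolation containment, the union bound over $B_{T_m}(v)$, and the crude bound $H_{T_m}\le\Delta^{T_m+1}$) are routine. An alternative to the telescoping is to chain the single‑step sub‑exponential tail of Corollary~\ref{corollary-coupling-bound-new-disagreement} at the first time $H_{\le t}$ passes $\Delta^{2/3}$, but a single increment may overshoot the threshold only slightly, so one would still have to control the whole trajectory; the exponential‑moment route is the cleaner one.
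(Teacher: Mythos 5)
Your proof is correct, and the first half is essentially the paper's: $\Pr[\bad_1]=0$ from Corollary~\ref{corollary-coupling-disagreement-percolation}, and $\Pr[\bad_2]$ from the local uniformity guarantee~\eqref{eq-coupling-local-uniformity} applied to each $z$ with a union bound over the at most $\Delta^{T_m+1}$ vertices of $B_{T_m}(v)$, absorbed by choosing $\Delta$ large in terms of $\delta$. For the second half you take a genuinely different route. The paper proves a tail bound at every level, namely $\Pr[H_{\le T_m}\ge \ell]\le \exp(-C''\ell)$ for all $\ell\ge\Delta^{2/3}$ (see~\eqref{eq-up-bound-too-many-disagree}), by introducing the backward geometric thresholds $c_t=(1+12p)c_{t-1}$, $c_{T_m}=1$, union-bounding over the first step at which the increment $\mathcal{N}(D_t)$ exceeds $12p\,c_{t-1}\ell$ via the sub-exponential tail of Corollary~\ref{corollary-coupling-bound-new-disagreement}, and then bounds $\E{H_{T_m}\one{\mathcal{E}}}$ by summing $\ell\Pr[H_{\le T_m}=\ell]$ over $\ell\ge\Delta^{2/3}$. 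You instead cap $H_{T_m}$ deterministically by $\vert B_{T_m}(v)\vert\le\Delta^{T_m+1}$ (again from the percolation corollary) and prove a single tail bound at the threshold $\Delta^{2/3}$ by a telescoped exponential-moment argument with shrinking parameters $\lambda_t$. The two concentration mechanisms encode the same multiplicative growth constant, $(1+12p)^{-T_m}$ in the paper versus $(1+2r)^{-T_m}$ in your $\lambda_{T_m}$, so the quantitative conclusions are comparable; your version avoids the level-by-level bound and the tail summation, at the price of two small points worth tightening: (i) the stochastic domination of $\mathcal{N}(D_t)$ by a sum of independent Bernoullis with total mean $2p\Delta H_{t-1}/q$ is established inside the proof of Corollary~\ref{corollary-coupling-bound-new-disagreement} (its statement only records the expectation and one tail bound), so you should cite or rederive that domination explicitly from Observation~\ref{observation-coupling}; and (ii) in the telescoping you should condition on the full history $X_0,Y_0,\ldots,X_{t-1},Y_{t-1}$ rather than on $(X_{t-1},Y_{t-1})$ alone, since $H_{\le t-1}$ is not a function of the current pair; the coupled chain is Markov, so the one-step MGF bound with $H_{t-1}\le H_{\le t-1}$ still applies and the induction goes through. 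Neither point affects correctness.
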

If the good event (typical event) $\mathcal{G}$ occurs, then we can use local uniformity property to prove that the Hamming distance decreases by a constant factor during $[T_b + 1, T_m]$. Thus we have following Claim.
\begin{claim}\label{claim-bound-HTM-by-cases-3}
$\E{H_{T_m}\one{\mathcal{G}}} \leq 1 / 9$.
\end{claim}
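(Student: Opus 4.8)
The plan is to prove that, on the good event, the Hamming distance contracts by a constant factor at every step from time $T_b$ onward, while during the burn-in phase $[0,T_b]$ it can grow only polynomially in $1/\delta$; since $T_m-T_b=\Theta(\delta^{-2}\ln(1/\delta))$ with a sufficiently large leading constant, iterating the contraction from a $\mathrm{poly}(1/\delta)$ starting value drives $\E{H_{T_m}\one{\mathcal{G}}}$ below $1/9$. The bad events are monotone in $t$, so $\mathcal{G}(t+1)\subseteq\mathcal{G}(t)$ and the indicator $\one{\mathcal{G}(t)}$ is measurable with respect to the coupled trajectory $(X_s,Y_s)_{s\le t}$, which is what makes it legitimate to iterate conditional expectations. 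The heart of the argument is the following one-step contraction: \emph{for every $t\ge T_b$, on the event $\mathcal{G}(t)$ one has $\E{H_{t+1}\mid (X_s,Y_s)_{s\le t}}\le (1-\gamma)H_t$ for a constant $\gamma=\gamma(\delta)>0$.}

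To establish this, I would expand $H_{t+1}$ through the path-coupling construction of Section~\ref{subsec:grandcoupling}: writing $X_t=Z_0\to Z_1\to\cdots\to Z_{H_t}=Y_t$ with $v_1,\dots,v_{H_t}$ enumerating $D_t$, the triangle inequality for Hamming distance gives $H_{t+1}\le\sum_{i=1}^{H_t}\lvert Z'_{i-1}\oplus Z'_i\rvert$, and $\lvert Z'_{i-1}\oplus Z'_i\rvert$ is distributed as one step of the local coupling applied to $(Z_{i-1},Z_i)$, which differ only at $v_i$. The general form of~\eqref{eq-worst-case-coupling-v0} together with~\eqref{eq-worst-case-coupling-u} then yields
\[
\E{\lvert Z'_{i-1}\oplus Z'_i\rvert}\le 1-\frac{p\,\lvert A(Z_{i-1},v_i)\rvert}{q}\left(1-\frac{3p}{q}\right)^{\Delta}+\frac{p\deg(v_i)}{q}.
\]
Since $Z_{i-1}$ disagrees with $X_t$ on at most $H_t$ vertices, $\lvert A(Z_{i-1},v_i)\rvert\ge\lvert A(X_t,v_i)\rvert-H_t$; on $\mathcal{G}(t)$ we have $H_t\le\Delta^{2/3}$ and $v_i\in D_t\subseteq B_{T_m}(v)$, so $\overline{\mathcal{B}_2(t)}$ and Theorem~\ref{theorem-local-uniformity-G} give $\lvert A(X_t,v_i)\rvert\ge(1-p/2)\,q\,\e^{-\deg(v_i)/q}$. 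Plugging in, the contraction reduces to showing that $(1-p/2)\e^{-x}(1-3p/q)^{\Delta}-x$ is bounded below by a positive constant for all $x=\deg(v_i)/q\in(0,1/(\alpha^*+\delta)]$; as this expression is decreasing in $x$, it suffices to check $x=1/(\alpha^*+\delta)$. Multiplying by $\alpha^*+\delta$ and using that $x\mapsto x\e^{-1/x}$ is increasing with $\alpha^*\e^{-1/\alpha^*}=1$ (the defining identity $\alpha^*=\e^{1/\alpha^*}$), so that $(\alpha^*+\delta)\e^{-1/(\alpha^*+\delta)}\ge 1+c_1\delta$ for a constant $c_1>0$, and using Bernoulli's inequality $(1-3p/q)^{\Delta}\ge 1-3p\Delta/q\ge 1-3p/\alpha^*$, the quantity is at least $(1-p/2)(1+c_1\delta)(1-3p/\alpha^*)-1\ge c_2\delta$ for $p=\delta/30$, $\delta<0.3$, and $\Delta\ge\Delta_4(\delta)$ large enough to also absorb the $O(\Delta^{2/3}/q)$ loss from replacing $\lvert A(Z_{i-1},v_i)\rvert$ with $\lvert A(X_t,v_i)\rvert$. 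Summing over $i$ gives the contraction with $\gamma\asymp p\delta\asymp\delta^2$.

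Finally I would iterate. By monotonicity of $\mathcal{G}(\cdot)$ and the tower property, for each $t\ge T_b$,
\[
\E{H_{t+1}\one{\mathcal{G}(t+1)}}\le\E{\one{\mathcal{G}(t)}\,\E{H_{t+1}\mid (X_s,Y_s)_{s\le t}}}\le(1-\gamma)\,\E{H_t\one{\mathcal{G}(t)}},
\]
so $\E{H_{T_m}\one{\mathcal{G}}}\le(1-\gamma)^{T_m-T_b}\E{H_{T_b}}$. In the burn-in phase I would only use the crude growth estimate of Corollary~\ref{corollary-coupling-bound-new-disagreement}, namely $\E{H_{t+1}\mid X_t,Y_t}\le(1+2p\Delta/q)H_t\le(1+2p)H_t$, which together with $H_0=1$ and $T_b=O(p^{-1}\ln(1/p))$ gives $\E{H_{T_b}}\le(1+2p)^{T_b}\le\e^{2pT_b}=\mathrm{poly}(1/\delta)$. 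Hence $\E{H_{T_m}\one{\mathcal{G}}}\le(1-\gamma)^{T_m-T_b}\,\mathrm{poly}(1/\delta)$, which is at most $1/9$ once the leading constant in $T_m=\frac{1200}{\delta^2}\ln\frac{600}{\delta}$ is large enough compared to $1/\gamma=\Theta(1/\delta^2)$. I expect the main obstacle to be the crux step: extracting a strictly positive constant from $(1-p/2)\e^{-x}(1-3p/q)^{\Delta}-x$ at $x=1/(\alpha^*+\delta)$, which is exactly where the identity $\alpha^*=\e^{1/\alpha^*}$ enters — it makes the leading term vanish precisely at the threshold $q=\alpha^*\Delta$ — and where one must check that the linear-in-$\delta$ slack dominates the combined $O(p)$ losses from the local-uniformity constant $(1-p/2)$, the ``neighbours cooperate'' factor $(1-3p/q)^{\Delta}$, and the inflow term $p\deg(v_i)/q$; this is precisely the reason for taking $p$ as small as $\delta/30$.
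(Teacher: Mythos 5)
Your proposal is correct and follows essentially the same route as the paper's proof: split $[0,T_m]$ into a burn-in phase with a crude multiplicative growth bound and a contraction phase where path coupling plus the generalized form of~\eqref{eq-worst-case-coupling-v0}, the local uniformity guaranteed on $\mathcal{G}(t)$ (with the $\Delta^{2/3}$ correction), and the identity $\alpha^*=\e^{1/\alpha^*}$ yield a per-step factor $1-\Theta(\delta^2)$, iterated via the monotonicity and measurability of $\mathcal{G}(t)$. The only differences are cosmetic: you use the Corollary~\ref{corollary-coupling-bound-new-disagreement} bound $(1+2p)$ in the burn-in instead of the paper's $(1+p/\alpha^*)$, phrase the key slack via monotonicity of $x\mapsto x\e^{-1/x}$ rather than the explicit bound $\e^{-1/(\alpha^*+\delta)}-\tfrac{1}{\alpha^*+\delta}\ge\delta/5$, and leave the final numerical check against the fixed $T_m=\frac{1200}{\delta^2}\ln\frac{600}{\delta}$ implicit (it does go through with your constants, but since $T_m$ is prescribed it should be verified rather than tuned).
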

Lemma~\ref{lemma-coupling-1-1/3} follows by combining~\eqref{eq-bound-HTM-by-cases}, Claim~\ref{claim-bound-HTM-by-cases-12} and Claim~\ref{claim-bound-HTM-by-cases-3}.
\end{proof}

\begin{proof}[Proof of Claim~\ref{claim-bound-HTM-by-cases-12}]
At first, we prove that
\begin{align*}
\Pr[\bad] \leq \exp(-\sqrt{\Delta}).	
\end{align*}
By Corollary~\ref{corollary-coupling-disagreement-percolation}, we know that disagreements can not percolate outside the ball $B_{T_m}(v)$, which implies $\Pr[\bad_1] = 0$. The probability of bad event $\mathcal{B}_2$ can be bounded by~\eqref{eq-coupling-local-uniformity}. Thus, we have
\begin{align*}
\Pr[\bad] &= \Pr[\bad_1] + \Pr[\bad_2]\\
&= \Pr[\bad_2]\\
(\ast)\qquad&\leq \Delta^{T_m}\exp(-\Delta/ C')\\
\text{}\qquad&\leq \exp(-\sqrt{\Delta}),
\end{align*}
where inequality ($\ast$) is a union bound over all vertices $z \in B_{T_m}(v)$. The last inequality holds for sufficiently large $\Delta$ such that $\Delta \geq C'(T_m \ln \Delta + \sqrt{\Delta})$. Note the $C'$ and $T_m$ depends only on $\delta$. 

Next, we prove that
\begin{align*}
\E{H_{T_m}\one{\mathcal{E}}} \leq \exp(-\sqrt{\Delta}).	
\end{align*}
We will prove that for every $\ell \geq \Delta^{2/3}$, there exists $C'' = C''(\delta)>0$ such that
\begin{align}
\label{eq-up-bound-too-many-disagree}
\Pr[H_{\leq T_m} \geq \ell] \leq \exp(-C''\ell).	
\end{align}
Then, we bound the expected Hamming distance between $X_{T_m}$ and $Y_{T_m}$ as follows
\begin{align*}
\E{H_{T_m}\one{\mathcal{E}}} &\leq \E{H_{\leq T_m} \one{\mathcal{E}}}\\
\text{(By definition of $\mathcal{E}$)}\qquad&\leq  \sum_{\ell \geq \Delta^{2/3}}\ell \Pr[H_{\leq T_m} =\ell]\\
&= \Delta^{2/3}\Pr[H_{\leq T_m} \geq \ell] + \sum_{\ell \geq \Delta^{2/3}+1}\Pr[H_{\leq T_m} \geq \ell]\\
&\leq \Delta^{2/3}\sum_{\ell \geq \Delta^{2/3}}\Pr[H_{\leq T_m} \geq \ell]\\
(\text{By~\eqref{eq-up-bound-too-many-disagree} })\qquad &\leq \Delta^{2/3}\sum_{\ell \geq \Delta^{2/3}}\exp(-\ell C'')\\
&= \frac{\Delta^{2/3}\exp(-\Delta^{2/3}C')}{1 - \exp(-C'')}\\
\qquad&\leq \exp(-\sqrt{\Delta}).
\end{align*}
The last inequality holds for large $\Delta$ such that $C''\Delta^{2/3} \geq \frac{2}{3}\ln \Delta + \sqrt{\Delta} - \ln (1-\exp(-C''))$.

Now we prove inequality~\eqref{eq-up-bound-too-many-disagree}. Define a sequence $c_0, c_1, \ldots, c_{T_m}$ as follows
\begin{itemize}
\item $c_{T_m} = 1$;
\item For each $1 \leq t \leq T_m$,  $c_t = \left( 1 + 12p \right)c_{t - 1} = \left( 1 + \frac{2\delta}{5}\right)c_{t - 1}.$
\end{itemize}
For every $\ell \geq \Delta^{2/3}$, we bound the probability of the event $H_{\leq t} \geq c_t\ell$ for $0 \leq t \leq T_m$, where $T_m = \frac{1200}{\delta^2}\ln\frac{600}{\delta}$. Note that $H_{\leq 0} = 1$, if we take $\Delta > \left(1 + \frac{2\delta}{5}\right)^{3T_m/2}$, then
\begin{align}
\label{eq-bound-h0}
\Pr\left[  H_{\leq 0} \geq c_0 \ell\right] = 0.	
\end{align}
Then for each $1 \leq t \leq T_m$, by the law of total probability, we have
\begin{align*}
\Pr\left[  H_{\leq t} \geq c_t \ell\right] =& \Pr[H_{\leq t} \geq c_t \ell \mid H_{\leq t-1} \geq c_{t-1}\ell]\Pr[H_{\leq t-1} \geq c_{t-1}\ell]\\
&+\Pr[H_{\leq t} \geq c_t \ell \mid H_{\leq t-1} < c_{t-1}\ell]\Pr[H_{\leq t-1} < c_{t-1}\ell]\\
\leq& \Pr[H_{\leq t-1} \geq c_{t-1}\ell] + \Pr[H_{\leq t} \geq c_t \ell \mid H_{\leq t-1} < c_{t-1}\ell]
\end{align*}
Let $\mathcal{N}(D_t) = \vert (X_t \oplus Y_t ) \setminus (X_{t-1} \oplus Y_{t-1}) \vert$ be the number of new disagreements generated at step $t$, then it holds that
\begin{align*}
 \Pr[H_{\leq t} \geq c_t \ell \mid H_{\leq t-1} < c_{t-1}\ell] &\leq \Pr[\mathcal{N}(D_t) \geq (c_t - c_{t-1})\ell \mid H_{\leq t-1} < c_{t-1}\ell ]\\
(c_t = \left( 1 + 12p \right)c_{t - 1}) \qquad&= \Pr[\mathcal{N}(D_t) \geq 12p c_{t-1}\ell \mid H_{\leq t-1} < c_{t-1}\ell ]\\
&\leq \exp(-12p c_{t-1} \ell). 
\end{align*}
The last inequality is due to Corollary~\ref{corollary-coupling-bound-new-disagreement} (Note that $q \geq 1.7\Delta$ and $\vert X_{t-1} \oplus Y_{t-1} \vert \leq H_{\leq t-1}$). Thus
\begin{align}
\label{eq-bound-ht}
\Pr\left[  H_{\leq t} \geq c_t \ell\right] \leq \Pr[H_{\leq t-1} \geq c_{t-1}\ell] + \exp(-12p c_{t-1} \ell).
\end{align}
Combining~\eqref{eq-bound-h0}, ~\eqref{eq-bound-ht} and the definition of sequence $c$ (note that $c_{T_m} = 1$) implies 
\begin{align*}
\Pr\left[  H_{\leq T_m} \geq \ell\right] \leq \sum_{i = 1}^{T_m}	\exp(-12p c_{t-1} \ell) \leq T_m\exp(-12pc_0 \ell) = \exp\left(-12pc_0\ell + \ln T_m \right).
\end{align*}
Note that $\ell \geq \Delta^{2/3}$ and $c_0 = (1+12p)^{-T_m}$. If $\Delta \geq \left( \frac{\ln T_m}{11pc_0} \right)^{3/2}$ (note that $T_m, c_0, p$ depend only on $\delta$), then we have $-12pc_0\ell + \ln T_m \leq -pc_0\ell$, which implies
\begin{align*}
\Pr\left[  H_{\leq {T_m}} \geq \ell\right] \leq \exp(-p c_{0} \ell) = \exp\left(- \frac{p\ell}{(1+12p)^{T_m}} \right) = \exp(-\ell C'').
\end{align*}
Recall that $p = \frac{\delta}{30}$ and $T_m= \frac{1200}{\delta^2}\ln \frac{600}{\delta}$, thus $C'' = C''(\delta)$.
This proves inequality~\eqref{eq-up-bound-too-many-disagree}.
\end{proof}

\begin{proof}[Proof of Claim~\ref{claim-bound-HTM-by-cases-3}]
Condition on $X_{t}, Y_{t}$, we will bound the expected value of $H_{t+1}$ by path coupling. Suppose $X_t, Y_t$ differ at $h$ vertices $v_1,v_2,\ldots,v_h$. Then, according to the coupling, we construct a sequence of colorings $X=Z_0\sim Z_1 \sim \ldots \sim Z_h = Y$, such that each $Z_i$ and $Z_{i-1}$ differ only at vertex $v_i$. Consider the coupling $(Z_{i-1}, Z_i) \rightarrow (Z'_{i-1}, Z'_i)$, by Lemma~\ref{lemma-worst-case-coupling-x'v-neq-y'v}, we have
\begin{align*}
\E{\vert Z'_{i-1} \oplus Z'_i \vert \mid Z_{i-1}, Z_i} &\leq 1 - \frac{p(q - \Delta)}{q}\left(1 - \frac{3p}{q}\right)^\Delta + \frac{p\Delta}{q}\\	
\left(q > \alpha^*\Delta\right)\qquad &\leq 1 + \frac{p}{\alpha^*}
\end{align*}
Therefore, give $X_t,Y_t$, the expected value of $H_{t+1}$ can be bounded by triangle inequality as follows
\begin{align}
\label{eq-burn-in}
\E{H_{t+1} \mid H_t} \leq \left(1+ \frac{p}{\alpha^*} \right)H_t.
\end{align}
The inequality shows that the number of disagreements increases in each step. However, this bound will only be used during the burn-in phase $[0, T_b]$.

For each time $t \in [T_b, T_m]$, given $X_t, Y_t$,  assuming the good event $\mathcal{G}(t)$ occurs, we bound the the expected value of $H_{t+1}$ by path coupling. Suppose $X_t, Y_t$ differ at $h$ vertices $v_1,v_2,\ldots,v_h$. According to the coupling, we construct the path $X=Z_0\sim Z_1 \sim \ldots \sim Z_h = Y$. Since we assume that the good event $\mathcal{G}(t)$ occurs, then for each $0\leq i \leq h$, it holds that $\vert X \oplus Z_i \vert \leq \Delta^{2/3}$, $v_i \in B_{T_m}(v)$ and $\vert A(X, v_i)\vert \geq (1 - p / 2)q\e^{-\deg(v_i) / q}$. Thus we have
\begin{align*}
\vert A(Z_i, v_i) \vert \geq \vert A(X, v_i) \vert - \Delta^{2/3} \geq 	 (1 - p / 2)q\e^{-\deg(v_i) / q} - \Delta^{2/3} \geq (1 - p / 2)q\e^{-\Delta / q} - \Delta^{2/3}.
\end{align*}
Together with inequalities~\eqref{eq-worst-case-coupling-u} and~\eqref{eq-worst-case-coupling-w}, we have 
\begin{align*}
\E{\vert Z'_{i-1} \oplus Z'_i \vert \mid Z_{i-1}, Z_i} &\leq 1 - \frac{p\vert A(Z_i, v_i) \vert}{q}\left(1 - \frac{3p}{q}\right)^\Delta + \frac{p\Delta}{q}\\	
&\leq 1 - p\left((1-p/2)\e^{-\Delta/q}- \frac{1}{\alpha^*\Delta^{1/3}}\right)\left(1 - \frac{3p}{\alpha^*}\right) + \frac{p\Delta}{q},
\end{align*}
where the last inequality is because $q > \alpha^*\Delta$ and $\left(1 - \frac{3p}{q}\right)^\Delta \geq 1 - \frac{3p}{\alpha^*}$ due to Bernoulli's inequality. 
Note that, if we take $\Delta \geq \left( \frac{2\mathrm{e}^{1/\alpha^*}}{p\alpha^*} \right)^3 \geq \left( \frac{2\mathrm{e}^{\Delta/q}}{p\alpha^*} \right)^3$, then  $\frac{1}{\alpha^*\Delta^{1/3}} \leq \frac{p}{2}\mathrm{e}^{-\Delta/q}$.
It holds that 
\begin{align*}
\E{\vert Z'_{i-1} \oplus Z'_i \vert \mid Z_{i-1}, Z_i} &\leq 1 - p(1-p)\e^{-\Delta/q}\left(1 - \frac{3p}{\alpha^*}\right) + \frac{p\Delta}{q}\\
&\leq 1 - p \left( (1-3p)\e^{-1 / (\alpha^*+\delta)} - \frac{1}{\alpha^* + \delta} \right)\\
&=1-p\left(\left( \mathrm{e}^{-1/(\alpha^* + \delta)} - \frac{1}{\alpha^*+\delta} \right)-3p\mathrm{e}^{-\frac{1}{\alpha^*+\delta}}\right)\\
&\le 1-p\left(\frac{\delta}{5}-3p\right),
\end{align*}
where the last inequality is because for $0 < \delta < 0.3$, $\mathrm{e}^{-1/(\alpha^* + \delta)} - \frac{1}{\alpha^*+\delta} \geq \frac{\delta}{5}$ and $\mathrm{e}^{-\frac{1}{\alpha^*+\delta}}\le 1$.

For $p=\frac{\delta}{30}$,
it holds that
\begin{align*}
\E{\vert Z'_{i-1} \oplus Z'_i \vert \mid Z_{i-1}, Z_i} \leq 1 - \frac{\delta^2}{300}	.
\end{align*}
Hence, for each $t \in [T_b, T_m-1]$, given $X_t, Y_t$, assuming the good event $\mathcal{G}(t)$ holds, we have
\begin{align}
\label{eq-contraction-in-good-case}
\E{H_{t+1} \mid X_t,Y_t } \leq \left( 1 - \frac{\delta^2}{300}	 \right) H_t.	
\end{align}
For each $t \in [T_b, T_m - 1]$, it holds that
\begin{align*}
\E{H_{t+1}\one{\mathcal{G}(t)}}	&= \E{\E{H_{t+1}\one{\mathcal{G}(t)}\mid X_0, Y_0, \ldots, X_t, Y_t}}\notag\\
(\ast)\qquad&\leq \E{\E{H_{t+1}\mid X_0, Y_0, \ldots, X_t, Y_t}\one{\mathcal{G}(t)}}\notag\\
\text{(By~\eqref{eq-contraction-in-good-case})}\qquad&\leq \left( 1 - \frac{\delta^2}{300}	 \right)\E{H_t \one{\mathcal{G}(t)}}\notag\\
(\ast\ast)\qquad&\leq  \left( 1 - \frac{\delta^2}{300}	 \right)\E{H_t \one{\mathcal{G}(t-1)}}.
\end{align*}
Inequality $(\ast)$ is because the event $\mathcal{G}(t)$ is determined by $X_0,Y_0,\ldots,X_t, Y_t$. Inequality $(\ast\ast)$ is because the event $\mathcal{G}(t)$ implies the event $\mathcal{G}(t-1)$. By induction, it holds that
\begin{align*}
\E{H_{T_m}\one{\mathcal{G}}} \leq \E{H_{T_m}\one{\mathcal{G}(T_m - 1)}}\leq \left( 1 - \frac{\delta^2}{300} \right)^{T_m - T_b} \E{H_{T_b}\one{\mathcal{G}(T_b - 1)}}.
\end{align*}
Note that $\E{H_{T_b}\one{\mathcal{G}(T_b - 1)}} \leq \E{H_{T_b}}$, and apply~\eqref{eq-burn-in} for $t \in [0, T_b - 1]$, we have
\begin{align}
\E{H_{T_m}\one{\mathcal{G}}} &\leq \left( 1 - \frac{\delta^2}{300}\right)^{T_m - T_b} \left(1+ \frac{p}{\alpha^*}\right)^{T_b}H_0.
\end{align}
Note that $1 < \delta < 0.3$, $p = \frac{\delta}{30}$. It holds that $T_b = \frac{1}{p}\left(\frac{2.7}{1.7}\right)^2\ln\frac{20}{p} \leq \frac{120}{\delta}\ln\frac{600}{\delta}$. Since $T_m = \frac{1200}{\delta^2}\ln\frac{600}{\delta}$, then we have $T_m - T_b \geq \frac{900}{\delta^2}\ln\frac{600}{\delta}$. Note that $H_0 = 1$. We have
\begin{align*}
\E{H_{T_m}\one{\mathcal{G}}}  &\leq 	\left( 1 - \frac{\delta^2}{300} \right)^{\frac{900}{\delta^2}\ln\frac{600}{\delta}} \left(  1+ \frac{p}{\alpha^*}\right )^{\frac{4}{p}\ln \frac{600}{\delta}}\\
&\leq \left( \frac{\delta}{600} \right)^{3 - 4/\alpha^*}\\
&\leq \frac{1}{9}.
\end{align*}
\end{proof}


\end{document}